\documentclass[review,onethmnum]{siamart220329}

\usepackage[T1]{fontenc}

\usepackage{amsmath,amssymb}
\usepackage{bbm}
\usepackage{mathtools}

\usepackage{graphicx}
\usepackage[table]{xcolor}
\usepackage{tikz}
\usetikzlibrary{shapes.geometric, fit, positioning}
\usetikzlibrary{fit, backgrounds}

\usepackage{array} 
\newcolumntype{C}{>{\centering\arraybackslash}p{1.2em}}
\usepackage{booktabs}
\usepackage{makecell}
\usepackage{nicematrix}
\definecolor{myblue}{HTML}{B4C6E7}

\newcommand{\clg}[1]{\cellcolor{lightgray}#1}

\usepackage{enumitem}
\usepackage{subcaption}

\makeatletter
\@ifundefined{algorithm}{}{ }
\@ifundefined{listofalgorithms}{}{}
\makeatother
\usepackage[ruled,vlined]{algorithm2e}

\usepackage[hidelinks]{hyperref}
\usepackage{cleveref}
\usepackage{url}

\usepackage{comment}
\excludecomment{outline}
\usepackage{xspace}
\usepackage{dirtytalk}
\usepackage{cancel}
\usepackage{orcidlink}

\makeatletter
\@ifundefined{abstract}{%
  \newenvironment{abstract}{\section*{Abstract}\noindent}{\par}%
}{}
\@ifundefined{remark}{\newtheorem{remark}{Remark}}{}
\makeatother

\crefname{lemma}{Lemma}{Lemmas}
\crefname{figure}{Figure}{Figures}
\crefname{problem}{Problem}{Problems}
\crefname{proposition}{Proposition}{Propositions}
\crefname{theorem}{Theorem}{Theorems}
\crefname{remark}{Remark}{Remarks}
\crefname{algorithm}{Algorithm}{Algorithms}
\crefname{example}{Example}{Examples}
\crefname{claim}{Claim}{Claim}
\crefname{table}{Table}{Tables}
\crefname{corollary}{Corollary}{Corollaries}
\crefname{definition}{Definition}{Definitions}
\crefname{equation}{Equation}{Equations}
\crefname{section}{Section}{Sections}

\renewcommand{\thesection}{\arabic{section}}
\renewcommand{\thesubsection}{\thesection.\arabic{subsection}}
\renewcommand{\thesubsubsection}{\thesubsection.\arabic{subsubsection}}

\normalsize
\renewcommand{\arraystretch}{1.5}

\renewcommand{\theenumii}{\theenumi.\arabic{enumii}.}

\newcommand{\ceil}[1]{\left\lceil {#1} \right\rceil}
\newcommand{\floor}[1]{\left\lfloor {#1} \right\rfloor}

\DeclareMathOperator{\Vol}{Vol}
\DeclareMathOperator{\supp}{supp}

\let\emptyset\varnothing

\newcommand{\E}{\mathbb{E}}

\providecommand{\headers}[2]{}%
\headers{The Minimum Number of Samples in the Labeled Coupon Collector Problem}{A. Tan, D.\ Bar-Lev, R.\ Gabrys, Z.\ Yakhini, and P.\ H.\ Siegel}

\begin{document}

\title{The Minimum Number of Samples in the Labeled Coupon Collector Problem}

\author{Andrew Tan\and Daniella Bar-Lev\and Ryan Gabrys\and Zohar Yakhini\and Paul H. Siegel%
\thanks{The work of Z. Yakhini was funded by the European Union (DiDAX, 101115134). Views and opinions expressed are however those of the author(s) only and do not necessarily reflect those of the European Union or the European Research Council Executive Agency. Neither the European Union nor the granting authority can be held responsible for them. The work of D. Bar-Lev was supported in part by Schmidt Sciences. The work of D. Bar-Lev, R. Gabrys, A. Tan, and P. H. Siegel was supported in part by NSF Grant CCF-2212437. (Andrew Tan and Daniella Bar-Lev contributed equally to this work.)}\textsuperscript{,}%
\thanks{Part of this work was presented and published at the IEEE Information Theory Workshop (ITW), Sydney, Australia, 2025~\cite{Tan_ITW_2025}.}}

\maketitle

\begin{outline}
    \section{Abstract}
\end{outline}
\begin{abstract}
The Labeled Coupon Collector Problem (LCCP) is an extension of the classical Coupon Collector Problem (CCP) in which one must deduce an unknown mapping between coupons and labels by collecting unordered subsets of $k$ coupons and their associated labels, where deduction relies on analyzing the intersections and differences of the unordered subsets.
In this work, we establish a fundamental connection between the LCCP and the combinatorial theory of separating systems. Using this framework, we completely characterize the minimum required number of samples to deduce all labels in two settings: Type-I, where the label set is known in advance, and Type-II, where it is unknown. We provide deterministic constructions of minimal separating and covering systems, yielding explicit bounds that are tighter than those found in existing literature. Furthermore, we extend our analysis to the partial recovery problem, characterizing the minimum number of samples required to identify a specified number of target coupons.
\end{abstract}

\section{Introduction}
\label{sec:introduction}
The Coupon Collector Problem (CCP) \cite{Feller},\cite{Flajolet} is a classical combinatorial problem dating back to De Moivre, Euler, and Laplace, which can be described as follows. Consider a promotional contest in which a prize is won by collecting at least one of each type of coupon distributed in packages of a certain product. Assuming the number of coupon types is a finite number $n$ and that each package contains exactly one coupon, opening a package can be viewed as sampling one coupon independently and uniformly at random with replacement from the set of possible coupon types. The CCP investigates the number of samples, $N$,  needed to collect all $n$ coupon types. Specific problems of interest include finding the minimum, expectation, and other properties of the distribution of $N$. The minimum is trivially seen to be $n$. The sample counts between collecting distinct coupon types are independent geometric random variables; thus,  the expectation is found to be $\E[N]=nH_n$, where $H_n =  \sum_{j=1}^n \frac{1}{j}$ denotes the $n$-th harmonic number~\cite{Feller}. Formulas for higher moments of $N$ can be computed~\cite{LeonGarcia}, and limit theorems for the asymptotic distribution of $N$ have also been studied~\cite{ErdosRenyi}, \cite{BaumBillingsley}. 

A notable extension of the CCP is the group drawing scenario, corresponding to $k$ distinct coupons per package, or $k$ distinct coupons drawn per sample with replacement, for some fixed constant $k < n$. 
Properties of the number of samples, $N(k)$, needed to collect all coupon types and related quantities have been studied in,  e.g.,~\cite{BT22, JS10, Stadje, AR01, BerendSher}. The minimum number of samples  needed to collect all $n$ coupon types is trivially $\lceil \frac{n}{k}\rceil$. 

Several other variants and generalizations of the CCP have been proposed, and applications are found in a wide range of mathematical, scientific, and engineering disciplines. An extensive list of references of a fundamental and applied nature can be found in~\cite{BonehHofri}.  

In this paper, we introduce a new generalization of the CCP called the \textit{Labeled Coupon Collector Problem} (LCCP) that combines aspects of both collection and inference. Again using a  contest analogy, each of the $n$ coupon types is uniquely associated with a distinct tag or \textit{label}, and each package contains a subset of $k<n$ distinct coupons and their associated labels. More formally, 
we view each package as a sample containing two \textbf{unordered} sets: the $k$ distinct coupons drawn uniformly at random with replacement, and their $k$ associated labels.
If $k>1$, the association between coupon types and labels cannot be immediately deduced. To win the contest, one must correctly identify the label associated with each coupon type by combining the information gained from multiple packages.
We are interested in the number of samples, $Q(n,k)$, needed to establish the correspondence. In particular, the main objective of this paper is the investigation of the minimum possible value of $Q(n,k)$. Unlike the previously mentioned variants, analyzing the minimum in the current context is not trivial. 

We consider two settings: Type-I, where the label set is known in advance, and Type-II, where it is unknown, with corresponding number of samples $Q_\mathrm{I}(n,k)$ and $Q_\mathrm{II}(n,k)$.
To illustrate how labels are deduced in both settings, consider the following example, which is depicted graphically in Figure~\ref{fig:deduction}.
\begin{example}
    \label{ex1}
    Suppose there are $n = 4$ coupons $\{1,2,3,4\}$ with corresponding labels $\{A,B,C,D\}$, and we draw $k = 2$ coupons per sample. Suppose the first two samples are given by:
    \begin{enumerate}
        \item coupons $\{1,2\}$ and labels $\{B,A\}$, and
        \item coupons $\{2,3\}$ and labels $\{B,C\}$. 
    \end{enumerate}
    \noindent
    Since coupon $2$ appears in both samples, we deduce the correspondences $1 \leftrightarrow A, 2 \leftrightarrow B, 3 \leftrightarrow C$. In the Type-I setting, we further deduce via elimination that $4\leftrightarrow D$. 
     However, the correspondence $4\leftrightarrow D$ cannot be deduced from these two samples in the Type-II setting because the label $D$ has not been drawn and the set of labels $\{A,B,C,D\}$ is not known in advance. A third sample is required, such as: 
     \begin{enumerate}
        \item[3)] coupons $\{4,1\}$ and labels $\{A,D\}$.
    \end{enumerate}
\end{example}

\begin{figure}[ht!]
    \centering
    \begin{tikzpicture}[
        >=stealth,
        coupon/.style={
            circle, draw, minimum size=8mm, thick, fill=black!5
        },
        labelnode/.style={
            diamond, draw, minimum size=9mm, thick,
            fill=black!10, inner sep=1pt
        },
        samplebox/.style={
            rectangle, draw, dashed, thick,
            rounded corners=15pt, inner sep=12pt, fill=black!2
        },
        deduction/.style={<->, thick}
    ]

    \node[coupon] (s1_c1) at (0, 4.5) {1};
    \node[coupon] (s1_c2) at (0, 3.0) {2};
    \node[labelnode] (s1_lB) at (1.5, 4.5) {B};
    \node[labelnode] (s1_lA) at (1.5, 3.0) {A};
    \scoped[on background layer]
        \node[samplebox,
              fit=(s1_c1) (s1_c2) (s1_lB) (s1_lA)] (box1) {};
    \node[left=8pt of box1, font=\bfseries, align=center]
        {First\\sample};

    \node[coupon] (s2_c2) at (0, 1.0) {2};
    \node[coupon] (s2_c3) at (0, -0.5) {3};
    \node[labelnode] (s2_lB) at (1.5, 1.0) {B};
    \node[labelnode] (s2_lC) at (1.5, -0.5) {C};
    \scoped[on background layer]
        \node[samplebox,
              fit=(s2_c2) (s2_c3) (s2_lB) (s2_lC)] (box2) {};
    \node[left=8pt of box2, font=\bfseries, align=center]
        {Second\\sample};

    \node[coupon] (s3_c4) at (0, -2.5) {4};
    \node[coupon] (s3_c1) at (0, -4.0) {1};
    \node[labelnode] (s3_lA) at (1.5, -2.5) {A};
    \node[labelnode] (s3_lD) at (1.5, -4.0) {D};
    \scoped[on background layer]
        \node[samplebox,
              fit=(s3_c4) (s3_c1) (s3_lA) (s3_lD)] (box3) {};
    \node[left=8pt of box3, font=\bfseries, align=center]
        {Third\\sample};

    \node[font=\bfseries] at (6.25, 5.6) {Type I};
    \node[font=\bfseries] at (9.75, 5.6) {Type II};

    \draw[line width=1.5pt]
        (2.8, 5.5) -- (2.8, -1.45);
    \draw[->, line width=1.5pt]
        (2.8, 2)
        -- node[above, font=\bfseries] {Deduction 1} (5.0, 2);

    \foreach \couponnum/\labeltext/\ypos in
        {1/A/4.5, 2/B/2.8, 3/C/1.2, 4/D/-0.5} {
        \node[coupon] (t1_c\couponnum)
            at (5.5, \ypos) {\couponnum};
        \node[labelnode] (t1_l\couponnum)
            at (7, \ypos) {\labeltext};
        \draw[deduction]
            (t1_c\couponnum) -- (t1_l\couponnum);
    }

    \foreach \couponnum/\labeltext/\ypos in
        {1/A/4.5, 2/B/2.8, 3/C/1.2, 4/?/-0.5} {
        \node[coupon] (t2_c\couponnum)
            at (9, \ypos) {\couponnum};
        \node[labelnode] (t2_l\couponnum)
            at (10.5, \ypos) {\labeltext};
        \draw[deduction]
            (t2_c\couponnum) -- (t2_l\couponnum);
    }

    \draw[line width=1.5pt]
        (2.8, -1.55) -- (2.8, -5);
    \draw[->, line width=1.5pt]
        (2.8, -3.25)
        -- node[above, font=\bfseries] {Deduction 2} (5.0, -3.25)
        -- (8.5, -3.25);

    \node[coupon] (t2_final_c4) at (9, -3.25) {4};
    \node[labelnode] (t2_final_lD) at (10.5, -3.25) {D};
    \draw[deduction] (t2_final_c4) -- (t2_final_lD);

    \end{tikzpicture}
    \caption[Deduction of labels as described by Example]
        {Deduction of labels as described by
         Example~\protect\ref{ex1}. For detailed explanation, see Example~\protect\ref{ex1} text.}
    \label{fig:deduction}
\end{figure}

As mentioned, in contrast to the CCP and the group drawing CCP, finding the minimum number of samples, $a_\mathrm{I}(n,k)$ and $a_\mathrm{II}(n,k)$, needed to deduce the labels associated with the $n$ coupon types in the Type-I and Type-II settings of the LCCP, respectively,  is not trivial for $k\geq 2$. This is the focus of this work.\footnote{We do not consider the expectation here. In~\cite{Tan_ITW_2025}, we conjectured that 
$\E[Q_\mathrm{II}(n,2)] - \E[Q_\mathrm{I}(n,2)] = \frac{1}{2}n + o(n)$ and $\E[Q_\mathrm{II}(n,2)] = \frac{1}{2}nH_n +o(n)$. These conjectures were confirmed in~\cite{BarakBerend}.}
We demonstrate that this task is mathematically equivalent to finding minimal separating systems, a concept previously studied by R\'enyi~\cite{Renyi61,Renyi62} and Katona~\cite{Katona}. Building upon their foundational work, we provide a full characterization of $a_\mathrm{I}(n,k)$ and $a_\mathrm{II}(n,k)$, develop explicit combinatorial constructions for sequences of samples that achieve the minimum, and establish tight bounds that improve upon the corresponding classical results for separating systems~\cite{Katona,Wegener}. We also extend our framework to the \textit{partial recovery setting}, where the labeling of only a  subset of $r$ coupons needs to be identified.

In addition to the mathematical allure of these questions, they can also be relevant in practical applications. For example, consider the ``traffic analysis problem'', in which the objective is to keep the identities of pairs of senders and recipients in a communication network confidential.  First defined in~\cite{Chaum1991}, a Chaumian threshold mix is an anonymity protocol designed to unlink sender-recipient pairs by collecting messages and only forwarding them once a predetermined number (the threshold) has been received. It does so through a mix node, which collects incoming packets and flushes them in batches of size $k$ to their receivers, intentionally stripping explicit routing headers \cite{Berthold2000, vanDenHooff2015}. A passive adversary effectively has access only to the unordered sets of senders and recipients in the mix node. 
Such an adversary could theoretically reconstruct the sender-recipient pairs by analyzing multiple batch flushes over time in an intersection attack \cite{Mathewson2004} similar to what is done in Example~\ref{ex1}. 
The LCCP serves as an abstraction of this problem that could help to analyze the effectiveness of this protocol. 
It should be noted that real mixes face users who send at different rates, one-to-many communication patterns, and dummy/cover traffic. Hence, the LCCP model offers a fundamental-limits idealization of the threshold-mix base case, rather than a direct model of any deployed system.
Nevertheless, as also discussed in Section V, some of these scenarios can be addressed by generalizations of the LCCP model. 


The remainder of this paper is organized as follows. Section II introduces the formal definition of the LCCP and rigorously establishes its connection to Katona's separating systems. Section III completely characterizes the minimum number of samples for full recovery and provides new bounds on the minimum that improve upon  those obtained from previous bounds on separating systems. Section IV extends these deterministic constructions to the partial recovery problem. Section V summarizes the results and points to open directions.  
%
\section{Preliminaries}

In this section, we define the LCCP and establish the connection to separating systems in \cite{Katona,Renyi61,Renyi62}.

\subsection{Problem Statement and Notations}

Let $[n]:= \{1,2,...,n\}$ be a set of coupons, $L$ be an alphabet of labels with $|L| = n$, and $f: [n] \rightarrow L$ be a bijective \textit{labeling function} assigning each coupon $j\in [n]$ a unique label $f(j) \in L$. For a fixed sample size $k \in [n-1]$, let $S_i\subseteq[n]$ be the $i$-th sampled subset of coupons, with labels given by $f(S_i) = \{f(j): j\in S_i\}$, where $|S_i| = k$. The sampled subsets $S_i$ and their corresponding labels $f(S_i)$ are given as unordered sets for all $i \geq 1$.

We define the random variable $Q(n,k)$ as the \textit{first} sample in which $f$ can be determined, i.e., each coupon's label can be deduced via set operations on the system of sampled pairs $$\{(S_i, f(S_i)) \}_{i=1}^{Q(n,k)}  = \{(S_1,f(S_1)), \dots, (S_{Q(n,k)},f(S_{Q(n,k)}))\}.$$

We consider this problem in two settings: one where the label alphabet $L$ is known at the start, and one where $L$ is unknown; we refer to the corresponding LCCP problems as \textit{Type-I} and \textit{Type-II} respectively. Now, we can define the LCCP.
\vspace{5pt}

\noindent
{\bf Problem [$(n,k)$-LCCP].}
For a fixed positive integers $n,k\in \mathbb{Z}^+$ such that $n > k \geq 1$, find the:
\begin{enumerate}
    \item[(1)] \textit{Minimum} of $Q(n,k)$, namely, the smallest possible value that $Q(n,k)$ can take on. This is denoted $$a_\mathrm{I}(n,k) =   \min \{ q: q\in \supp Q_\text{I}(n,k) \},$$ $$a_\mathrm{II}(n,k) =   \min \{ q: q\in \supp Q_\text{II}(n,k) \}$$ for Type-I and Type-II, respectively.
    
    \item[(2)] \textit{Expectation} of $Q(n,k)$

    \item[(3)] \textit{Variance} of $Q(n,k)$

    \item[(4)] \textit{Distribution} of $Q(n,k)$

\end{enumerate}

\vspace{5pt}

In this work, we address item (1), the (deterministic) minimum number of samples.

\subsection{Connection to Separating Systems}

Now, we give exact conditions on when a coupon's label can be determined. We first define the concept of separability in a finite set.

\begin{definition}
\label{def:sep}
    Given $n,m \in \mathbb{Z}^+$ and subsets $\{S_i\}_{i=1}^m \subseteq [n]$, we say that:
    \begin{itemize}
        \item $S_i$ \textit{separates} two distinct coupons $j, j' \in [n]$ if exactly one of them is contained in $S_i$. 
        \item $j$ is \textit{separable} given $\{S_i\}_{i=1}^m$ if for every $j' \neq j$, at least one $S_i$ separates $j$ and $j'$. Moreover, if every coupon is separable given $\{S_i\}_{i=1}^m$, then $\{S_i\}_{i=1}^m$ is a \textit{separating system} for $[n]$.
        \item $\{S_i\}_{i=1}^m$ is a \textit{covering system} (or \textit{covers} $[n]$) if every coupon $j \in [n]$ is contained in at least one sample $S_i$ for some $i \in [m]$. 
    \end{itemize}
\end{definition}

\noindent Note that the definition of a separating system is identical to the one in \cite{Katona}. We also define the concept of an \textit{index set}.

\begin{definition}
\label{def:index}
    Given coupon $j\in [n]$ and a system of $m$ subsets $\{S_i\}_{i=1}^m$, we define:
    \begin{itemize}
        \item the \textit{index set} of $j$ as the set $I(j) = \{ i\in [m]: j\in S_i \}$, 
        \item the preorder $\preceq$ and equivalence relation $\simeq$ on $[n]$ such that $j_1 \preceq j_2$ if $I(j_1) \subseteq I(j_2)$ and $j_1 \simeq j_2$ if $I(j_1) = I(j_2)$, and
        \item the \textit{separation class} of $j$ as the set $c(j)$ obtained by the following intersection of sets:
        \begin{equation}
      \label{candidates}
     c(j) = \left( \bigcap\limits_{i \in I(j)} S_i\right) \cap \left( \bigcap\limits_{i \in I(j)^{\mathrm{c}}} S_i^\mathrm{c}\right), 
    \end{equation}
    where $S_i^{\mathrm{c}}$ is the set complement of $S_i$ in $[n]$, and where we adopt the convention that the intersection over an empty index set yields the universal set $[n]$.
    \end{itemize}
\end{definition}

It is clear that $\preceq$ is a preorder (reflexive and transitive) and that $\simeq$ is an equivalence relation since they inherit these properties from set inclusion and set equality, respectively. Note that $\preceq$ becomes a partial order if and only if $\{S_i\}_{i=1}^m$ is a separating system, as antisymmetry is satisfied only when all index sets are distinct.

We have the following properties that establish an equivalent characterization of separability.

\begin{lemma}
    \label{lemma_separable}
    Given $n  \in \mathbb{Z}^+$ and a system of subsets $\{S_i\}_{i=1}^m$:
    \begin{enumerate}
        \item [(a)] $\{S_i\}_{i=1}^m$ separates $j_1$ and $j_2$ if and only if $j_1 \not\simeq j_2$.
        \item [(b)] $c(j)$ is the smallest set containing $j$ that can be generated by set operations on $\{S_i\}_{i=1}^m$.
        \item [(c)] $j_1 \simeq j_2$ if and only if $c(j_1) = c(j_2)$. Consequently, separation classes are the equivalence classes with respect to $\simeq$.
        \item [(d)] $c(j) = \{j\}$ if and only if $\{j\}$ is separable. Consequently, this holds for all $j \in [n]$ if and only if $\{S_i\}_{i=1}^m$ is a separating system.
    \end{enumerate}
\end{lemma}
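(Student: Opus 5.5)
The plan is to first rewrite $c(j)$ pointwise and then read off all four parts from that description. By definition, $j' \in c(j)$ if and only if $j' \in S_i$ for every $i \in I(j)$ and $j' \notin S_i$ for every $i \in I(j)^{\mathrm{c}}$. The first condition says $I(j) \subseteq I(j')$. The second says $I(j') \subseteq I(j)$. Hence
\[
c(j) = \{ j' \in [n] : I(j') = I(j) \} = \{ j' : j' \simeq j\}.
\]
This identity is the backbone of the argument.

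For (a), we note that $S_i$ separates $j_1$ and $j_2$ exactly when $i$ lies in the symmetric difference $I(j_1) \,\triangle\, I(j_2)$. Some $S_i$ separates them if and only if this symmetric difference is nonempty, which means $I(j_1) \neq I(j_2)$, i.e.\ $j_1 \not\simeq j_2$.

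For (c), the identity shows that $c(j)$ is the $\simeq$-equivalence class of $j$. Two equivalence classes coincide if and only if their representatives are equivalent, which gives both statements of (c).

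For (d), $c(j) = \{j\}$ holds if and only if no $j' \neq j$ satisfies $j' \simeq j$. By (a), this says every $j' \neq j$ is separated from $j$, which is the definition of $j$ being separable. Taking this over all $j$ gives the separating-system statement.

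Part (b) is the only one requiring care, since we must make precise ``generated by set operations'': the sets obtainable from $S_1,\dots,S_m$ (and $[n]$) under complement, union and intersection. Membership of $c(j)$ in this family is immediate from its defining formula \eqref{candidates}. For minimality, I would show that every generated set $T$ is a union of separation classes, i.e.\ it is \emph{saturated}: $j \in T$ and $j' \simeq j$ imply $j' \in T$. The proof is by induction on the construction of $T$:
\begin{itemize}
\item each $S_i$ is saturated, since membership in $S_i$ depends only on whether $i \in I(\cdot)$;
\item $[n]$ is saturated;
\item saturation is preserved under complement, union and intersection.
\end{itemize}
Thus any generated $T$ containing $j$ contains the whole class of $j$, which is $c(j)$. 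So $c(j)$ is the smallest such set.

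The main obstacle is only formalizing the closure class in (b); the structural-induction argument above handles it cleanly.
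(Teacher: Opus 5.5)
Your proof is correct and follows essentially the same route as the paper: (a) via index sets, (b) by showing every generated set is a union of $\simeq$-classes (the paper phrases this as a disjoint union of atoms $\bigcap_{i} A_i$ with $A_i \in \{S_i, S_i^{\mathrm{c}}\}$, which is exactly your saturation property), and (c)--(d) from $c(j)$ being the $\simeq$-class of $j$. Stating the identity $c(j)=\{j' : I(j')=I(j)\}$ up front and proving saturation by structural induction makes explicit two steps the paper only asserts. It also avoids the paper's small slip in (c), where $j_1 \in c(j_2)$ actually yields $I(j_2) \subseteq I(j_1)$ rather than the inclusion written there.
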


\begin{proof}
    We prove each part directly.
    \begin{enumerate}
        \item[(a)] $\{S_i\}_{i=1}^m$ separates $j_1$ and $j_2$ if and only if there is an $S_i$ that contains either $j_1$ or $j_2$ but not both. Equivalently, $i$ is contained in either $I(j_1)$ or $I(j_2)$ but not both, which holds if and only if $I(j_1) \neq I(j_2)$, i.e., $j_1 \not\simeq j_2$.
        \item [(b)] By definition, if $j\notin S_i$, then $j\in S_i^\mathrm{c}$ so $j\in c(j)$. Observe that any set generated via set operations on $\{S_i\}_{i=1}^m$ can be written as a disjoint union of sets  in $\mathcal{A} = \{\bigcap_{i=1}^m A_i : A_i \in \{S_i, S_i^\mathrm{c} \} \}$. Because $c(j) \in \mathcal{A}$ is a set containing $j$, any set that contains $j$ must also contain $c(j)$. Hence, $c(j)$ is the smallest set containing $j$ that can be generated via set operations.
        \item [(c)] If $j_1 \simeq j_2$, then $I(j_1) = I(j_2)$ and $I(j_1)^\mathrm{c} =I(j_2)^\mathrm{c} $, which means that $c(j_1) = c(j_2)$. Conversely, if $c(j_1) = c(j_2)$, then from part (b), $j_1 \in c(j_2)$, which means that $j_1 \in S_i$ whenever $i \in I(j_2)$, and so $I(j_1) \subseteq I(j_2)$. Repeating this argument for $j_2 \in c(j_1)$ will give us $I(j_1) = I(j_2)$ and $j_1 \simeq j_2$.
        \item [(d)] Since $j \in c(j)$, then by part (c), $c(j) = \{j\}$ if and only if $j \not\simeq j'$ for all $j' \in [n] \setminus \{j\}$, and by part (a), this holds if and only if $\{S_i\}_{i=1}^m$ separates $j$ and $j'$ for all $j' \neq j$, or equivalently that $j$ is separable.
    \end{enumerate}
\end{proof}

For coupon $j \in [n]$, its separation class $c(j)$ can be viewed as the set of coupons that cannot be separated from $j$ given $\{S_i\}_{i=1}^m$. Thus, \cref{candidates} establishes a canonical representation for each singleton if $\{S_i\}_{i=1}^m$ is a separating system, allowing us to establish \cref{thm_deducing}.

\begin{theorem}
\label{thm_deducing} 
    For any $n \in \mathbb{Z}^+$, $\{S_i\}_{i=1}^m \subseteq [n]$, the label of every coupon can be determined given $\{(S_i, f(S_i))\}_{i=1}^m$ if and only if $\{S_i\}_{i=1}^m$ is a separating system (additionally a covering system in the Type-II setting).
\end{theorem}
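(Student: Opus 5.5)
The plan is to prove the two directions separately. Sufficiency will be a direct construction of $f$ from the observed data using the separation classes of \cref{candidates}. Necessity will be shown by exhibiting a second bijection $f'\neq f$ that is consistent with every observed pair, so that $f$ cannot be deduced.

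\emph{Sufficiency.} Assume $\{S_i\}_{i=1}^m$ is a separating system. Mirror \cref{candidates} on the label side: for $j\in[n]$, set
\[
\ell(j)=\Big(\bigcap_{i\in I(j)} f(S_i)\Big)\cap\Big(\bigcap_{i\in I(j)^{\mathrm c}} f(S_i)^{\mathrm c}\Big),
\]
where complements are taken in $L$. Since $f$ is a bijection, it commutes with intersections and complements, so $\ell(j)=f(c(j))$. By \cref{lemma_separable}(d), $c(j)=\{j\}$, hence $\ell(j)=\{f(j)\}$. This determines every label from observable data in the Type-I setting, where $L$ is known.

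In Type-II, complements in $L$ are unavailable. Take complements instead within $U=\bigcup_i f(S_i)$, the set of observed labels. If the system also covers $[n]$, then $U=f([n])=L$, and the same argument applies. If $I(j)=\emptyset$ for some $j$, the covering assumption is exactly what excludes this case. Alternatively, one can note that for $j$ with $I(j)\neq\emptyset$ the first intersection already lies inside $U$, so complements relative to $U$ give the same result.

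\emph{Necessity.} Suppose the system is not separating. By \cref{lemma_separable}(a) there are $j_1\neq j_2$ with $I(j_1)=I(j_2)$. Define $f'=f\circ\tau$, where $\tau$ is the transposition of $j_1$ and $j_2$. Every $S_i$ contains both of $j_1,j_2$ or neither, so $\tau(S_i)=S_i$ and $f'(S_i)=f(S_i)$ for all $i$. Thus $f$ and $f'$ produce identical observations, which means $f$ is not determined.

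For Type-II, suppose the system is separating but not covering, with $j_0\notin\bigcup_i S_i$. Then $f(j_0)$ never appears among the observed labels, and $L$ is unknown. Let $L'=L\setminus\{f(j_0)\}\cup\{x\}$ for a fresh symbol $x$, and let $f'$ agree with $f$ except that $f'(j_0)=x$. This $f'$ is consistent with all observations but differs from $f$. (For $n\ge 2$ a separating system leaves at most one coupon uncovered, but that fact is not needed here.)

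The main subtlety is making ``deduced via set operations'' precise. Sufficiency should be phrased as exhibiting an explicit formula. Necessity should be phrased via indistinguishability: deduction means every labeling consistent with the observations agrees with $f$, and the alternative $f'$ violates this. \Cref{lemma_separable}(b) is the fact that the formula route cannot do better than $c(j)$, so the two directions match.
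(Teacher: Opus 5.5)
Your proof is correct. The sufficiency direction matches the paper's: both use the label-side separation class $c_f(j)=f(c(j))$ together with \cref{lemma_separable}(d). Your explicit handling of complements relative to $U=\bigcup_i f(S_i)$ in the Type-II setting covers a point the paper leaves implicit.

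The necessity direction takes a genuinely different route. The paper argues inside the ``set operations'' model. It asserts that the smallest label set containing $f(j)$ obtainable from $\{f(S_i)\}$ is $c_f(j)$, the label-side analogue of \cref{lemma_separable}(b). Determinability therefore forces $c_f(j)=\{f(j)\}$, and bijectivity pulls this back to $c(j)=\{j\}$. For covering, the paper simply observes that every label must be seen at least once.

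You instead argue by indistinguishability, exhibiting a second labeling consistent with all observations. When two coupons share an index set you use $f\circ\tau$, with $\tau$ the transposition of those coupons. When $f(j_0)$ is never seen you replace it with a fresh symbol.

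Your route gives a stronger conclusion: no deduction procedure whatsoever, not only set operations, can recover $f$. It also makes the paper's informal ``every label must be observed'' rigorous. The cost is that you need a precise meaning of ``determined'' (uniqueness among consistent labelings), which you supply.

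The paper's route gives symmetry. Both directions run through the single object $c_f(j)$, so the characterization follows from \cref{lemma_separable} without any auxiliary construction.
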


\begin{proof}
    In the Type-II setting, determining every coupon's label requires every label to be observed at least once. This implies $\bigcup_{i=1}^m f(S_i) = L = f([n]) $, so by the bijectivity of $f$, $ \bigcup_{i=1}^m S_i =[n]$. Therefore, $\{S_i\}_{i=1}^m$ being a covering system is a necessary condition for determining every coupon's label in the Type-II setting.
    
    Next, we show that if every coupon's label can be determined given $\{S_i\}_{i=1}^m$, then $\{S_i\}_{i=1}^m$ is a separating system. As in the proof of \cref{lemma_separable}, the smallest set of labels containing $f(j)$ that can be obtained via set operations on $\{f(S_i)\}_{i=1}^m$ is:
    \begin{equation}
        \label{eqn:deducing}
        c_f(j):=\left( \bigcap_{i\in I(j)}f(S_i) \right) \cap \left( \bigcap_{i\in I(j)^\mathrm{c}}f(S_i)^\mathrm{c} \right).
    \end{equation}
    Therefore, if the label of every $j\in [n]$ can be determined, we must have $c_f(j) = \{f(j)\}$. Moreover, $\{f(j)\}$ must be the smallest set of labels containing $f(j)$ that can be obtained in this manner. By bijectivity and the definition of $c(j)$, we obtain $\{f(j)\} = f(c(j))$ and $\{j\} = c(j)$ for all $j \in [n]$, which means $\{S_i\}_{i=1}^m$ is a separating system.
    
    Conversely, if $\{S_i\}_{i=1}^m$ is a separating system (additionally a covering system in the Type-II setting), then by \cref{lemma_separable}, the label of coupon $j \in [n]$ is the sole element of $c_f(j)$.
\end{proof}

\begin{remark}
\label{remark_deducing}
    Thus, the minimum number of samples required to deduce the label of all $n$ coupons is mathematically equivalent to the minimum number of subsets required to form a separating system for $[n]$ (additionally a covering system in the Type-II setting).
    \hfill $\square$
\end{remark}

Finally, we define the draw matrix, which was also utilized by~\cite{Katona}.

\begin{definition}
    \label{def_draw_matrix}
    Given $n,m \in \mathbb{Z}^+$ and subsets $\{S_i\}_{i=1}^m \subseteq [n]$, the \textit{draw matrix} corresponding to $\{S_i\}_{i=1}^m$ is a binary matrix $M \in \mathbb{F}_2^{m\times n}$ such that $M_{ij} = 1$ when $j \in S_i$.
    \hfill $\square$
\end{definition}

\begin{remark}
    In the $(n,k)$-LCCP, each row of $M$ has Hamming weight $k$ and represents a sampled subset, and each column corresponds to a coupon. Therefore, it is clear that $\{S_i\}_{i=1}^m$ being a separating system is equivalent to the following conditions:
    \begin{itemize}
        \item the index sets $\{I(j)\}_{j=1}^n$ are distinct,
        \item the columns of $M$ are distinct.
    \end{itemize}
    Similarly, $\{S_i\}_{i=1}^m$ being a covering system is equivalent to the following:
    \begin{itemize}
        \item the index sets $\{I(j)\}_{j=1}^n$ are all nonempty, 
        \item the columns of $M$ are all nonzero (i.e., have nonzero weight).
    \end{itemize}

\end{remark}

\subsection{Deducing Labels from Separating Systems}

Suppose we can construct a separating system and now want to use it to determine the label of each coupon. This can be done directly using \cref{eqn:deducing} by performing $m-1$ set intersections. However, this is inefficient, as we can exploit the fact that $\preceq$ from \cref{def:index} is a partial order on $[n]$ given a separating system. Namely, the coupons have distinct index sets and thus are never equivalent with respect to $\preceq$.

We focus our attention on \textit{maximal} elements with respect to $\preceq$, i.e., coupons whose index sets are not proper subsets of any other index set. Clearly, given a separating system, coupons that appear most frequently in the samples will be maximal elements with respect to $\preceq$.

\begin{lemma}
    \label{lem:freq}
    Given $\{(S_i, f(S_i) )\}_{i=1}^m$ where $\{S_i\}_{i=1}^m$ is a separating system, let $j^*\in [n]$ such that $|I(j^*)| =  \max_{j\in [n]} |I(j)|$. Then:
    \begin{align}
        \{j^* \} = \bigcap_{i \in I(j^*)} S_i, \quad \text{and} \\
        \{f(j^*)\} = \bigcap_{i \in I(j^*)} f(S_i).
    \end{align}
\end{lemma}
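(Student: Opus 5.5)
Write $T := \bigcap_{i \in I(j^*)} S_i$. I will show $T = \{j^*\}$ directly from the maximality of $|I(j^*)|$ and the separating property. Then I will transfer the statement to labels using the bijectivity of $f$.

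\textbf{Step 1: $T$ contains $j^*$ and consists of elements above $j^*$.} By definition of the index set, $j^* \in S_i$ for every $i \in I(j^*)$, so $j^* \in T$. Now take any $j \in T$. Then $j \in S_i$ for all $i \in I(j^*)$, which means $I(j^*) \subseteq I(j)$, i.e.\ $j^* \preceq j$.

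\textbf{Step 2: the inclusion forces equality.} From Step 1 we have $|I(j)| \ge |I(j^*)|$. Since $|I(j^*)|$ is maximal over $[n]$, we also have $|I(j)| \le |I(j^*)|$. These are finite sets, one containing the other and of equal size, so $I(j) = I(j^*)$, that is, $j \simeq j^*$.

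\textbf{Step 3: apply separation.} Because $\{S_i\}_{i=1}^m$ is a separating system, \cref{lemma_separable}(a) says distinct coupons are never equivalent under $\simeq$. Hence $j = j^*$, and so $T = \{j^*\}$.

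One edge case needs a check. If $I(j^*) = \emptyset$, the empty-intersection convention gives $T = [n]$. But in that case every index set is empty, so all coupons are equivalent, which contradicts separation when $n \ge 2$. When $n = 1$ the claim is trivial.

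\textbf{Step 4: labels.} Since $f$ is a bijection, it commutes with intersections:
\[
f\Bigl(\bigcap_{i \in I(j^*)} S_i\Bigr) = \bigcap_{i \in I(j^*)} f(S_i).
\]
Applying $f$ to $T = \{j^*\}$ therefore gives $\bigcap_{i \in I(j^*)} f(S_i) = \{f(j^*)\}$.

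There is no real obstacle here. The key observation is the one in Step 2: any element of the intersection has an index set containing $I(j^*)$, and the maximal cardinality then collapses that containment to equality.
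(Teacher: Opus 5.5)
Your proof is correct and follows essentially the same argument as the paper: any $j$ in the intersection has $I(j)\supseteq I(j^*)$, maximality of $|I(j^*)|$ forces $I(j)=I(j^*)$, and distinctness of index sets in a separating system gives $j=j^*$. Your Step 4, which transfers the result to labels via injectivity of $f$, fills in a step the paper leaves implicit; your empty-index-set check is harmless but not needed, since the general argument already covers that case under the empty-intersection convention.
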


\begin{proof}
    Suppose we have $j \in [n]$ such that $j \in \bigcap_{i \in I(j^*)} S_i$, which holds if and only if $j \in S_i$ for all $i \in I(j^*)$. This implies $I(j) \supseteq I(j^*)$ and $|I(j)| \geq |I(j^*)|$, but since $j^*$ is defined to have maximum frequency, then we must have $|I(j)| = |I(j^*)|$ and $I(j) = I(j^*)$. Moreover, since $\{S_i\}_{i=1}^m$ is a separating system, every index set is distinct, so this equality holds if and only if $j = j^*$.
\end{proof}

\begin{remark}
    By applying \cref{lem:freq}, we can obtain the label $f(j^*)$ of a maximal element $j^*$. Because the index sets of the remaining coupons are static, their frequencies and relative ordering with respect to $\preceq$ never change. Thus, we can simply determine $f(j^*)$, extract it from the pool of unknown labels by updating $f(S_i) \gets f(S_i) \setminus \{ f(j^*)\}$ for all $i \in I(j^*)$, and proceed to the next maximal coupon. By induction, this process determines the label of every coupon, as outlined in \cref{alg:peeling}. This significantly reduces the number of set operations compared to applying \cref{eqn:deducing} directly.
\end{remark}

\begin{algorithm}
\DontPrintSemicolon
\caption{Greedy Intersection--Peeling Algorithm}
\label{alg:peeling}
\KwIn{Separating system of sampled subsets $\{(S_1,f(S_1)),\dots,(S_m,f(S_m))\}$.}
\KwOut{All coupon-label correspondences $(j,f(j))$ extracted from intersections of the sets in $\{f(S_i)\}_{i=1}^m$.}

\BlankLine
\textbf{Preprocessing:}\;
Let $\mathcal{U} \gets [n]$ be the set of coupons not yet extracted.\;
\For{$j \in \mathcal{U}$}{
    Compute and store the index set $I(j) := \{\, i \in [m] : j \in S_i \,\}$.\;
    Compute and store the frequency $F(j) := |I(j)|$.\;
}
Sort the coupons in $\mathcal{U}$ in decreasing order of $F(j)$.\;
\BlankLine
\While{$\mathcal{U} \neq \emptyset$}{
    Pop the first element $j$ from the sorted $\mathcal{U}$ (which has maximum frequency).\;
    Compute the intersection $c_f(j) = \bigcap_{i \in I(j)} f(S_i)$.\; 
    Set $f(j) = c_f(j)$ and output $(j,f(j))$.\;
    Remove $f(j)$ from $f(S_i)$ for all $i \in I(j)$.\;
}
\end{algorithm}

\begin{remark}
    We briefly analyze the computational complexity of \cref{alg:peeling}. During preprocessing, computing the index sets and frequencies takes $O(mk)$ time and $O(mk+n)$ space. Since the maximum possible frequency is $m$, we can sort the coupons in $O(n)$ time using bucket sort. Inside the loop, the total time spent computing intersections and removing labels across all iterations is bounded by the sum of the sizes of the label sets. Specifically, intersecting $F(j)$ sets of size $k$ for coupon $j$ takes $O(F(j) \cdot k)$ time. Summing this cost over all $n$ coupons yields $\sum_{j=1}^n F(j) \cdot k = (mk) \cdot k = O(mk^2)$. Thus, the overall time complexity using bucket sort is $O(mk^2 + n)$ with a space complexity of $O(mk + n)$. If $m$ happens to be exceptionally large (e.g., $m > n \log n$), one might opt for a standard $O(n \log n)$ comparison-based sort instead. This is a massive improvement over a naive application of \cref{eqn:deducing}, which would require $O(n \cdot m)$ set operations taking exponential scaling in the worst case.
\end{remark}

\begin{example}
    Consider the Type-I $(8,4)$-LCCP, where we receive samples $\{(S_1,f(S_1)), (S_2,f(S_2)), (S_3, f(S_3)) \}$, with
\begin{equation*}
    S_1 = \{ 1,2,3,5\} , \quad
    S_2 = \{1, 2,4,6\}, \quad
    S_3 = \{1,3,4,7 \}.
\end{equation*}
Observe that $\{S_1,S_2, S_3\}$ is a separating system and corresponds to the draw matrix:
\begin{equation*}
    M = 
    \begin{bmatrix}
         1 & 1 & 1 & 0 & 1 & 0 & 0 & 0 \\
         1 & 1 & 0 & 1 & 0 & 1 & 0 & 0 \\
         1 & 0 & 1 & 1 & 0 & 0 & 1 & 0 
    \end{bmatrix}.
\end{equation*}

By comparing the index sets, we can obtain the Hasse diagram of the coupons with respect to $\preceq$, given by \cref{fig:hasse}.

\begin{figure}[ht!]
    \centering
    \begin{tikzpicture}[
    every node/.style={circle, draw, fill=white, inner sep=2pt, minimum size=7mm, font=\bfseries},
    every path/.style={thick}
]

    \node (1) at (0, 4.5) {1};
    \node (2) at (-2, 3) {2};
    \node (3) at (0, 3)  {3};
    \node (4) at (2, 3)  {4};
    \node (5) at (-2, 1.5) {5};
    \node (6) at (0, 1.5)  {6};
    \node (7) at (2, 1.5)  {7};
    \node (8) at (0, 0) {8};

    \draw (1) -- (2);
    \draw (1) -- (3);
    \draw (1) -- (4);

    \draw (2) -- (5);
    \draw (2) -- (6);
    \draw (3) -- (5);
    \draw (3) -- (7);
    \draw (4) -- (6);
    \draw (4) -- (7);

    \draw (5) -- (8);
    \draw (6) -- (8);
    \draw (7) -- (8);

\end{tikzpicture}
    \caption{Hasse diagram of the coupons partially ordered by $\preceq$ for Example 2.}
    \label{fig:hasse}
\end{figure}

We identify coupon $1$ as the maximal element with respect to $\preceq$, so from \cref{lem:freq} we first determine $f(1)$ by performing $f(S_1) \cap f(S_2) \cap f(S_3)$. After deleting $1$ and $f(1)$ from the samples, we can repeat this process for coupons $2$, $3$, and $4$, which all appear twice in $\{S_1,S_2,S_3\}$, and then we repeat the process for coupons $5$, $6$, and $7$ which all appear only once. Finally, the label of coupon $8$ is obtained immediately once every other label has been removed from $\{f(S_1), f(S_2), f(S_3)\}$.
\end{example}

So far, we have discussed equivalent conditions and methods for determining the labels of coupons in the LCCP using the concept of separating systems. We will use separating systems in the next section to analyze the absolute minimum number of samples to deduce every coupon's label.
\section{Full Recovery Minimum} 

In this section, we find the minimum number of samples $a(n,k)$ for the $(n,k)$-LCCP in both Type-I and Type-II full recovery settings. In other words, we find \textit{minimal separating systems}, i.e., separating systems of $[n]$ consisting of the smallest possible number of $k$-subsets.

The draw matrix formulation is useful; if there exists $M \in \mathbb{F}_2^{m\times n}$ with weight-$k$ rows and distinct columns, then $a_\mathrm{I}(n,k) \leq m$. Additionally, if the columns are nonzero, then $a_{\mathrm{II}}(n,k) \leq m$. 
Therefore, a minimal separating system is given by a binary matrix with $n$ columns containing the smallest possible number of weight-$k$ rows such that all columns are distinct. Existence of such matrices will be proved via deterministic construction. 

We divide the discussion into three parts: Subsection A discusses fundamental results and properties of the minimum. Subsection B develops the full characterization of the minimum. Finally, Subsection C examines bounds on the minimum, including improvements on the bounds from~\cite{Katona,Wegener}. 

\subsection{Fundamental Results}

We begin by establishing a few logarithmic lower bounds.

\begin{outline}
    \subsubsection{Log Bound}
\end{outline}
\begin{lemma}
    \label{lemLogBound}
    For all $n, k \in \mathbb{Z}^+$ with $n > k$:
    \begin{enumerate}
        \item [(a)]\label{lemLogBound_a} $a_{\mathrm{I}}(n,k) \geq \ceil{\log_2 n}, a_{\mathrm{II}}(n,k) \geq \ceil{\log_2(n+1)}$.
        \item [(b)]\label{lemLogBound_b} $a_{\mathrm{I}}(n,k), a_{\mathrm{II}}(n,k) \geq \ceil{\log_2 2k}$.
        \item [(c)]\label{lemLogBound_c} When $n = 2k$ (resp. $n = 2k-1$) in the Type-I (resp. Type-II) setting, the lower bounds on $a_\mathrm{I}(n,k)$ (resp. $a_\mathrm{II}(n,k)$) in (a) and (b) are equal and met with equality. Furthermore, $a_{\mathrm{II}}(2k,k) = \ceil{\log_2 2k}$ if $k$ is not a power of 2.
        \item [(d)]\label{lemLogBound_d} If $k >1$, then $a_{\mathrm{I}}(2k,k) = a_{\mathrm{I}}(2k-1,k)$.
    \end{enumerate}
\end{lemma}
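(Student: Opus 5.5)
Throughout I work with the draw matrix of \cref{def_draw_matrix}: $m$ rows of weight $k$, and by \cref{thm_deducing} (and the remark after \cref{def_draw_matrix}) the columns must be distinct, and in the Type-II setting also nonzero.

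For (a), I count columns. There are only $2^m$ binary columns of length $m$, so $n\le 2^m$ and $m\ge\ceil{\log_2 n}$. In the Type-II setting the zero column is excluded, so $n\le 2^m-1$, which gives $m\ge\ceil{\log_2(n+1)}$.

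For (b), I do a weighted double count. Each row contributes $k$ ones, so the total weight is $mk$. With $n>k$ distinct columns, fewer than $2^m$ would otherwise be needed, so I bound the total weight from below instead. Complementing the whole matrix gives rows of weight $n-k$ and preserves distinctness. Write $W$ for the total weight. The minimum of $W$ over any $n$ distinct columns is obtained by taking columns of smallest weights first. I would argue directly as follows. Suppose $2^m<2k$, i.e. $2^{m-1}<k$. The columns are distinct vectors in $\{0,1\}^m$, so each row $i$ has at most $2^{m-1}$ columns with a $1$ in position $i$. That means row weight $k\le 2^{m-1}$, a contradiction. Hence $2^{m-1}\ge k$, i.e. $m\ge\ceil{\log_2 2k}$, and this holds in both settings.

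For (c), take $n=2k$ in Type-I. Then $\ceil{\log_2 n}=\ceil{\log_2 2k}$ trivially, so the two bounds agree. For achievability with $m=\ceil{\log_2 2k}$, I need $2k$ distinct columns in $\{0,1\}^m$ with every row weight exactly $k=n/2$. I would take $k$ complementary pairs $\{v,\bar v\}$ of vectors; since $2^m\ge 2k$ there are $2^{m-1}\ge k$ such pairs. Each pair contributes exactly one $1$ to every row, so every row has weight $k$. Now take $n=2k-1$ in Type-II. Here $\ceil{\log_2 2k}=\ceil{\log_2(2k)}$, which matches (a) since $n+1=2k$. For the construction, take the $2k$ columns above, chosen to include the pair $\{0,\mathbf 1\}$, and delete the zero column. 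Each row then still has weight $k$, because the zero column contributed nothing. For $a_{\mathrm{II}}(2k,k)$ with $k$ not a power of 2, we have $2^m>2k$, so I keep $k-1$ complementary nonzero pairs, excluding the pair $\{0,\mathbf 1\}$, plus two further nonzero columns forming a complementary pair. This works as long as $2^{m-1}-1\ge k$, which holds exactly when $k$ is not a power of 2. All $2k$ columns are then distinct and nonzero, and every row has weight $k$.

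For (d), the monotonicity inequality $a_{\mathrm I}(2k-1,k)\le a_{\mathrm I}(2k,k)$ needs care, since deleting a column changes row weights. Instead I show that $\ceil{\log_2 2k}$ columns suffice for $n=2k-1$ directly, reusing the Type-II construction for $(2k-1,k)$ from (c); it is also a Type-I system. Combined with the lower bound (b), both values then equal $\ceil{\log_2 2k}$. I have not yet checked where $k>1$ enters; it presumably guarantees $n>k$. The main obstacle I anticipate is the bookkeeping in (c) for Type-II $n=2k$, namely verifying exactly when enough nonzero complementary pairs exist.
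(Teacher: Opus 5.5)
Your proposal is correct and follows the paper's proof essentially step for step. You use column counting in $\mathbb{F}_2^m$ for (a), the fact that each coordinate is $1$ in exactly $2^{m-1}$ vectors for (b), complementary-pair constructions (including and then deleting, or avoiding, the pair $\{0,\mathbf 1\}$) for (c), and for (d) the upper bound from the $(2k-1,k)$ construction combined with the lower bound (b). Your guess about $k>1$ is right: it is what makes $n=2k-1>k$, and the paper phrases the same point as the identity $\lceil\log_2 2k\rceil=\lceil\log_2(2k-1)\rceil$, which fails only at $k=1$. Just drop the unused false start at the beginning of (b), and write ``rows'' rather than ``columns'' in (d).
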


\begin{proof}
Suppose $\{S_i\}_{i=1}^m$ separates $[n]$ and corresponds to $M\in \mathbb{F}_2^{m\times n}$. The columns of $M$ form a subset of $\mathbb{F}_2^m$, so $n \leq |\mathbb{F}_2^m| = 2^m$, yielding the bounds in (a) (for Type-II, exclude the all-zero vector). Moreover, across all vectors in $\mathbb{F}_2^m$, each coordinate contains exactly $2^{m-1}$ ones, so $k \leq 2^{m-1}$, yielding the bound in (b).

To prove (c), we construct $M \in \mathbb{F}_2^{m\times n}$ for $m = \ceil{\log_2 2k}$ corresponding to a separating system for $[2k]$. Let the columns of $M$ consist of the all-ones vector, the all-zero vector, and $k-1$ distinct pairs of nonzero vectors that are complements of each other, which exist because $k-1 \leq 2^{m-1}-1$. This yields a separating system for $[2k]$; a separating and covering system for $[2k-1]$ is obtained by deleting the all-zero column. Furthermore, if $k$ is not a power of 2, we obtain a separating and covering system for $[2k]$ by selecting $k$ distinct pairs of nonzero vectors that are complements of each other, which exist because $k < 2^{m-1}$.

(d) holds since $\ceil{\log_2 2k} = \ceil{\log_2 (2k-1)}$ when $k > 1$.
\end{proof}

\begin{remark}
    When $k$ is a power of $2$, the draw matrix 
  for the Type-II setting constructed in  Lemma~\ref{lemLogBound}, part (c) is the parity check matrix of a Hamming code. Moreover, the respective draw matrix for the Type-I setting can be viewed as a binary splitting of the coupons where in each sample, the separation classes are split (approximately) in half. It will be seen later that the bounds in (a) become loose when $n$ is much larger than $k$.
\end{remark}

By computing the lower bound in \cref{lemLogBound_b} across all $k$ for fixed $n$, we observe that $$a_\mathrm{I}(n,k)  \geq \max \{ \ceil{\log_2 n}, \ceil{\log_2 2k} \},$$ and $$a_\mathrm{II}(n,k)  \geq \max \{ \ceil{\log_2(n+1)},   \ceil{\log_2 2k} \},$$ which gives the following result for when the subsets can be of any cardinality.

\begin{corollary}
    \label{corRandMin}
    Suppose we have $n$ coupons and can sample subsets of any size. Then, the minimum number of samples to deduce the label of each coupon is $\ceil{\log_2 n}$ in the Type-I setting and $\ceil{\log_2 (n+1)}$ in the Type-II setting.
\end{corollary}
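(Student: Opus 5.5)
The plan is to prove matching lower and upper bounds. Here the samples may have arbitrary (possibly varying) cardinalities, and the $k$-constraint on row weights is dropped. By \cref{thm_deducing} and \cref{remark_deducing}, the minimum number of samples equals the minimum number of rows of a binary matrix $M$ with $n$ distinct columns. In the Type-II setting the columns must additionally be nonzero. The row weights are now unconstrained, except that presumably each row should be a nonempty sample; we address this point below.

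For the lower bound we reuse the counting argument of \cref{lemLogBound}(a), which never used the row weights. If $M$ has $m$ rows, its $n$ distinct columns lie in $\mathbb{F}_2^m$, so $n \le 2^m$. In Type-II the columns lie in $\mathbb{F}_2^m\setminus\{0\}$, so $n \le 2^m-1$. This gives $m \ge \ceil{\log_2 n}$ and $m\ge \ceil{\log_2(n+1)}$, respectively. Alternatively, one may note that the bound $\max\{\ceil{\log_2 n},\ceil{\log_2 2k}\}$ displayed above, minimized over $k$, is dominated by the first term for small $k$.

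For the upper bound we give explicit constructions. Set $m=\ceil{\log_2 n}$ in Type-I and take as columns of $M$ any $n$ distinct vectors of $\mathbb{F}_2^m$; for instance, the binary expansions of $0,1,\dots,n-1$. In Type-II, set $m=\ceil{\log_2(n+1)}$ and take the binary expansions of $1,\dots,n$, which are nonzero and distinct. The rows then define the samples $S_i=\{j: M_{ij}=1\}$. By the remark following \cref{def_draw_matrix}, the resulting system is separating, and covering as well in Type-II. \cref{thm_deducing} then shows that all labels can be deduced.

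The only subtle step is ensuring that each row is a legitimate sample, that is, nonempty (and a proper subset if one insists on $k<n$). The row-weight counting argument handles this by a suitable choice of columns. With the expansions of $0,\dots,n-1$ and $m=\ceil{\log_2 n}$, bit $i$ is set by the number $2^{i-1}\le 2^{m-1}<n$, so every row contains a $1$. In Type-II, no row is all ones, because the number with only the top bit set, or $1$ when $m=1$, has a zero in every other coordinate. A row can be all ones only when $m=1$, and in that case the trivial cases $n\le 2$ are checked by hand. If a problem remains, we can replace each full row by its complement: this preserves separation in Type-I, and it only needs care for covering in the degenerate cases. With these checks, the corollary follows.
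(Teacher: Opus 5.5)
Your proposal is correct and follows the paper's proof. The lower bound is the column-counting argument of \cref{lemLogBound}(a), which never uses the row weights, and the upper bound simply takes any $n$ distinct columns (distinct and nonzero in Type-II), e.g.\ binary expansions. Your extra check that each row is a nonempty proper subset goes beyond what the paper addresses and is fine, with one slip: in Type-II the column $2^{m-1}$ supplies a zero in every row except the top one, and the top row gets its zero from the column $1$ when $m\ge 2$.
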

\begin{proof}
    Clearly, these are lower bounds since the bounds in  \cref{lemLogBound_a}(a) do not depend on the size of the sampled subsets. Moreover, these bounds are tight since we can simply pick any $n$ distinct (resp. and nonzero) columns to construct a separating (resp. and covering) system.    
\end{proof}

From \cref{thm_deducing}, it is clear that $a_{\mathrm{II}}(n,k) \geq a_{\mathrm{I}}(n,k)$, though we can characterize the relationship further.

\begin{outline}
    \subsubsection{Relationship between Type-I,II min}
\end{outline}

\begin{lemma}
    \label{lemTypeMin}
    For any $k\in \mathbb{Z}^+$:
    \begin{equation}
        a_\mathrm{II}(n,k) = \begin{cases}
            a_\mathrm{I}(n,k) , & k < n < 2k \\
            a_\mathrm{I}(n+1,k), & n \geq 2k-1.
        \end{cases}
    \end{equation}
    Moreover, $a_\mathrm{II}(n,k) - a_{\mathrm{I}}(n,k) \in \{0, 1\}$ when $n \geq 2k-1$. Consequently, $a_\mathrm{I}(n,k)$ and $a_\mathrm{II}(n,k)$ are $1$-Lipschitz functions with respect to $n$ for fixed $k$.
\end{lemma}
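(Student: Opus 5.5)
The plan is to prove the two inequalities $a_\mathrm{I}(n,k)\le a_\mathrm{II}(n,k)$ and $a_\mathrm{I}(n+1,k)\le a_\mathrm{II}(n,k)$, which are easy, and then the matching upper bounds on $a_\mathrm{II}(n,k)$ in each regime. I will work throughout with draw matrices and \cref{thm_deducing}. The first lower bound is immediate from \cref{thm_deducing}. For the second, take a Type-II optimal draw matrix for $[n]$, whose columns are distinct and nonzero. Appending an all-zero column keeps every row of weight $k$ and keeps the columns distinct, so it gives a separating system for $[n+1]$. Once the two equalities are proved, the statement $a_\mathrm{II}(n,k)-a_\mathrm{I}(n,k)\in\{0,1\}$ for $n\ge 2k-1$ and the $1$-Lipschitz claim will follow by sandwiching against monotonicity in $n$.

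For the upper bound $a_\mathrm{II}(n,k)\le a_\mathrm{I}(n+1,k)$ when $n\ge 2k-1$, take an optimal Type-I matrix $M$ for $[n+1]$; its columns are distinct, so at most one is zero. If a zero column exists, delete it and we are done. Otherwise every column is nonzero, and I must remove one column while keeping all row weights equal to $k$. I will use a ``merge'' move. Pick a column $j$ whose index set $I(j)$ is minimal for $\preceq$. Find a column $j'$ with $I(j')\cap I(j)=\emptyset$ such that $I(j')\cup I(j)$ is not the index set of any remaining column. Replace column $j'$ by $I(j')\cup I(j)$ and delete column $j$. Each row in $I(j)$ loses the one in column $j$ and gains one in column $j'$, so every row still has weight $k$. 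The new column set is distinct and nonzero, giving a Type-II system for $[n]$ with the same $m$.

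The main obstacle is showing that a suitable $j'$ exists, and this is where $n+1\ge 2k$ is used. My first attempt is a counting argument. Fix a row $i\in I(j)$. It contains only $k$ ones, so at least $n+1-k\ge k$ columns avoid row $i$. If $|I(j)|=1$, all of these columns are disjoint from $I(j)$. Among them, the unions $I(j')\cup I(j)$ are pairwise distinct, so there are at least $k$ candidate new columns. A candidate can clash only with an existing column that contains row $i$, and there are $k$ of those, one of which is $j$ itself; since no other column equals $I(j')$, this leaves at most $k-1$ possible clashes. Hence some $j'$ works. For larger minimal $|I(j)|$ I would either reduce to this case by first choosing $j$ of minimum weight, or apply the same pigeonhole argument to columns that vanish on all of $I(j)$. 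If that fails, the fallback is an extremal exchange argument: among optimal matrices, choose one minimizing the total column weight, and show that this forces a weight-one column or a zero column.

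For $k<n<2k$, I need $a_\mathrm{II}(n,k)\le a_\mathrm{I}(n,k)$. Again start from an optimal Type-I matrix and deal only with a possible zero column $z$. Because $n<2k$, each row has more ones than zeros. I will try to make $z$ nonzero by a local swap: pick a row $i$ and a column $c$ with a one in row $i$, and move that one from $c$ to $z$. The new column $z$ equals $\{i\}$ and the modified $c$ equals $I(c)\setminus\{i\}$, and these two must avoid existing index sets. The excess of ones in each row gives many choices of $c$, and I expect a pigeonhole argument similar to the previous step to guarantee a valid one. If this step is stubborn, I would instead compare directly with the construction in \cref{lemLogBound}(c), adapting it by deleting complementary pairs and adding balancing columns, to check that the value $\ceil{\log_2 2k}$ is achieved in both settings.
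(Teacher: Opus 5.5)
Your two easy inequalities are fine, and the weight-one merge is a correct special case. However, the two substantive steps have gaps.

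For $n\ge 2k-1$, the merge needs one column $j'$ vanishing on all of $I(j)$, and your count only works for $|I(j)|=1$. Optimal matrices need not contain a column of weight at most one. The $20$ vectors of weight $2$ or $3$ in $\mathbb{F}_2^5$ give rows of weight $10$, so they form an optimal Type-I system for $n+1=20$, $k=10$ (optimal since $\ceil{\log_2 20}=5$). When $|I(j)|\ge 2$, only $n-|I(j)|(k-1)$ columns are guaranteed to vanish on $I(j)$; this is just $1$ if $n=2k-1$ and $|I(j)|=2$. Against this, up to $k-1$ columns containing $I(j)$ can cause clashes, so the pigeonhole argument collapses.

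In fact a single receiver need not exist at all. Take the incidence matrix of the Fano plane, with rows indexed by points and columns by lines ($m=7$, $k=3$, $n+1=7$). Its columns are distinct and nonzero, $n\ge 2k-1$, and no two columns are disjoint. Your counting never uses optimality, so it cannot be completed. The fallback is vacuous, because every matrix with $m$ rows of weight $k$ has total weight $mk$.

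The missing idea, which the paper uses, is to give up the single receiver. Empty the column to be deleted one row at a time, moving its $1$ in row $i$ to some other column with a $0$ in row $i$. There are $n-k+1$ such receivers, and they produce pairwise distinct new columns. A clash can only be with one of the $k-1$ other columns having a $1$ in row $i$. Since $n-k+1>k-1$ exactly when $n\ge 2k-1$, a safe receiver always exists, and receivers stay nonzero.

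For $k<n<2k$, your single swap makes the zero column equal to $\{i\}$. This is impossible whenever every singleton is already a column, and that happens for optimal matrices. With $m=5$, take the columns $0$, $e_1,\dots,e_5$, the five vectors of weight $4$, $\mathbf{1}$, and the five cyclic shifts of $11100$. This gives $n=17$, $k=9$, and it is optimal because $\ceil{\log_2 17}=5$. Your fallback also aims at the wrong value: in this range $a_{\mathrm{I}}(n,k)=a_{\mathrm{I}}(n,n-k)$, e.g. $a_{\mathrm{I}}(k+1,k)=k$, not $\ceil{\log_2 2k}$.

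The paper instead picks a nonzero $v$ that is not a column, which exists since $2^m\ge 2k>n$. It builds the zero column up to $v$ with one transfer for each row $i$ where $v_i=1$, taking the $1$ from a donor with a $1$ in row $i$. There are $k$ donors. At most $n-1-k$ of them clash, and one more (the column $e_i$, if present) would become zero, leaving $2k-n\ge 1$ valid choices. No donor can end up equal to $v$, since it acquires a $0$ where $v$ has a $1$.

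Finally, $a_{\mathrm{II}}(n,k)\le a_{\mathrm{I}}(n,k)+1$ does not follow from monotonicity. You must append a weight-$k$ row with a $1$ in the zero column (if there is one) of a minimal Type-I matrix. Only with this does the sandwich give the $1$-Lipschitz property, and only for $n\ge 2k-1$. For $k<n<2k$ it fails, e.g. $a_{\mathrm{I}}(11,10)=10$ while $a_{\mathrm{I}}(12,10)=8$.
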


\begin{proof}
    Suppose $n < 2k$. Let $m = a_{\mathrm{I}}(n,k)$ and $\{S_i\}_{i=1}^m$ be a separating system corresponding to $M \in \mathbb{F}_2^{m\times n}$, so each row of $M$ has weight $k$, and the columns are distinct. If $\{S_i\}_{i=1}^m$ covers $[n]$, then we can conclude that $a_{\mathrm{I}}(n,k) = a_{\mathrm{II}}(n,k)$. If not, assume, without loss of generality, that the last column of $M$ is the all-zero vector. By \cref{lemLogBound_b}(b), $2^m \geq 2k > n$, so there is a nonzero $v \in \mathbb{F}_2^m$ that is not a column of $M$. 
    
    We proceed iteratively to transform the last column into $v$. For each row index $i$ where $v_i = 1$, we transfer a $1$ from one of the first $n-1$ columns to the last column. Since this transfer only occurs in rows where $v_i = 1$, it is impossible for any of the first $n-1$ columns to become $v$. So it suffices to ensure that for all $i$ where $v_i = 1$, we can select at least one column whose $1$ in row $i$ can be transferred while keeping the first $n-1$ columns distinct. 
    
    Consider any such row $i$. The only way to create a duplicate column is to transfer a $1$ from a column that, after the flip, becomes identical to another existing column. Excluding the last column, let $N_1$ be the number of columns with a $1$ in row $i$ and $N_0$ be the number of columns with a $0$. Since the row weight is $k$ and the last column is now $0$, we have $N_1 = k$ and $N_0 = (n-1) - k$. The difference is:
    \[ N_1 - N_0 = k - (n - k - 1) = 2k - n + 1. \]
    Since $n < 2k$, we have $N_1 - N_0 \geq 2$, meaning $N_1 > N_0$. Thus, there are strictly more potential donors (columns with $1$) than potential conflicts (columns with $0$), ensuring there is always at least one valid choice of column for the transfer.
    
    Next, suppose $n \geq 2k-1$. It is clear that $a_\mathrm{I}(n+1,k) \leq a_{\mathrm{II}}(n,k)$, since a separating and covering system for $[n]$ is a separating system for $[n+1]$ (by adding an all-zero column). To prove the converse, start with a separating system for $[n+1]$ corresponding to $M \in \mathbb{F}_2^{m\times (n+1)}$. If $M$ has an all-zero column, we can delete it to obtain the result. Otherwise, assume $M$ has no all-zero column. The strategy is to transfer every $1$ from the last column to a column that currently has a $0$, while keeping columns distinct, and then delete the resulting all-zero last column.
    
    Consider any row where the last column has a $1$. The only way to create a duplicate column is to transfer the $1$ to a column that, after the flip, becomes identical to another existing column. Excluding the last column, let $N_1$ be the number of columns with a $1$ and $N_0$ be the number of columns with a $0$. Since the row weight is $k$ and the last column is $1$, we have $N_1 = k-1$ and $N_0 = n - (k-1)$. The difference is:
    \[ N_0 - N_1 = (n - k + 1) - (k - 1) = n - 2k + 2. \]
    Since $n \geq 2k-1$, we have $N_0 - N_1 \geq 1$, meaning $N_0 > N_1$. Thus, there are strictly more potential receivers (columns with $0$) than potential conflicts (columns with $1$), ensuring there is always at least one valid choice of column to receive the $1$.
    
    Finally, suppose $a_\mathrm{II}(n,k) > a_\mathrm{I}(n,k)$ for $n \geq 2k-1$. Let $m = a_{\mathrm{I}}(n,k)$ and $\{S_i \}_{i=1}^m $ be a separating system corresponding to draw matrix $M\in \mathbb{F}_2^{m\times n}$. By the hypothesis, $\{S_i\}_{i=1}^m$ does not cover $[n]$, so there will be exactly one all-zero column in $M$, denoted as column $j$. Then, define $M'$ by inserting in $M$ a weight-$k$ row with a $1$ in the $j$-th column. Now, all columns of $M'$ are nonzero and distinct, so $a_{\mathrm{II}}(n,k) = a_{\mathrm{I}}(n,k) + 1$.
\end{proof}

\begin{example}
\label{ex:transfer_min}
    Consider the case $n=5$ and $k=3$, where $a_\mathrm{I}(5,3)$ and $a_\mathrm{II}(5,3)$ are at least $3$ by \cref{lemLogBound_a}. Let $M$ be the following draw matrix:
    \begin{equation*}
        M = 
        \begin{bmatrix}
        1 & 1 & 1 & 0 & 0 \\
        1 & 1 & 0 & 1 & 0 \\
        1 & 0 & 1 & 1 & 0 
        \end{bmatrix}.
    \end{equation*}
    Note that $M$ corresponds to a separating system but not a covering system. To remedy this, we must transfer ones to the fifth column to change it into a nonzero column vector that is not identical to an existing column of $M$. There are three options: $[1,0,0]^T$, $[0,1,0]^T$, and $[0,0,1]^T$. Here, we pick $[1,0,0]^T$, which means the transfer occurs in the first row of $M$.
    The donor candidates in row $1$ are columns $1$, $2$, and $3$, but note that column $1$ matches with column $4$ in both rows $2$ and $3$, so the only valid choices of columns to transfer from are columns $2$ and $3$. We transfer the $1$ from column $2$, so the new column $2$ is $[0,1,0]^T$ and the new column $5$ is $[1,0,0]^T$. Then, the modified draw matrix is:
    \begin{equation*}
        M' = 
        \begin{bmatrix}
        1 & 0 & 1 & 0 & 1 \\
        1 & 1 & 0 & 1 & 0 \\
        1 & 0 & 1 & 1 & 0 
        \end{bmatrix}.
    \end{equation*}

    Observe that now $M'$ is a valid minimal Type-II draw matrix obtained from $M$.
\end{example}

An immediate consequence of \cref{lemTypeMin} is the following monotonicity  property.

\begin{outline}
    \subsubsection{Monotonicity}
\end{outline}
\begin{corollary}
\label{corMinMonotonicity} When $n \geq 2k-1$, 
$$a_{\mathrm{I}}(n+1,k) =a_{\text{II}}(n,k) \geq a_{\mathrm{I}}(n,k),$$ $$a_{\mathrm{II}}(n+1,k) = a_\mathrm{I}(n+2,k) \geq a_{\text{I}}(n+1,k) \geq a_{\mathrm{II}}(n,k).$$
\end{corollary}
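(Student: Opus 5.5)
The corollary follows from \cref{lemTypeMin}, applied at $n$ and at $n+1$, together with the trivial inequality $a_{\mathrm{II}}(\cdot,k) \geq a_{\mathrm{I}}(\cdot,k)$ that comes from \cref{thm_deducing}. No new construction is needed.

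\emph{First chain.} Fix $n \geq 2k-1$.
\begin{itemize}
\item The second case of \cref{lemTypeMin} applies and gives $a_{\mathrm{II}}(n,k) = a_{\mathrm{I}}(n+1,k)$.
\item A Type-II system (separating and covering) is in particular a Type-I system. Hence $a_{\mathrm{II}}(n,k) \geq a_{\mathrm{I}}(n,k)$.
\end{itemize}
Together these give $a_{\mathrm{I}}(n+1,k) = a_{\mathrm{II}}(n,k) \geq a_{\mathrm{I}}(n,k)$.

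\emph{Second chain.} Since $n \geq 2k-1$, also $n+1 \geq 2k-1$, so \cref{lemTypeMin} applies at $n+1$.
\begin{itemize}
\item It gives $a_{\mathrm{II}}(n+1,k) = a_{\mathrm{I}}(n+2,k)$.
\item The middle inequality $a_{\mathrm{II}}(n+1,k) \geq a_{\mathrm{I}}(n+1,k)$ is again the general fact that Type-II systems are Type-I systems. (It also follows from the first chain with $n$ replaced by $n+1$.)
\item The last inequality $a_{\mathrm{I}}(n+1,k) \geq a_{\mathrm{II}}(n,k)$ is the equality $a_{\mathrm{I}}(n+1,k) = a_{\mathrm{II}}(n,k)$ already established in the first chain.
\end{itemize}
Concatenating these gives the second displayed chain.

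The only step needing care is checking the hypothesis of \cref{lemTypeMin} each time it is invoked, namely that the coupon count is at least $2k-1$. This is immediate since we only move from $n$ to $n+1$. The constraint $k < n$ in the problem setup also holds for $n+1$ and $n+2$, so all quantities are well defined.
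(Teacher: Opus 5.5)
Your proof is correct and matches the paper's approach. The paper presents this corollary as an immediate consequence of \cref{lemTypeMin} applied at $n$ and at $n+1$, combined with the general inequality $a_{\mathrm{II}}(\cdot,k)\geq a_{\mathrm{I}}(\cdot,k)$ noted after \cref{thm_deducing}, which is exactly what you do, including the check that $n+1\geq 2k-1$.
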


We also have the following symmetry property for the Type-I minimum which was also proved in~\cite{Katona}.

\begin{outline}
    \subsubsection{Symmetry}
\end{outline}
\begin{lemma}
\label{lemMinSymmetry}
     $a_{\mathrm{I}}(n,k) = a_{\mathrm{I}}(n,n-k)$.
\end{lemma}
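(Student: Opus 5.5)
The plan is to prove the symmetry by complementing the sampled subsets. Because the statement is symmetric in $k$ and $n-k$, and both $k$ and $n-k$ lie in $[n-1]$, it suffices to show one inequality, namely $a_{\mathrm{I}}(n,n-k) \leq a_{\mathrm{I}}(n,k)$. Applying that inequality with $k$ replaced by $n-k$ then gives the reverse inequality.

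First, let $m = a_{\mathrm{I}}(n,k)$, and take a separating system $\{S_i\}_{i=1}^m$ of $k$-subsets of $[n]$, which exists by the definition of $a_{\mathrm{I}}$ together with \cref{thm_deducing}. Next, consider the complementary system $\{S_i^{\mathrm{c}}\}_{i=1}^m$. Each $S_i^{\mathrm{c}}$ has cardinality $n-k$.

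Then check that the complementary system is still separating, using \cref{def:sep} directly. A set $S_i$ separates $j$ and $j'$ exactly when precisely one of them lies in $S_i$. This holds if and only if precisely one of them lies in $S_i^{\mathrm{c}}$. So the pairs separated by $S_i$ and by $S_i^{\mathrm{c}}$ are identical, and $\{S_i^{\mathrm{c}}\}$ separates $[n]$.

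The same step can be phrased with the draw matrix: the new draw matrix is $M + J$, where $J$ is the all-ones matrix. Its rows have weight $n-k$, and its columns are the bitwise complements of the columns of $M$. These columns are distinct exactly when the columns of $M$ are distinct. By the remark following \cref{def_draw_matrix}, the complementary system is therefore separating, which gives $a_{\mathrm{I}}(n,n-k) \leq m$.

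I do not expect a real obstacle. The only point needing care is that the argument is restricted to Type-I. Complementation can turn an all-ones column into an all-zero column, so covering is not preserved. This is why the statement concerns $a_{\mathrm{I}}$ only, and the proof will note explicitly that no covering condition is required.
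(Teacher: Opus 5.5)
Your proposal is correct and follows the paper's proof: the component-wise complement of the draw matrix (equivalently, replacing each $S_i$ by $S_i^{\mathrm{c}}$) has weight-$(n-k)$ rows and distinct columns, which gives $a_{\mathrm{I}}(n,n-k)\le a_{\mathrm{I}}(n,k)$, and exchanging the roles of $k$ and $n-k$ gives the reverse inequality. Your observation that complementation does not preserve covering, which is why the statement is restricted to Type-I, is also correct.
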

\begin{proof}
    Take any matrix $M \in \mathbb{F}_2^{a_\mathrm{I}(n,k) \times n}$ with weight-$k$ rows and distinct columns; its component-wise complement $M^\mathrm{c}$ will have weight-$(n-k)$ rows and distinct columns, so $a_{\mathrm{I}}(n,n-k) \leq a_{\mathrm{I}}(n,k)$. Applying the same argument in the other direction yields the result.
\end{proof}
\begin{remark}
    It follows from \cref{lemTypeMin,lemMinSymmetry} that the scope of computing $a_\mathrm{I}(n,k)$ and $a_\mathrm{II}(n,k)$ can be reduced to the regime $n \geq 2k$, since if $n < 2k$, then $n > 2(n-k)$.  
\end{remark}

Next, we can upgrade \cref{corMinMonotonicity} by showing a full ordering of $a_\mathrm{I}(n,k)$ and $a_{\text{II}}(n,k)$ with respect to $n$ and $k$ for when one of the parameters is fixed. This is done in \cref{lemMinOrderingII,lemMinOrderingI}.

\begin{lemma} For $n,k\in \mathbb{Z}^+$ and $n>k$:
   \label{lemMinOrderingII}
   \begin{enumerate}
       \item[(a)] If $n \geq 2k-1$, then $a_{\mathrm{II}}(n+1,k) \geq a_{\mathrm{II}}(n,k) $.
       \item[(b)] If $n \geq 2k + 1$ (or $k \leq \ceil{\frac{n}{2}} - 1$), then $a_{\mathrm{II}}(n,k+1) \leq a_{\mathrm{II}}  (n,k)$.
       \item[(c)] If $k < n < 2k-1$, then $a_{\mathrm{II}}(n+1,k) \leq a_{\mathrm{II}} (n,k) $.
       \item[(d)] If $k < n < 2k+1$ (or $k \geq \ceil{\frac{n}{2}}$), then $a_{\mathrm{II}}(n,k+1) \geq a_{\mathrm{II}}(n,k)$.
   \end{enumerate}
\end{lemma}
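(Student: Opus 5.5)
Parts (a) and (c) follow from earlier results plus one trivial construction. Parts (b) and (d) rest on a single ``add a one to every row'' transfer argument, in the spirit of the proof of \cref{lemTypeMin}.

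For (a), \cref{corMinMonotonicity} already gives $a_{\mathrm{II}}(n+1,k) \geq a_{\mathrm{I}}(n+1,k) \geq a_{\mathrm{II}}(n,k)$ when $n \geq 2k-1$, so nothing new is needed.

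For (c), take a Type-II draw matrix $M$ for $(n,k)$ with $m=a_{\mathrm{II}}(n,k)$ rows. Its columns are distinct and nonzero. Appending an all-zero column keeps the row weights equal to $k$ and keeps all columns distinct, so $a_{\mathrm{I}}(n+1,k) \leq a_{\mathrm{II}}(n,k)$. Since $n+1<2k$, \cref{lemTypeMin} gives $a_{\mathrm{II}}(n+1,k)=a_{\mathrm{I}}(n+1,k)$, which proves (c).

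For (b), I would prove a general weight-increment claim. If $M$ has weight-$w$ rows and distinct columns over $n \geq 2w+1$ coupons, then every row can be raised to weight $w+1$ while keeping the columns distinct. Ones are only added, so no column can become zero, and the construction is valid in both the Type-I and Type-II settings. The rows are processed one at a time. In row $i$ there are $N_0=n-w$ candidate columns holding a $0$. Flipping column $j$ in row $i$ creates a duplicate only if some column $j'$ agrees with $j$ outside row $i$ and has a $1$ in row $i$. Since the columns are distinct, each bad $j$ determines a distinct such $j'$ among the $N_1=w$ columns holding a $1$. Hence there are at most $w<n-w$ bad choices, and a valid flip always exists. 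Applying the claim with $w=k$ and a Type-II matrix gives (b).

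For (d), the hypothesis $n\le 2k$, together with $n>k+1$ so that $k+1$ is a valid sample size, gives $n<2(k+1)$. By \cref{lemTypeMin,lemMinSymmetry}, $a_{\mathrm{II}}(n,k+1)=a_{\mathrm{I}}(n,k+1)=a_{\mathrm{I}}(n,k')$ with $k'=n-k-1$.
\begin{itemize}
\item If $n<2k$, then likewise $a_{\mathrm{II}}(n,k)=a_{\mathrm{I}}(n,n-k)=a_{\mathrm{I}}(n,k'+1)$. The Type-I version of the weight-increment claim applies because $n\geq 2k'+2$, and it gives $a_{\mathrm{I}}(n,k'+1)\leq a_{\mathrm{I}}(n,k')$, as required.
\item If $n=2k$, then $k'=k-1$, and \cref{lemTypeMin} gives $a_{\mathrm{II}}(2k,k)=a_{\mathrm{I}}(2k+1,k)$. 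So I must show $a_{\mathrm{I}}(2k+1,k)\leq a_{\mathrm{I}}(2k,k-1)$.
\end{itemize}

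The case $n=2k$ is the step I expect to be the main obstacle. The plan is as follows. Take $M$ with weight-$(k-1)$ rows and $2k$ distinct columns. If $M$ has no all-ones column, append an all-ones column: this yields weight-$k$ rows and $2k+1$ distinct columns. If $M$ does have an all-ones column $c$, first pick any row $i$ and move the $1$ of column $c$ in row $i$ to a column with a $0$ there. There are $k+1$ receivers and at most $k-2$ of them can conflict, using the same pairing argument as before. A receiver could become all-ones only if it coincided with the modified column $c$, which is exactly one of the excluded conflicts. So after this transfer no all-ones column remains, and appending an all-ones column finishes the proof.
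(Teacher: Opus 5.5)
Your arguments for (a), (b), (c), and for (d) when $n<2k$ are sound. For (c), going through \cref{lemTypeMin} instead of redoing the transfer is a legitimate shortcut.

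The gap is the case $n=2k$ of (d). When you move the $1$ of the all-ones column $c$ in row $i$ to a receiver, your pairing count controls only collisions involving the receiver. It ignores the donor. After the move, $c$ becomes $\mathbf{1}-e_i$, where $e_i$ is the $i$-th standard basis vector. If $\mathbf{1}-e_i$ is already a column of $M$, there are two possibilities:
\begin{itemize}
\item It is not the receiver, so $c$ now duplicates it.
\item It is the receiver, so it becomes $\mathbf{1}$. An all-ones column then survives, contrary to your claim, and appending another all-ones column creates a duplicate.
\end{itemize}

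This can happen in every row simultaneously, even for a minimal matrix. Take $k=11$, $m=5$, and let $M$ consist of all vectors of $\mathbb{F}_2^5$ of weight $0,1,2,4,5$. This gives $22=2k$ distinct columns, and every row has weight $1+4+4+1=10=k-1$. Also $m=5=\lceil\log_2 22\rceil$ is optimal by \cref{lemLogBound}. Yet $M$ contains $\mathbf{1}$ and all five vectors $\mathbf{1}-e_i$, so no choice of row or receiver makes your step work. The inequality $a_{\mathrm{I}}(2k+1,k)\le a_{\mathrm{I}}(2k,k-1)$ is still true here, but your construction does not prove it.

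The simplest repair is to drop the symmetry detour for (d) and argue directly. Mirror your proof of (b), but remove ones instead of adding them.
\begin{itemize}
\item Start from a Type-II matrix for $(n,k+1)$ with $a_{\mathrm{II}}(n,k+1)$ rows. Row by row, flip one $1$ to $0$.
\item In row $i$ there are $k+1$ candidates. Flipping column $j$ creates a duplicate only if some column with a $0$ in row $i$ agrees with $j$ elsewhere. By the same injectivity argument, this excludes at most $n-k-1$ candidates.
\item In addition, at most one candidate, the column $e_i$ if present, would become the forbidden zero column.
\end{itemize}
Since $k+1>(n-k-1)+1$ exactly when $n\le 2k$, a safe flip always exists. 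This handles $n<2k$ and $n=2k$ uniformly and gives $a_{\mathrm{II}}(n,k)\le a_{\mathrm{II}}(n,k+1)$.
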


\begin{proof}
    We prove each part by starting with a minimal draw matrix $M$ and modifying its rows. In each row, we must select a candidate column to modify without creating a duplicate. A conflict occurs only if modifying a candidate makes it identical to an existing column that differs only in that row. By comparing the number of available candidates ($N_{cand}$) and potential conflicts ($N_{conf}$), the Pigeonhole Principle guarantees a safe choice whenever $N_{cand} - N_{conf} \ge 1$. To additionally ensure we do not create the forbidden all-zero column, we require $N_{cand} - N_{conf} \ge 2$.
    \begin{enumerate}
        \item[(a)] This was given by  \cref{corMinMonotonicity}.
        
        \item[(b)] We flip a $0$ to a $1$ in each row of $M \in \mathbb{F}_2^{m \times n}$ (which has weight $k$). Here, $N_{cand} = n-k$ (number of $0$'s) and $N_{conf} = k$ (number of $1$'s). Since $n \ge 2k+1$, $N_{cand} - N_{conf} = n-2k \ge 1$. This ensures a safe $0$ to flip (and adding $1$'s cannot create an all-zero column). Thus, $a_{\mathrm{II}}(n, k+1) \le a_{\mathrm{II}}(n, k)$.
        
        \item[(c)] We append a column of zeros to $M \in \mathbb{F}_2^{m \times n}$ and transfer a $1$ to it in rows where a new non-zero target vector $v$ (which exists since $n < 2k-1 \le 2^m-1$) has a $1$. Here, $N_{cand} = k$ and $N_{conf} = n-k$. Since $n < 2k-1$, $N_{cand} - N_{conf} = 2k-n \ge 2$. This guarantees a safe $1$ to transfer while avoiding the all-zero column. Thus, $a_{\mathrm{II}}(n+1, k) \le a_{\mathrm{II}}(n, k)$.
        
        \item[(d)] We flip a $1$ to a $0$ in each row of $M \in \mathbb{F}_2^{m \times n}$ (which has weight $k+1$). Here, $N_{cand} = k+1$ and $N_{conf} = n-(k+1)$. Since $n < 2k+1$, $N_{cand} - N_{conf} = 2k-n+2 \ge 2$. This guarantees a safe $1$ to flip without creating an all-zero column. Thus, $a_{\mathrm{II}}(n, k) \le a_{\mathrm{II}}(n, k+1)$.
    \end{enumerate}
\end{proof}

\begin{lemma}
    \label{lemMinOrderingI}
    The inequalities in \cref{lemMinOrderingII} also hold for the Type-I minimum.
\end{lemma}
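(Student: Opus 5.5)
The plan is to prove each of the four inequalities (a)--(d) for $a_{\mathrm{I}}$ by the same row-modification argument used in the proof of \cref{lemMinOrderingII}. Start from a minimal Type-I draw matrix $M$. In each row, choose a candidate column to modify. A modification is unsafe only if the modified column becomes identical to an existing column that differs from it only in that row. Since Type-I no longer forbids an all-zero column, the pigeonhole condition needed in every case is only $N_{cand} - N_{conf} \geq 1$. This is one weaker than the Type-II requirement, so every counting step from \cref{lemMinOrderingII} carries over with room to spare.

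Concretely, the four parts go as follows.
\begin{itemize}
\item[(b)] For $n \geq 2k+1$, flip a $0$ to a $1$ in each row of $M \in \mathbb{F}_2^{m\times n}$ with weight-$k$ rows. Here $N_{cand} - N_{conf} = n - 2k \geq 1$, so $a_{\mathrm{I}}(n,k+1) \le a_{\mathrm{I}}(n,k)$.
\item[(d)] For $n < 2k+1$, flip a $1$ to a $0$ in each row of a weight-$(k+1)$ matrix. Here $N_{cand}-N_{conf} = 2k-n+2 \geq 1$, giving $a_{\mathrm{I}}(n,k) \le a_{\mathrm{I}}(n,k+1)$.
\item[(a)] For $n \ge 2k-1$, \cref{corMinMonotonicity} gives $a_{\mathrm{I}}(n+1,k) \ge a_{\mathrm{I}}(n,k)$. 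Alternatively, delete a column of a minimal matrix for $[n+1]$, choosing the all-zero column if there is one. Otherwise, first empty the last column using the transfer argument of \cref{lemTypeMin}, where $N_0 - N_1 = n-2k+2 \ge 1$, and then delete it.
\item[(c)] For $k<n<2k-1$, append a new column to a minimal $M$. The target vector $v$ is any vector of $\mathbb{F}_2^m$ that is not already a column; it exists since $2^m \ge 2k > n$ by \cref{lemLogBound_b}(b).
\end{itemize}

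Part (c) is where I expect the only real care to be needed. I would first try the simplest route. If $M$ has no all-zero column, append the zero column, which needs no transfers, so $a_{\mathrm{I}}(n+1,k)\le a_{\mathrm{I}}(n,k)$ is immediate. If $M$ already has a zero column, append a zero column and transfer $1$'s into it along a nonzero $v$ not among the columns. Each transfer happens in a row where the new column is still being built, and the donor count is $N_{cand} - N_{conf} = k-(n-k) = 2k-n \ge 2$. The new column itself never collides before completion, because until its last transfer its support is a proper subset of $v$'s. The potential subtlety is intermediate collisions of the new column with existing columns. This is avoided by checking distinctness only at the end, since the donor columns never acquire $v$'s pattern: changes occur only in rows of $\supp v$ and remove $1$'s there.

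An alternative cleaner route for (c) and (d) is to combine the symmetry \cref{lemMinSymmetry} with the already-settled regime $n \ge 2k$. The (c) and (d) inequalities for $a_{\mathrm{I}}$ become (a) and (b)-type statements for complementary $k$. I would use this as a cross-check, and fall back on it if the direct transfer bookkeeping becomes messy.
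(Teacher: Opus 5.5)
Your proposal is correct and is essentially the paper's proof, which simply reruns the modification arguments of \cref{lemMinOrderingII} with the weaker requirement $N_{cand}-N_{conf}\ge 1$ (and \cref{corMinMonotonicity} for (a)). Your extra care in (c) fills in details the paper leaves implicit: you append a zero column outright when $M$ has none, and otherwise note that every donor ends with a $0$ in some row of $\supp v$, so only final distinctness matters.

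One caution about your fallback: symmetry does reduce (d) to (b), but it turns (c) into $a_{\mathrm{I}}(n+1,j+1)\le a_{\mathrm{I}}(n,j)$ with $j=n-k$, and no combination of (a) and (b) yields that. Your direct argument does not need the fallback. If you want a genuine shortcut for (c), note that $n,n+1\in(k,2k)$, so \cref{lemTypeMin} gives $a_{\mathrm{I}}=a_{\mathrm{II}}$ at both, and (c) is inherited directly from \cref{lemMinOrderingII}.
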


\begin{proof}
    The results follow by utilizing the same modification strategies established in the proof of \cref{lemMinOrderingII}. Because the Type-I setting permits the all-zero column, distinctness is the only constraint, which requires $N_{cand} - N_{conf} \ge 1$. This condition is strictly satisfied across all four cases. 
\end{proof}
\begin{corollary}
    \label{corMinOptimization}
    For positive integers $n,k$ with $n > k$:
    \begin{align*}
        \min\limits_{n} a_\mathrm{I}(n,k) &= a_\mathrm{I} (2k-1,k) = \ceil{\log_2 2k}, \\ 
        \min\limits_{n} a_\mathrm{II}(n,k) &= a_\mathrm{II}(2k-1,k) = \ceil{\log_2 2k},\\
        \min\limits_k a_\mathrm{I}(n,k) &= a_\mathrm{I}\left(n, \ceil{\frac{n}{2}}\right) = \ceil{\log_2 n},\\
        \min\limits_k a_\mathrm{II}(n,k) &= a_\mathrm{II}\left(n, \ceil{\frac{n}{2}}\right) = \ceil{\log_2 (n+1)}.
    \end{align*}

\end{corollary}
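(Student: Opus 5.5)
Each of the four identities reduces to the monotonicity statements in \cref{lemMinOrderingII,lemMinOrderingI}, combined with the lower bounds and constructions of \cref{lemLogBound}. In every case the minimizing argument is obtained by walking along the chain of inequalities, and the value is certified by a matching lower and upper bound.

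\textbf{Minimizing over $n$ with $k$ fixed, $n>k$.} Parts (a) and (c) of \cref{lemMinOrderingII} (and their Type-I versions from \cref{lemMinOrderingI}) show that $a(n,k)$ is nonincreasing in $n$ for $k<n<2k-1$, i.e. up to $n=2k-1$, and nondecreasing for $n\ge 2k-1$. Hence the minimum over $n$ is attained at $n=2k-1$; when $k=1$ this degenerates, but $n=2k-1=1$ is excluded by $n>k$, so we check $k=1$ directly, where $a(n,1)\ge \ceil{\log_2 n}$. It remains to evaluate $a(2k-1,k)$. \cref{lemLogBound}(c) gives $a_\mathrm{II}(2k-1,k)=\ceil{\log_2 2k}$. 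For Type-I, \cref{lemLogBound}(d) gives $a_\mathrm{I}(2k-1,k)=a_\mathrm{I}(2k,k)=\ceil{\log_2 2k}$ when $k>1$. Alternatively, $a_\mathrm{I}\le a_\mathrm{II}$ together with the lower bound (b) gives the same conclusion directly. The $k=1$ case can be handled by a short separate remark or absorbed by reading the statement with $n\ge 2$.

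\textbf{Minimizing over $k$ with $n$ fixed.} Parts (b) and (d) show that $a(n,k)$ is nonincreasing in $k$ while $k\le \ceil{n/2}-1$, i.e. until $k=\ceil{n/2}$, and nondecreasing once $k\ge\ceil{n/2}$. So the minimum is at $k=\ceil{n/2}$. For the value, the lower bound $\ceil{\log_2 n}$ (resp. $\ceil{\log_2(n+1)}$) comes from \cref{lemLogBound}(a). For the upper bound, I would note that $\ceil{\log_2 2k}$ with $2k\in\{n,n+1\}$ equals $\ceil{\log_2 n}$ or $\ceil{\log_2(n+1)}$, and split by parity.
\begin{itemize}
\item If $n=2k$ is even, Type-I is exactly \cref{lemLogBound}(c). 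For Type-II, I would use $n+1=2k+1$ and the construction of (c) with the all-ones column, the $k-1$ complementary pairs, and one extra nonzero unused column. Instead of fiddling with that, it is cleaner to invoke \cref{lemTypeMin}: $a_\mathrm{II}(2k,k)=a_\mathrm{I}(2k+1,k)$.
\item If $n=2k-1$ is odd, both settings are handled by (c) and (d), since $\ceil{\log_2 2k}=\ceil{\log_2(2k-1)}=\ceil{\log_2 2k}$, and $\ceil{\log_2(n+1)}=\ceil{\log_2 2k}$.
\end{itemize}

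The main obstacle is the even case for Type-II, i.e. showing $a_\mathrm{II}(2k,k)\le\ceil{\log_2(2k+1)}$. When $k$ is not a power of $2$ this follows from (c), because $\ceil{\log_2 2k}=\ceil{\log_2(2k+1)}$. When $k=2^{t}$, I need a construction with $t+2$ rows. I would take the matrix from (c) for $[2k]$ on $t+1$ rows, which has an all-zero column, and append one weight-$k$ row with a $1$ in the zero column, exactly as in the last step of the proof of \cref{lemTypeMin}. This gives $t+2=\ceil{\log_2(2k+1)}$ rows, with all columns distinct and nonzero.
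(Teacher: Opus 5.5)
Your proposal is correct and is essentially the paper's argument. \cref{lemMinOrderingII,lemMinOrderingI} locate the minimizer, \cref{lemLogBound} supplies the matching lower bound and construction, and your appended-row construction (the last step of the proof of \cref{lemTypeMin}) correctly handles the Type-II case $n=2k$, $k=2^t$, which the paper's one-line proof leaves implicit.

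One small correction concerns $k=1$. There the first two identities are degenerate, since $a(2k-1,k)=a(1,1)$ lies outside $n>k$. The Type-II one actually fails: $a_{\mathrm{II}}(n,1)=n$ has minimum $2\neq\ceil{\log_2 2}$. So that case must be excluded (take $k\ge 2$) rather than ``checked directly''.
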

\begin{proof}
    Immediate from \cref{lemLogBound_b,lemMinOrderingII,lemMinOrderingI}.
\end{proof}

\begin{outline}
    \subsubsection{Subadditivity and Subhomogeneity}
\end{outline}

\begin{lemma}
    \label{lemma_subhom}
    For $n_1,n_2,k,p \in \mathbb{Z}^+$ such that $n_1,n_2\geq 2k -1$:
    \begin{enumerate}
    \item [(1)] $a_{\mathrm{II}}(n_1+n_2, k) \leq a_\mathrm{II}(n_1,k) + a_\mathrm{II}(n_2,k)$, and 
    \item [(2)] $a_{\mathrm{II}}(pn_1, k) \leq p \; a_{\mathrm{II}}(n_1,k)$.
    \end{enumerate}
    
\end{lemma}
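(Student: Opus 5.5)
The plan is a block-diagonal construction. Part (2) will then follow from part (1) by induction on $p$.

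For (1), let $m_1 = a_\mathrm{II}(n_1,k)$ and $m_2 = a_\mathrm{II}(n_2,k)$. Take draw matrices $M_1 \in \mathbb{F}_2^{m_1\times n_1}$ and $M_2 \in \mathbb{F}_2^{m_2\times n_2}$. Each has weight-$k$ rows and nonzero, distinct columns, i.e., each corresponds to a separating and covering system. Form
\begin{equation*}
M = \begin{bmatrix} M_1 & 0 \\ 0 & M_2 \end{bmatrix} \in \mathbb{F}_2^{(m_1+m_2)\times(n_1+n_2)},
\end{equation*}
so that coupons $1,\dots,n_1$ are sampled by the first $m_1$ subsets and coupons $n_1+1,\dots,n_1+n_2$ by the last $m_2$. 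We then check three properties.
\begin{itemize}
\item Every row of $M$ still has weight $k$, since each row is a row of $M_1$ or $M_2$ padded with zeros.
\item Every column is nonzero, because its restriction to its own block is a nonzero column of $M_1$ or $M_2$.
\item Columns are distinct. Two columns from the same block differ within that block. A column from the first block is nonzero on the first $m_1$ coordinates, while any column from the second block is zero there, so they differ as well.
\end{itemize}
Hence $M$ corresponds to a separating and covering system for $[n_1+n_2]$. By \cref{thm_deducing} (and \cref{remark_deducing}), this gives $a_\mathrm{II}(n_1+n_2,k) \le m_1+m_2$.

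The one technical point is the role of the hypothesis $n_1,n_2 \ge 2k-1$. In the construction itself it only matters that $n_i > k$, so that $a_\mathrm{II}(n_i,k)$ is defined. I would state explicitly that the argument uses only the existence of Type-II draw matrices of the given sizes, which holds whenever $n_i > k$. The stronger hypothesis is presumably kept for consistency with \cref{lemTypeMin}, e.g.\ to convert to a Type-I statement via $a_\mathrm{II}(n,k)=a_\mathrm{I}(n+1,k)$. The nonzero-column property is essential: for Type-I, two all-zero columns could collide, which is why the lemma is stated for Type-II.

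For (2), we induct on $p$; the case $p=1$ is trivial. Assume (2) holds for $p-1$. Since $(p-1)n_1 \ge n_1 \ge 2k-1$, part (1) applies with $n_1$ replaced by $(p-1)n_1$ and $n_2$ replaced by $n_1$. This gives
\begin{equation*}
a_\mathrm{II}(pn_1,k) \le a_\mathrm{II}((p-1)n_1,k) + a_\mathrm{II}(n_1,k) \le p\, a_\mathrm{II}(n_1,k).
\end{equation*}
No step is a serious obstacle; the only care needed is the distinctness check across blocks, which relies on the covering property.
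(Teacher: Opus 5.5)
Your proposal is correct and matches the paper's proof: both use the same block-diagonal matrix, check weight-$k$ rows and distinct nonzero columns (relying on the covering property for cross-block distinctness), and obtain (2) by repeated application of (1). Your side remark is also accurate: the construction itself needs only $n_i > k$, not the stated $n_i \ge 2k-1$.
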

\begin{proof}
    Let $m_i = a_\mathrm{II}(n_i,k)$ and $M_i \in \mathbb{F}_2^{m_i \times n_i}$ correspond to a separating and covering system for $[n_i]$ for $i \in \{1,2\}$, and define $M\in \mathbb{F}_2^{(m_1+m_2)\times (n_1 + n_2)}$ as:
    \begin{equation*}
        M = \begin{bmatrix}
            M_1 & 0 \\ 0 & M_2
        \end{bmatrix}.
    \end{equation*}
    Then, observe that $M$ will have weight-$k$ rows and distinct nonzero columns (since neither $M_1$ nor $M_2$ contains all-zero columns), so subadditivity in (1) holds. Subhomogeneity in (2) is obtained via repeated applications of subadditivity.
\end{proof}
\begin{remark}
    From \cref{lemma_subhom}, one can immediately derive upper bounds on the minimum when $k$ is fixed. In particular, this provides a simple method for constructing a separating system by combining multiple smaller separating systems, with the tradeoff being that the combined separating system is not necessarily minimal. 
\end{remark}

\subsection{Full Characterization}
We begin with a lower bound that is useful when $n$ is much larger than $k$.

\begin{outline}
    \subsubsection{Efficiency Bound}
\end{outline}
\begin{lemma} 
\label{lemma_fixedk}
$a_\mathrm{I}(n,k) \geq \ceil{\frac{2n-2}{k+1}}$ and $a_{\mathrm{II}}(n,k) \geq \ceil{\frac{2n}{k+1}}$.
\end{lemma}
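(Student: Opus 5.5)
We plan to prove both bounds by double counting the ones in the draw matrix. Let $\{S_i\}_{i=1}^m$ be a separating system for $[n]$ with draw matrix $M\in\mathbb{F}_2^{m\times n}$, whose rows have weight $k$ and whose columns are distinct. Counting ones by rows gives $\sum_{j=1}^n |I(j)| = mk$. We then bound the same sum from below using the column weights. Since the columns are distinct, at most one column has weight $0$, and each weight-$1$ column is a standard basis vector $e_i$, so at most one weight-$1$ column sits in each row.

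The key step is to sharpen this by looking at which rows hold weight-$1$ columns. Let $t$ be the number of weight-$1$ columns, and let $z\in\{0,1\}$ indicate whether an all-zero column is present. The remaining $n-t-z$ columns have weight at least $2$, so
\begin{equation*}
mk \;\geq\; t + 2(n-t-z) \;=\; 2n - t - 2z .
\end{equation*}
Using only $t\le m$ would give $m(k+1)\ge 2n-2z$. This already yields the Type-II bound $a_\mathrm{II}(n,k)\ge \ceil{\frac{2n}{k+1}}$, because covering forces $z=0$.

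For Type-I we take $z=1$ and obtain $m(k+1)\ge 2n-2$, hence $a_\mathrm{I}(n,k)\ge\ceil{\frac{2n-2}{k+1}}$. So the refinement is not actually needed, and both bounds follow directly from $t\le m$ together with the at-most-one-zero-column observation. Finally we take $m$ to be minimal, i.e.\ $m=a_\mathrm{I}(n,k)$ or $a_\mathrm{II}(n,k)$, and use the integrality of $m$ to obtain the ceilings.

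The only delicate point is the bookkeeping. We must justify that distinctness caps the weight-$0$ columns at one and the weight-$1$ columns at $m$, with at most one per row. We must also make sure the inequality $mk\ge t+2(n-t-z)$ is applied with $t\le m$ in the correct direction. It is: the right-hand side $2n-t-2z$ decreases as $t$ grows, so the worst case is $t=m$, which gives $mk\ge 2n-m-2z$.
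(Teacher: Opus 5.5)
Your proposal is correct and takes essentially the same route as the paper. The paper also double counts the ones in $M$ to get $mk \ge A_1 + 2(n-A_0-A_1)$, i.e.\ $2n \le mk + 2A_0 + A_1$, and then uses $A_0\le 1$, $A_1\le m$ for Type-I and $A_0=0$, $A_1\le m$ for Type-II, which is exactly your argument with $z$ and $t$ (the ``sharpening'' you mention in passing is unnecessary and can be dropped).
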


\begin{proof}
    Consider any separating system $\{S_i\}_{i=1}^m$ for some $m \in \mathbb{Z}^+$ corresponding to draw matrix $M \in \mathbb{F}_2^{m \times n}$. For $w\in \{0,\dots,m\}$, let $A_w$ be the number of weight-$w$ columns, so $\sum\limits_{w=0}^m A_w = n$ and $A_w \leq \binom{m}{w}$ (since columns are distinct). Then, the total number of ones in $M$ is $$mk = \sum\limits_{w=0}^m A_w w = A_1 + \sum\limits_{w=2}^m A_w w \geq A_1 + 2(n-A_0-A_1),$$ implying  $2n \leq mk + 2A_0 + A_1$. 

   The bound for Type-I follows from $A_0 \leq 1, A_1 \leq m$, and for Type-II from  $A_0 = 0, A_1 \leq m$.
\end{proof}
\begin{remark}
    The bound in Lemma~\ref{lemma_fixedk}  is tight when $n \geq \binom{k+1}{2}$ (see \cref{lemMinLargen}), whereas the bound in \cref{lemLogBound} is loose in this regime, as  illustrated by \cref{ex: opt min}. 
\end{remark}

\begin{example}\label{ex: opt min}
    Consider the Type-II $(15,5)$-LCCP and subsets $\{S_i\}_{i=1}^5 \subseteq [15]$ shown in \autoref{fig:const_min}, where row $i$ corresponds to set $S_i$ and column $j$ corresponds to coupon $j$. It can be seen that $15 = \binom{5+1}{2}$, and that $\{S_i\}_{i=1}^5$ separates and covers $[15]$, achieving the lower bound  $\frac{2n}{k+1} = 5$. 
\end{example}
\begin{figure}[ht]
   \centering
\resizebox{0.8\linewidth}{!}{%
    \renewcommand{\arraystretch}{1.5}
    \begin{NiceTabular}{*{15}{C}}[first-row, first-col, hvlines]
    
    & 1 & 2 & 3 & 4 & 5 & 6 & 7 & 8 & 9 & 10 & 11 & 12 & 13 & 14 & 15 \\ 

    $S_1$ & \clg{1} & \clg{1} & \clg{1} & \clg{1} & \clg{1} & & & & & & & & & & \\ 
    $S_2$ & \clg{1} & & & & & \clg{1} & \clg{1} & \clg{1} & \clg{1} & & & & & & \\ 
    $S_3$ & & \clg{1} & & & & \clg{1} & & & & \clg{1} & \clg{1} & \clg{1} & & & \\ 
    $S_4$ & & & \clg{1} & & & & \clg{1} & & & \clg{1} & & & \clg{1} & \clg{1} & \\ 
    $S_5$ & & & & \clg{1} & & & & \clg{1} & & & \clg{1} & & \clg{1} & & \clg{1} \\ 

    \end{NiceTabular}%
}
    \caption{Visualization of a separating and covering system achieving the minimum of 5 samples for the Type-II $(15,5)$-LCCP. } 
    \label{fig:const_min}
\end{figure}

The proof of \cref{lemma_fixedk} suggests that it may be useful to consider the dual problem: namely, given a fixed number of samples $m$ and drawing size $k$, what is the \textit{maximum} number of coupons whose labels can be determined in $m$ samples? This can be achieved by using low column weights in the draw matrix, such as in \cref{ex: opt min} where each column has a weight of $1$ or $2$. The analysis of such matrices is facilitated by introducing the following definitions.

\begin{definition}
   Denote the volume of a Hamming ball in $\mathbb{F}_2^m$ with radius $r$ as $\Vol_2(r,m) = \sum\limits_{\ell=0}^r \binom{m}{\ell}$. For $n \in \mathbb{Z}^+$ such that $n \leq 2^m$, define its \textit{minimum Hamming radius} $r_m(n)$ as the smallest value of $r$ such that $\Vol_2(r,m) \geq n$. In other words, 
   \begin{equation}
        \label{eqn:HammingRadius}
       r_m(n) = \min\{ r : \Vol_2(r,m) \geq n\}.
   \end{equation}
\end{definition}

\begin{definition}
    Given $m,w,k,n \in \mathbb{Z}^+$ such that $m \geq w, n \geq k$, we say that a binary matrix $M\in \mathbb{F}_2^{m\times n}$ is $(w,k)$-\textbf{almost regular} if every column has weight $w$ and every row has weight either $k$ or $k-1$.  
\end{definition}

\begin{remark}
    By dividing both sides of the inequality condition in \cref{eqn:HammingRadius} by $2^m$, we obtain the inequality $\frac{\Vol_2(r,m)}{2^m} \geq \frac{n}{2^m}$, where $\frac{n}{2^m}\in (0,1)$. The left-hand side can be written as $\Pr(X \leq r) \geq \frac{n}{2^m}$, where $X$ is binomially distributed with parameters $m$ and $1/2$. Thus, $r_m(n)$ is the inverse binomial cdf of $\frac{n}{2^m}$ with respect to parameters $m$ and $1/2$.
\end{remark} 

\begin{outline}
    \subsubsection{Almost Regular Construction}
\end{outline}

A useful result is that for certain values of $m,w,k,n$, we can guarantee the existence of an almost-regular matrix via construction.

\begin{proposition}
\label{prop_almost}
    Given $m,w,k,n\in \mathbb{Z}^+$ such that $m \geq w, k \leq \binom{m-1}{w-1}, n = \floor{\frac{mk}{w}}$, then there exists a $(w,k)$-almost regular matrix $M \in \mathbb{F}_2^{m\times n}$ such that the columns of $M$ are distinct, and more than $m-w$ rows have weight $k$.
\end{proposition}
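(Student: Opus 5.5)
We will build $M$ column by column, choosing $n=\floor{\frac{mk}{w}}$ distinct weight-$w$ vectors in $\mathbb{F}_2^m$ so that the row weights stay as balanced as possible. Concretely, we keep a current row-weight vector $(d_1,\dots,d_m)$, starting at zero. At each step we add a new column whose support consists of $w$ rows of currently minimum weight (a greedy ``water-filling'' rule), chosen among weight-$w$ vectors not yet used. If this can always be done, then after $t$ steps all row weights differ by at most one. Since the total number of ones after $n$ steps is $nw$, with $mk-w < nw \le mk$, the row weights are then all $k$ or $k-1$. The number of rows of weight $k$ is $nw-m(k-1)$, and this is greater than $mk-w-mk+m = m-w$. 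That gives both the almost-regular property and the claim that more than $m-w$ rows have weight $k$.

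Balance is maintained by an invariant. At any stage, let $\ell$ be the minimum row weight, so every row has weight $\ell$ or $\ell+1$. Let $R$ be the set of rows with weight $\ell$. If $|R|\ge w$, we choose $w$ rows from $R$. If $|R|<w$, we take all of $R$ together with $w-|R|$ rows of weight $\ell+1$. In the latter case, after the step all rows have weight $\ell+1$ or $\ell+2$, so the invariant is preserved.

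The main obstacle is distinctness: we must show that some admissible support has not already been used. I would handle this with a cyclic, explicit schedule rather than a bare greedy choice. Order all $\binom{m}{w}$ weight-$w$ vectors so that consecutive blocks cover rows evenly. For instance, partition them into orbits under the cyclic shift of coordinates. The vectors in one orbit of size $s$ cover each row exactly $sw/m$ times, a uniform load. If we take whole orbits, every row receives the same weight. For the final partial orbit we take consecutive shifts of a base vector, whose running row counts differ by at most one when the base vector is chosen well. A cleaner alternative is a counting argument inside the greedy step. The row weight $d_i$ counts the used vectors containing row $i$, and $d_i\le k-1$ before the last layer is filled. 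Under the hypothesis $k\le\binom{m-1}{w-1}$, the number of weight-$w$ vectors through any fixed row is at least $k$. So there is room, but a complete argument must show that the specific candidate supports are not all exhausted.

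I would first try the following argument. Suppose every admissible choice, that is, every $w$-set consisting of $R$ plus $w-|R|$ rows of weight $\ell+1$, or every $w$-subset of $R$, has already been used. Then double counting the used vectors passing through a row $i\in R$ gives at least $\binom{|R|-1}{w-1}$ (or the corresponding mixed count) vectors through that row. We would compare this lower bound with $d_i=\ell\le k-1\le\binom{m-1}{w-1}-1$ and aim for a contradiction. The difficulty is that $R$ may be small, so the local count does not immediately exceed $\ell$. To fix this, I would use a swap (augmenting) argument: if the desired new column $v$ already exists, exchange one $1$ between $v$ and an existing column, as in the transfer arguments in the proof of \cref{lemTypeMin}. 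This yields a fresh vector while preserving the row weights, with a counting inequality analogous to $N_{cand}-N_{conf}\ge1$ guaranteeing that a safe exchange exists.
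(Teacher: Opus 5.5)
Your reduction is correct. Once you have $n=\floor{mk/w}$ distinct weight-$w$ columns whose row weights differ by at most one, the chain $m(k-1)\le mk-w<nw\le mk$ forces every row weight into $\{k-1,k\}$. It also gives $nw-m(k-1)>m-w$ rows of weight $k$. But the existence of such columns is the entire content of the proposition, and none of your three routes establishes it, so there is a genuine gap.

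\textbf{The greedy and swap routes.} The bare greedy genuinely fails. Take $m=5$, $w=2$, $k=2$, so $n=5$ and $k\le\binom{4}{1}$. The supports $\{4,5\},\{1,2\},\{1,3\},\{2,3\}$ are each legal water-filling choices. Yet they leave row weights $(2,2,2,1,1)$, where the only admissible support is $\{4,5\}$, which is already used. So the configuration you hope to rule out by double counting actually occurs, and no argument confined to the current greedy step can succeed. The swap repair is not yet an argument either. In the transfer of Lemma~\ref{lemTypeMin}, a single column changes in a single row, and the conflicts are the columns differing from it only there. Here both the new column and a compensating column change, and both must avoid every used vector. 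When $k$ is close to $\binom{m-1}{w-1}$, almost all weight-$w$ vectors are already used, so no analogue of $N_{cand}-N_{conf}\ge 1$ is visible.

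\textbf{The cyclic-orbit route.} This has the right architecture: a whole orbit has uniform row load, and uniform plus balanced is balanced. It is also the route the paper takes. But its crux, ``consecutive shifts of a base vector \ldots{} when the base vector is chosen well,'' is exactly what needs proof, and with single-position shifts it is false. Take $m=6$, $w=3$, $k=2$, so you need $4$ columns with every row sum equal to $2$. The only orbit of load at most $2$ is $\{101010,010101\}$. Whether or not you first take it whole, a run of consecutive single shifts with constant loads forces the base vector to alternate, i.e., to lie in that two-element orbit. So no base vector supplies the missing distinct columns.

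\textbf{What the paper does instead.} The missing idea is to shift $v=1^w0^{m-w}$ by $w$ positions at a time. Then the supports are consecutive blocks $\{iw,\dots,iw+w-1\}\bmod m$ tiling $\mathbb{Z}_m$, and the running loads differ by at most one. These shifts stay distinct only for $m/g$ steps, where $g=\gcd(m,w)$, which corresponds to load $w/g$. So the paper splits the orbit of $v$ into $g$ translated suborbits of load $w/g$ each. It takes $q$ of them whole plus a chained partial one; in the example above, $\{111000,000111,011100,100011\}$ works. You also need to say how the preliminary whole orbits are chosen. Each orbit has load at most $w$, and the orbits other than that of $v$ have total load $\binom{m-1}{w-1}-w\ge k-w$. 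Adding them one at a time therefore reaches a load $k'$ with $0\le k-k'\le w$. The case $k=\binom{m-1}{w-1}$ is handled separately by taking all weight-$w$ vectors.
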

\begin{proof}
    See Appendix.
\end{proof}

\begin{outline}
    \subsubsection{Thresholds on Max Coupons with Fixed Number of Samples}
\end{outline}
\begin{lemma}
\label{lemThresholds}
For a fixed $k \in \mathbb{Z}^+$, let $m$ be any integer such that $m \geq \ceil{\log_2 2k}$. Define:
\begin{equation}
    \hat{k}_m  = \Vol_2 (r_{m-1}(k) - 1, m-1),
\end{equation}
\begin{equation}
\label{eqn:n_m}
    n_m = \Vol_2 (r_{m-1}(k), m)- 1 + \floor{\frac{m(k-\hat{k}_m)}{r_{m-1}(k) + 1}}.
\end{equation}
Then the sequence $\{n_m\}$ is strictly increasing with $m$. Furthermore, $n_m \geq 2k-1$ and $a_\mathrm{I}(n_m + 1,k) = a_\mathrm{II}(n_m, k) = m$.
\end{lemma}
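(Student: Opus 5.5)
The plan is to show that $n_m$ is exactly the maximum number of coupons that admit a separating and covering system of $m$ samples of size $k$. Equivalently, $n_m$ is the largest $n$ for which some $M\in\mathbb{F}_2^{m\times n}$ has weight-$k$ rows and distinct nonzero columns. All claims then follow from this characterization together with \cref{lemTypeMin}. Write $r=r_{m-1}(k)$. Minimality of $r$ gives $\hat k_m=\Vol_2(r-1,m-1)<k\le \Vol_2(r,m-1)$, so $0< k-\hat k_m\le \binom{m-1}{r}$. I will also use the identity $\sum_{w=1}^{r} w\binom{m}{w}=m\sum_{w=1}^{r}\binom{m-1}{w-1}=m\hat k_m$.

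\emph{Converse (upper bound).} Let $M$ have $m$ weight-$k$ rows and $n$ distinct nonzero columns, and let $A_w$ be the number of weight-$w$ columns, as in the proof of \cref{lemma_fixedk}. Then $mk=\sum_w wA_w\ge \sum_{w\le r}wA_w+(r+1)\sum_{w>r}A_w$, and hence
\[
n\le \frac{mk}{r+1}+\sum_{w=1}^{r}A_w\Bigl(1-\frac{w}{r+1}\Bigr).
\]
The right-hand side is increasing in each $A_w$ with $w\le r$. Since $A_w\le\binom{m}{w}$, it is maximized at $A_w=\binom mw$. Substituting and using the identity above gives $n\le \Vol_2(r,m)-1+\frac{m(k-\hat k_m)}{r+1}$. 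Integrality then yields $n\le n_m$.

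\emph{Achievability.} First take all $\Vol_2(r,m)-1$ nonzero columns of weight at most $r$. Each row then has weight exactly $\hat k_m$. Next apply \cref{prop_almost} with $w=r+1$ and $k'=k-\hat k_m\le\binom{m-1}{r}$. This gives $\floor{m k'/(r+1)}$ further distinct columns of weight $r+1$, with every row of weight $k'$ or $k'-1$, and fewer than $r+1$ rows of the deficient weight $k'-1$. To fix the deficient rows, I pick one weight-$(r+1)$ column that has zeros in all deficient rows and set those entries to $1$. Such a column should exist by counting, since there are few deficient rows; this selection is the step I expect to need the most care. The modified column then has weight greater than $r+1$, while every other column has weight at most $r+1$, so all columns stay distinct and nonzero. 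The result has $n_m$ columns and weight-$k$ rows, so $a_\mathrm{II}(n_m,k)\le m$, and the converse gives $a_\mathrm{II}(n_m+1,k)>m$.

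\emph{Monotonicity, the bound $n_m\ge 2k-1$, and conclusion.} For the base case $m=\ceil{\log_2 2k}$, \cref{lemLogBound}(c) gives a separating and covering system of $[2k-1]$ with $m$ samples, so the maximum satisfies $n_m\ge 2k-1$. For the inductive step, take an optimal $(m-1)$-row system on $n_{m-1}\ge 2k-1$ coupons. Append a new row of weight $k$ that puts a $1$ on a new coupon, whose column is $e_m$, and on $k-1$ existing coupons. Old columns remain distinct, since the appended entries only separate them further. Modified old columns differ from $e_m$ because they are nonzero in the first $m-1$ rows. So we get $m$ rows and $n_{m-1}+1$ coupons, and the converse gives $n_m\ge n_{m-1}+1$. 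Thus $\{n_m\}$ is strictly increasing and $n_m\ge 2k-1$ for all $m$.

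Finally, $a_\mathrm{II}(n_m,k)\ge m$. For $m=\ceil{\log_2 2k}$ this is \cref{lemLogBound}(b). For larger $m$, any $(m-1)$-sample system covers at most $n_{m-1}<n_m$ coupons. Hence $a_\mathrm{II}(n_m,k)=m$, and since $n_m\ge 2k-1$, \cref{lemTypeMin} gives $a_\mathrm{I}(n_m+1,k)=a_\mathrm{II}(n_m,k)=m$.
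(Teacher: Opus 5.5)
Your converse bound $n\le \frac{mk}{r+1}+\sum_{w\le r}A_w\bigl(1-\frac{w}{r+1}\bigr)$ is correct. It is a more complete proof that the greedy profile is optimal than the paper's no-swap reasoning. Appending a row to get $n_m\ge n_{m-1}+1$ is also a valid alternative to the paper's weight-budget argument. However, both the strict monotonicity and $a_{\mathrm{II}}(n_m,k)\le m$ rest on achievability, and there your deficiency-correction step has a genuine gap. A weight-$(r+1)$ column of the almost-regular block with zeros in all $d$ deficient rows need not exist. Counting produces one only when $\lfloor mk'/(r+1)\rfloor>d(k'-1)$. That suffices when $k'=1$, since the deficient rows are then empty in the block. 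It also suffices when $d=1$, since $m\ge r+2$ gives at least $k'$ columns against $k'-1$ ones in the deficient row. It does not suffice for $d\ge 2$. For instance, take $m=5$ and $k=9$, so $r=2$, $\hat k_m=5$, $k'=4$, there are six weight-$3$ columns, and $d=2$. Consider the block whose columns have supports $\{1,3,4\},\{1,3,5\},\{1,4,5\},\{2,3,4\},\{2,3,5\},\{2,4,5\}$. It meets every guarantee of \cref{prop_almost}: the columns are distinct of weight $3$, rows $3,4,5$ have weight $4$, rows $1,2$ have weight $3$, and $3>m-w$ rows are full. Yet every column has a $1$ in row $1$ or row $2$. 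Since you use \cref{prop_almost} only as a black box, your selection can fail.

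The repair is the paper's choice of column. When $d\ge 2$, take a column of weight $r+2-d\le r$ from the complete layers whose support avoids the deficient rows, and flip its $d$ zeros in those rows. Such a column exists because $m-d\ge r+2-d$, using $r\le m-2$ whenever $d>0$ (if $r=m-1$ then $k=2^{m-1}$, $k'=1$ and $d=0$). The new column has weight $r+2$, which no other column has. So the columns stay distinct and nonzero while every row reaches weight $k$. In the example, turning $\{3,4\}$ into $\{1,2,3,4\}$ gives a valid $5\times 21$ matrix, matching $n_m=21$. For $d=1$ this coincides with your choice. With this change, the rest of your argument goes through.
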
 

\begin{proof}
    We prove the result for $a_{\mathrm{II}}(n_m,k)$. The result for $a_{\mathrm{I}}(n_m+1,k)$ will follow immediately from \cref{lemTypeMin}, since it will be shown that $n_m \geq 2k-1$. 

    First, we present the following necessary condition for the existence of a separating and covering system. With sampling of subsets of size $k$, we claim there exists $M \in \mathbb{F}_2^{m\times n}$ corresponding to a separating and covering system for $[n]$ only if there exist integers $A_1,\dots,A_m \in \mathbb{Z}^+$ such that  
    \begin{align}
    \label{eqn_valid}
         n& = \sum\limits_{w=1}^m A_w\nonumber,\\
        mk &= \sum\limits_{w=1}^m w A_w \nonumber,\\ 
        A_w &\leq \binom{m}{w} \; \forall \; w=1,\dots,m.
    \end{align}
    In other words, to prove $a_{\mathrm{II}}(n_m, k) = m$, we must show that a valid draw matrix for $n_m$ coupons exists when the number of rows is $m$, but no such matrix exists when the number of rows is $m-1$. Thus, we evaluate the system of equations \eqref{eqn_valid} specifically for $n = n_m$. We will demonstrate this after establishing that a solution exists for $n = n_m$ with $m$ rows, and providing an explicit construction for a separating and covering system.

    Set $r = r_{m-1}(k)$, and consider the following maximization problem:
    \begin{equation}
    \label{eqn_max}
    \max\limits_{\substack{ \sum\limits_{w=1}^m wA_w \leq mk \\A_w \leq \binom{m}{w} \; \forall w = 1,\dots m}} \; \sum\limits_{w=1}^m A_w. 
    \end{equation}
    
    Expressed in terms of the well-known knapsack problem, the goal here is to maximize the number of items carried, given a total weight limit of $mk$ and a supply of $\binom{m}{w}$ items of weight $w$, for $w = 1,\dots,m$. Note that the maximum of~\eqref{eqn_max} is an upper bound to the maximum value of $n$ where a solution to $\eqref{eqn_valid}$ exists, though we later observe that these two quantities are equal. Observe that one maximizer for \eqref{eqn_max} is:
    \begin{equation}
    \label{eqn_maximizer}
        A_w = \begin{cases}
            \binom{m}{w}, & 1 \leq w \leq r  \\
             \floor{\frac{m(k- \hat{k}_m)}{r+1}}, & w = r+1\\
             0, & \mathrm{otherwise}.
        \end{cases}
    \end{equation}
    
    Indeed, since $w \binom{m}{w} = m\binom{m-1}{w-1}$, then 
    \begin{align*}
        \sum\limits_{w=1}^m wA_w = \sum\limits_{w=1}^r w \binom{m}{w} + (r+1)A_{r+1} \\
        = \sum\limits_{w=1}^r m\binom{m-1}{w-1} + (r+1)A_{r+1} = m \hat{k}_m + (r+1)A_{r+1}.
    \end{align*}
    The value $d = mk - \sum_{w=1}^m wA_w$ is exactly the remainder of dividing the leftover capacity $m(k-\hat{k}_m)$ by $(r+1)$, meaning $d \le r$. No more items can be added because the remaining capacity is $d \le r$, and the supply of all items with weight $\le r$ is already fully exhausted. Furthermore, no item can be replaced by two other items to increase the total count: removing our heaviest item ($r+1$) and adding our maximum leftover budget ($r$) yields a maximum available capacity of $2r+1$. Because the lightest available items have weight $r+1$, acquiring two new items requires a capacity of at least $2r+2$. Thus, a 1-for-2 swap is strictly impossible. Moreover, the total number of items is $\sum\limits_{w=1}^m A_w = \sum\limits_{w=1}^r \binom{m}{w} + A_{r+1} = \Vol_2 (r,m) -1 + A_{r+1} = n_m$.
    
    The $A_w$ values in \eqref{eqn_maximizer} do not necessarily satisfy $mk = \sum\limits_{w=1}^m wA_w$. However, if $d > 0$, this can be fixed by decreasing $A_{r+2-d}$ by $1$, and increasing $A_{r+2}$ by $1$. This is valid because our greedy assignment set $A_{r+2} = 0$, allowing us to safely increase it without violating the upper bound. Furthermore, $A_{r+2-d}$ corresponds either to a fully populated weight class (if $d \ge 2$) or a populated one (if $d=1$, where $A_{r+1} \ge 1$ because the leftover capacity $m(k-\hat{k}_m)$ is a multiple of $m \ge 2$ and cannot be exactly $1$). This guarantees $A_{r+2-d} \ge 1$, ensuring we never decrease a variable below zero. Thus, this modification provides a solution to \eqref{eqn_valid} when $n = n_m$ and the number of rows is $m$.

    Now, we construct $M \in \mathbb{F}_2^{m \times n}$ with $n= n_m$ distinct, nonzero columns and weight-$k$ rows. Note that if $r = m-1$, then we would have $k > \Vol_2(m-2,m-1) = 2^{m-1}-1$ and $k \leq \Vol_2(m-1,m-1) = 2^{m-1} \Rightarrow k = 2^{m-1}$, and so the columns of $M$ will consist of all distinct nonzero vectors in $\mathbb{F}_2^m$. Thus, we assume that $r \leq m-2$.
     
    Let $V_w \subseteq \mathbb{F}_2^m$ be the set of weight-$w$ tuples for $w \in \{1,\dots,r\}$. The total number of $1$'s in each row in $V_w$ is $\binom{m-1}{w-1}$, and $|V_w| = \binom{m}{w}$. Inserting the tuples in $V_1,\dots,V_r$ as columns of $M$ tentatively gives it $\hat{k}_m$ $1$'s in each row and $\Vol_2(r,m)-1$ columns. We then populate the rest of $M$ using weight-$(r+1)$ columns. Since $k - \hat{k}_m \leq \Vol_2(r,m-1) - \Vol_2(r-1,m-1) = \binom{m-1}{r}$, then by \cref{prop_almost}, there exists a $(r+1,k - \hat{k}_m)$-almost regular matrix with $\floor{\frac{m(k-\hat{k}_m)}{r+1}}$ distinct columns and more than $m-r-1$ rows with weight $k-\hat{k}_m$. Inserting this as a submatrix of $M$ will give us the correct number of columns, but there may still be rows with weight $k-1$.
    
    Namely, there are $d$ such ``deficient'' rows with weight $k-1$, where $d \leq r$. If $d > 0$, then we resolve the deficiency by finding a column of weight $r+2-d$ with zeros in every deficient row. To see why such a column exists, note that when $d \ge 2$, $M$ contains the complete set of weight-$(r+2-d)$ vectors. Since $m - d \ge r+2-d$, there exists at least one vector with ones exclusively outside the $d$ deficient rows. Alternatively, if $d=1$, we need a weight-$(r+1)$ column with a zero in the single deficient row. Because $m \ge r+2$, the number of columns in the almost-regular submatrix satisfies $A_{r+1} \ge k-\hat{k}_m$. Since the deficient row contains exactly $k-\hat{k}_m-1$ ones within this submatrix, there is at least $A_{r+1} - (k-\hat{k}_m-1) \ge 1$ column with a zero in that row. We simply isolate this column and flip its $d$ zeros to ones, transforming it into a valid weight-$(r+2)$ column. Hence, this gives a separating and covering system for $[n_m]$, so $a_{\mathrm{II}}(n_m,k) \leq m$.

    Additionally, for all $m \geq \ceil{\log_2 2k}$, $n_m$ is the maximum of \eqref{eqn_max}, and more importantly is the maximum $n$ for which $[n]$ can be separated and covered by $m$ samples. Since $m \geq \ceil{\log_2 2k}$, $[2k-1]$ can be separated and covered by $m$ samples (via \cref{lemLogBound} part (c)), so $n_m \geq 2k-1$ by maximality.

    Finally, to show that $a_{\mathrm{II}}(n_m, k) \geq m $, we show that no separating and covering system of $[n_m]$ exists for $m-1$ samples, namely, by showing that $n_m$ cannot be achieved in \eqref{eqn_max} with $m$ replaced by $m-1$ without violating the total weight constraint. Moreover, since $n_{m-1}$ is the maximum $n$ for which $[n]$ can be separated and covered by $m-1$ samples, this will also imply that $n_{m-1} < n_m$, or in other words, $\{n_m\}$ is strictly increasing (note that if $m = \ceil{\log_2 2k}$, then $m-1$ falls below the absolute minimum required samples established in \cref{lemLogBound}, meaning no valid system can exist, so $n_{m-1}$ is strictly less than $2k-1 \le n_m$). 
    
    We start with the earlier solution from \eqref{eqn_maximizer}, which now violates the supply constraint for each $w \leq r$, and remove the ``overfill'' in each weight class. The total number of items removed is $\sum\limits_{w=1}^r \binom{m}{w}-\binom{m-1}{w} = \sum\limits_{w=1}^r \binom{m-1}{w-1} = \hat{k}_m$, and the total weight is decreased by $\sum\limits_{w=1}^r w \binom{m-1}{w-1}$. Since the total weight limit has been reduced by $k$, then we must add $\hat{k}_m$ items of weight at least $r+1$ so that their total weight is at most $d + \sum\limits_{w=1}^r w \binom{m-1}{w-1} -k$, where $d\leq r$. However, since $k > \hat{k}_m$ and $\hat{k}_m = \sum\limits_{w=0}^{r-1} \binom{m-1}{w} \geq r$, then $d + \sum\limits_{w=1}^r w \binom{m-1}{w-1} -k< r + r\hat{k}_m -\hat{k}_m = r+(r-1)\hat{k}_m \leq r\hat{k}_m <(r+1) \hat{k}_m$, which means that the total weight limit cannot be satisfied with $n_m$ items. Therefore, $a_{\mathrm{II}}(n_m,k) = m$.
\end{proof}

\begin{remark}
    \label{rmkThresholds}
    Note that $n_m$ can be viewed as a threshold determining the maximum number of coupons whose labels can be deduced with $m$ samples. Moreover, we note that although~\cite{Katona} proved the existence of a separating system if and only if a system of equations similar to \eqref{eqn_valid} can be satisfied, our proof directly provides a method for constructing such a system using the proofs of \cref{lemThresholds} and \cref{prop_almost}. We provide the full construction in \cref{alg:maximal_lccp} in the Appendix.

    Moreover, when $m \geq k$, observe that $r_{m-1}(k) = 1$, so $M$ will only have weight-$1$ or weight-$2$ columns (with a possible exception of one column), which will be useful for \cref{lemMinLargen} below. Thus, \cref{lemThresholds} can be viewed as a generalization of \cref{lemma_fixedk} by more accurately counting the number of $1$'s in the draw matrix.

    Finally, from this proof it is also apparent that if the setup of the LCCP were to be modified such that each subset has \textit{at most} $k$ coupons, then the minimum number of samples would still be the same. This can be seen by obtaining a minimal separating system before inserting $1$'s in the matrix to ensure every row has weight-$k$. This phenomenon was also observed in~\cite{Katona}, and since we will use it to prove another result we state it here.
\end{remark}

\begin{lemma}
    \label{lemAtMostEquivalence}
    For positive integers $n,k$ such that $n \geq 2k$, $a(n,k)$ is also equal to the minimum number of samples of subsets of size \textit{at most} $k$ needed for a separating system.
\end{lemma}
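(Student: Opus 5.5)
The plan is to prove the two inequalities separately, for both the Type-I and the Type-II minimum. Let $a^{\le}(n,k)$ denote the minimum number of subsets of size \emph{at most} $k$ forming a separating system of $[n]$; in the Type-II setting we also require a covering system. The direction $a^{\le}(n,k) \le a(n,k)$ is immediate, since every system of $k$-subsets is admissible in the relaxed problem.

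For the reverse direction, I will take an optimal relaxed system with $m = a^{\le}(n,k)$ subsets and draw matrix $M \in \mathbb{F}_2^{m\times n}$. Its rows have weights $w_i \le k$ and its columns are distinct, and in Type-II also nonzero. I will then repeatedly flip a single $0$ to a $1$ in a row of weight $w < k$ until every row has weight exactly $k$, keeping the columns distinct at every step. This is the same modification strategy as in the proof of \cref{lemMinOrderingII}(b). Two observations make the flips harmless for the other requirements. First, a $0\to 1$ flip never produces an all-zero column, so the covering property is preserved in the Type-II setting. Second, no rows are added, so the final matrix has $m$ rows of weight $k$ and gives $a(n,k) \le m$.

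The key step is to show that a safe flip always exists. Fix a row $i$ of current weight $w \le k-1$. Flipping column $j$ from $0$ to $1$ in row $i$ creates a duplicate only if some column $j'$ agrees with column $j$ outside row $i$ and has a $1$ in row $i$. Such a $j'$ determines $j$ uniquely, namely as $j'$ with row $i$ cleared, and distinct columns stay distinct. Hence the number of conflicting candidates is at most the number of $1$'s in row $i$, that is, at most $w$. The number of candidates is $n - w$. Using $n \ge 2k$,
\[
(n-w) - w = n - 2w \ge 2k - 2(k-1) = 2 > 0,
\]
so a non-conflicting zero exists by the pigeonhole principle. Iterating, each flip raises the total weight by one, so the process terminates after $mk - \sum_i w_i$ flips.

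I expect the only delicate point to be the conflict count: one must check that each $1$ in row $i$ blocks at most one $0$ in that row, so that conflicts are bounded by $w$ rather than by something larger. Once that injection is noted, the argument is a direct pigeonhole count, and the hypothesis $n \ge 2k$ is exactly what guarantees a strict surplus of candidates at every row weight below $k$.
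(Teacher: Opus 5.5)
Your proof is correct, but it takes a different route from the paper. The paper's proof is only a pointer to the remark after \cref{lemThresholds}. The bound used there on how many coupons $m$ samples can handle is a knapsack constraint on the \emph{total} number of ones, $\sum_w wA_w \le mk$. Any system of subsets of size at most $k$ also satisfies this constraint. Meanwhile, the explicit construction attains the bound with rows of weight exactly $k$: an almost-regular block followed by the deficiency-correction step that inserts extra $1$'s. Combined with the characterization in \cref{thm_minimum_full}, the relaxed and exact minima therefore coincide.

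You instead use a local exchange argument. Starting from any optimal relaxed system, you fill deficient rows one $1$ at a time. Each potential conflict in row $i$ corresponds injectively to a column with a $1$ in that row, which guarantees at least $n-2w$ safe positions.

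Your route buys several things:
\begin{itemize}
\item It is self-contained and elementary, needing none of the threshold/knapsack machinery or the appendix construction of almost-regular matrices.
\item It is directly algorithmic.
\item It treats Type-I and Type-II uniformly, since $0\to 1$ flips never destroy covering.
\item It proves slightly more than stated. Your count $n-2w \ge n-2(k-1)$ is already positive for $n \ge 2k-1$, not just $n\ge 2k$. This matters because Case~1 of \cref{thmMinPartial} invokes this lemma for $r$ coupons under only $k \le \ceil{r/2}$, which allows $r = 2k-1$, outside the stated hypothesis.
\end{itemize}

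The paper's route, in exchange, comes essentially for free once \cref{lemThresholds} is established, because the upper bound on $n$ there was already proved under the relaxed budget.
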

\begin{proof}
    Observation from \cref{rmkThresholds}.
\end{proof}
 
The following result is a consequence of \cref{lemTypeMin,lemMinOrderingII,lemMinOrderingI,lemma_fixedk,lemThresholds}. It was first proved by other means in~\cite{Katona}.

\begin{outline}
    \subsubsection{Minimum for Large n}
\end{outline}

\begin{lemma}
 \label{lemMinLargen} Lemma~\ref{lemma_fixedk} holds with equality if $n \geq \binom{k+1}{2}$.
\end{lemma}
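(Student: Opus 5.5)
The plan is to read the exact value of $n_m$ off \cref{lemThresholds} in the regime $m\ge k$. Then I would combine it with the monotonicity of $a_\mathrm{II}(\cdot,k)$ from \cref{lemMinOrderingII} to get upper bounds that match \cref{lemma_fixedk}. The case $k=1$ is handled directly: $a_\mathrm{I}(n,1)=n-1$ and $a_\mathrm{II}(n,1)=n$, since singleton rows give distinct columns and only one column may be zero. These agree with $\ceil{\frac{2n-2}{2}}$ and $\ceil{\frac{2n}{2}}$. So assume $k\ge 2$ from now on.

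\textbf{Step 1 (closed form of $n_m$).} Let $m\ge k$. Then $\Vol_2(0,m-1)=1<k\le m=\Vol_2(1,m-1)$, so $r_{m-1}(k)=1$ and $\hat{k}_m=1$. Substituting into \eqref{eqn:n_m} gives
\[
n_m=(1+m)-1+\floor{\frac{m(k-1)}{2}}=\floor{\frac{m(k+1)}{2}} .
\]
This is also the observation of \cref{rmkThresholds} that only weight-$1$ and weight-$2$ columns are needed. Note that $m\ge k\ge \ceil{\log_2 2k}$, so \cref{lemThresholds} applies and gives $a_\mathrm{II}(n_m,k)=m$ and $n_m\ge 2k-1$.

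\textbf{Step 2 (Type-II).} Put $m=\ceil{\frac{2n}{k+1}}$. The hypothesis $n\ge\binom{k+1}{2}$ gives $\frac{2n}{k+1}\ge k$, hence $m\ge k$. Since $m(k+1)/2\ge n$ and $n$ is an integer, we get $n\le n_m$. We also have $n\ge\binom{k+1}{2}\ge 2k-1$ for $k\ge2$. Applying \cref{lemMinOrderingII}(a) repeatedly from $n$ up to $n_m$ yields $a_\mathrm{II}(n,k)\le a_\mathrm{II}(n_m,k)=m$. The reverse inequality is exactly \cref{lemma_fixedk}.

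\textbf{Step 3 (Type-I).} Put $m=\ceil{\frac{2n-2}{k+1}}$. Because $2n-2\ge k(k+1)-2>(k-1)(k+1)$, we get $m\ge k$, so Step 1 applies. As before, $n-1\le n_m$. When $n-1\ge 2k-1$, \cref{lemTypeMin} gives $a_\mathrm{I}(n,k)=a_\mathrm{II}(n-1,k)$. Monotonicity then gives $a_\mathrm{II}(n-1,k)\le a_\mathrm{II}(n_m,k)=m$, and the lower bound is again \cref{lemma_fixedk}.

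The main obstacle is the small boundary cases where the index-range conditions of \cref{lemTypeMin} and \cref{lemMinOrderingII} might fail. Specifically, I need $n-1\ge 2k-1$. For $k\ge3$ we have $\binom{k+1}{2}\ge 2k$, so this holds. For $k=2$ we have $\binom{3}{2}=3=2k-1$, so $n=3$ must be checked by hand. There $a_\mathrm{I}(3,2)=2=\ceil{4/3}$, by \cref{lemLogBound}(c)–(d). Beyond this, the only remaining point is confirming that $r_{m-1}(k)=1$ exactly when $m\ge k$, which is precisely where the threshold $\binom{k+1}{2}=n_k$ enters.
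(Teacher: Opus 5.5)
Your proposal is correct and follows the paper's proof. Both show $r_{m-1}(k)=1$ for $m\ge k$, read off $n_m=\lfloor m(k+1)/2\rfloor\ge n$ from \cref{lemThresholds}, squeeze $a_\mathrm{II}(n,k)$ between \cref{lemma_fixedk} and $a_\mathrm{II}(n_m,k)=m$ via \cref{lemMinOrderingII}, and pass to Type-I through \cref{lemTypeMin}. Your separate choice $m=\lceil (2n-2)/(k+1)\rceil$ for Type-I, and your explicit handling of $k=1$ and $(n,k)=(3,2)$, tidy up the paper's version in two places: its one-line Type-I reduction literally yields $a_\mathrm{I}(n+1,k)$ only for $n\ge\binom{k+1}{2}$ (missing the endpoint $a_\mathrm{I}(\binom{k+1}{2},k)$), and its claim $r_{m-1}(k)=1$ fails when $k=1$.
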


\begin{proof}
Consider the Type-II bound. Setting $m = \ceil{\frac{2n}{k+1}}$, observe that $\Vol_2(1,m-1) =  m \geq \frac{2\binom{k+1}{2}}{k+1} = k$, so $r_{m-1}(k) = 1$. By \cref{lemThresholds}, $a_\mathrm{II}(n', k) = m$, where $n' = \Vol_2(1,m)-1 + \floor{\frac{m(k-\Vol_2(0,m-1))}{1+1}} =  m + \floor{\frac{m(k-1)}{2}}$, which can also be written as $n' = m-1 + \ceil{\frac{m(k-1)+1}{2}} = \ceil{\frac{m(k+1)-1}{2}}$. Moreover, $m = \ceil{\frac{2n}{k+1}} \geq \frac{2n}{k+1}$, so $n' \geq \ceil{\frac{\frac{2n}{k+1}(k+1) -1}{2}} = \ceil{n-\frac{1}{2}}=n$. By \cref{lemMinOrderingII}, $a_\mathrm{II}(n,k) \leq  m$. 

Finally, by \cref{lemma_fixedk}, $a_{\mathrm{II}}(n,k) = m$.  The respective expression for Type-I follows immediately from \cref{lemTypeMin} since $n \geq \binom{k+1}{2}\geq2k-1$.
\end{proof}

\begin{corollary}
    $a_{\mathrm{I}}(\binom{k+1}{2}p+1,k) = a_{\mathrm{II}}(\binom{k+1}{2}p,k) = kp$ for any $p \in \mathbb{Z}^+$. 
\end{corollary}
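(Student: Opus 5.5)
The plan is to apply \cref{lemMinLargen} directly with $n = \binom{k+1}{2}p$ in the Type-II setting, and then obtain the Type-I value through \cref{lemTypeMin}. First check the hypotheses. For $p \in \mathbb{Z}^+$ we have $n = \binom{k+1}{2}p \geq \binom{k+1}{2}$, so \cref{lemMinLargen} applies, and \cref{lemma_fixedk} holds with equality:
\[
a_{\mathrm{II}}\left(\tbinom{k+1}{2}p,k\right) = \ceil{\frac{2\binom{k+1}{2}p}{k+1}} = \ceil{\frac{k(k+1)p}{k+1}} = kp .
\]
The computation is exact, since $2\binom{k+1}{2} = k(k+1)$ is divisible by $k+1$, so no ceiling issue arises.

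For the Type-I statement, use the second case of \cref{lemTypeMin}: $a_{\mathrm{II}}(n,k) = a_{\mathrm{I}}(n+1,k)$ whenever $n \geq 2k-1$. I need $\binom{k+1}{2}p \geq 2k-1$. For $k \geq 2$ this follows from $\binom{k+1}{2} = \frac{k(k+1)}{2} \geq 2k-1$, which is equivalent to $k^2 - 3k + 2 = (k-1)(k-2) \geq 0$, together with $p \geq 1$. For $k = 1$ we have $2k - 1 = 1 \leq p = n$. This gives $a_{\mathrm{I}}(\binom{k+1}{2}p+1,k) = kp$.

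As a cross-check, I can apply \cref{lemma_fixedk} to Type-I directly with $n = \binom{k+1}{2}p + 1$. It gives $\ceil{\frac{2n-2}{k+1}} = kp$, so the two routes agree. No step here is a real obstacle; the only care needed is confirming the threshold $n \geq 2k-1$ in the small cases $k \in \{1,2\}$, which the factorization above handles.
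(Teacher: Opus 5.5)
Your proof is correct and follows the paper's route: the corollary comes from \cref{lemMinLargen} at $n=\binom{k+1}{2}p$, where $2\binom{k+1}{2}p/(k+1)=kp$ is an integer, and the Type-I value comes from the second case of \cref{lemTypeMin}, exactly as in the proof of \cref{lemMinLargen}. Your separate treatment of $k=1$ is harmless but unnecessary, since $(k-1)(k-2)\ge 0$ already holds there.
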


\begin{corollary}
    $a_{\mathrm{I}}(n,2)= \ceil{\frac{2n-2}{3}}$ and $ a_{\mathrm{II}}(n,2) = \ceil{\frac{2n}{3}}$.
\end{corollary}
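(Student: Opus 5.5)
The statement follows from \cref{lemMinLargen} once we check that its hypothesis $n \geq \binom{k+1}{2}$ holds for $k=2$ in the relevant range. We then handle the few remaining small values of $n$ by hand. Since $\binom{3}{2} = 3$, \cref{lemMinLargen} immediately gives $a_\mathrm{I}(n,2) = \ceil{\frac{2n-2}{3}}$ and $a_\mathrm{II}(n,2) = \ceil{\frac{2n}{3}}$ for every $n \geq 3$. The standing assumption $n > k$ forces $n \geq 3$, so this covers every admissible $n$.

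Alternatively, one can argue directly without invoking \cref{lemThresholds}. The lower bounds are exactly \cref{lemma_fixedk} with $k=2$, which gives $a_\mathrm{I}(n,2) \geq \ceil{\frac{2n-2}{3}}$ and $a_\mathrm{II}(n,2) \geq \ceil{\frac{2n}{3}}$.

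For the upper bound in Type-II, take $m = \ceil{\frac{2n}{3}}$ and build a draw matrix whose columns are weight-$1$ and weight-$2$ vectors. The weight-$2$ columns come from a path or matching structure in the complete graph on the $m$ rows. Concretely, partition the rows into consecutive blocks and use the pattern of \cref{ex: opt min} with $k=2$: three rows $S_1,S_2,S_3$ separate and cover three coupons via the columns $e_1+e_2,\ e_2+e_3,\ e_1+e_3$? That gives each row weight $2$ with only $3$ columns on $3$ rows, which is wasteful. Instead, use blocks of $2$ rows carrying $3$ columns $e_1, e_1+e_2, e_2$, so each row has weight exactly $2$. Gluing such blocks via the subadditivity of \cref{lemma_subhom} yields $3p$ coupons on $2p$ rows, and the residues $n \equiv 1,2 \pmod 3$ are handled by adding one extra row.

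The Type-I formula then follows from \cref{lemTypeMin}, since $n \geq 3 = 2k-1$ gives $a_\mathrm{I}(n,2) = a_\mathrm{II}(n-1,2) = \ceil{\frac{2n-2}{3}}$. This needs $n-1 \geq 3$, so the boundary case $n=3$ is checked by hand: $\ceil{\log_2 3} = 2 = \ceil{4/3}$.

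The main obstacle is only bookkeeping at the residue classes and the small cases $n \in \{3,4\}$. Citing \cref{lemMinLargen} directly avoids this, so I would present the proof as that one-line invocation together with the remark that $\binom{3}{2} = 3 \leq n$ always holds.
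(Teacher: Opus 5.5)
Your main argument is correct and is how the paper obtains this corollary: since $n>k=2$ forces $n\ge 3=\binom{3}{2}$, \cref{lemMinLargen} applies for every admissible $n$ and gives both formulas at once. One slip in your optional direct construction: $n=3p+2$ needs $\ceil{\frac{2(3p+2)}{3}}=2p+2$ rows, which is two rows beyond the $2p$ used for $3p$ coupons, not one, so handle that residue by taking $3(p+1)$ coupons on $2(p+1)$ rows and invoking \cref{lemMinOrderingII}(a). Also, at $n=3$ you should exhibit a two-row system (your $e_1,\ e_1+e_2,\ e_2$ block works) rather than only noting that the two lower bounds agree.
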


\cref{thm_minimum_full} fully characterizes $a_{\mathrm{I}}(n,k)$ and $a_{\mathrm{II}}(n,k)$. 

\begin{outline}
    \subsubsection{Minimum Theorem}
\end{outline}
\begin{theorem}
    \label{thm_minimum_full}
    Given any $k \in \mathbb{Z}^+$, define the sequence $\{n_m\}$ for $m \geq \ceil{\log_2 2k}$ as in \cref{lemThresholds}.
    \begin{enumerate}
        \item[(a)] If $n \geq \binom{k+1}{2}$, then $a_{\mathrm{I}}(n+1,k) = a_\mathrm{II} (n,k) = \ceil{\frac{2n}{k+1}}$.
        \item[(b)] If $2k-1 \leq n_{m-1} < n\leq n_m \leq \binom{k+1}{2}$ for some $m$, then $a_\mathrm{I}(n+1,k) =a_{\mathrm{II}} (n,k) =   m$.
        \item [(c)] If $k < n<2k$, then $a_\mathrm{I}(n,k) = a_\mathrm{II}(n,k) = a_\mathrm{I}(n,n-k)$.
    \end{enumerate}
\end{theorem}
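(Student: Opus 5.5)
The theorem assembles earlier lemmas, so I would prove each part by pointing to the right result and checking hypotheses.

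\textbf{Part (a).} By \cref{lemMinLargen}, the Type-II bound of \cref{lemma_fixedk} holds with equality, so $a_\mathrm{II}(n,k)=\ceil{\frac{2n}{k+1}}$ whenever $n\ge\binom{k+1}{2}$. Since $\binom{k+1}{2}\ge 2k-1$ for all $k\ge1$, the case $n\ge 2k-1$ of \cref{lemTypeMin} applies and gives $a_\mathrm{I}(n+1,k)=a_\mathrm{II}(n,k)$.

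\textbf{Part (b).} Fix $m$ with $n_{m-1}<n\le n_m$.

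For the upper bound, \cref{lemThresholds} gives $a_\mathrm{II}(n_m,k)=m$. Since $n\ge 2k-1$, repeated use of \cref{lemMinOrderingII}(a) (monotonicity in $n$ on the range $n\ge 2k-1$) yields $a_\mathrm{II}(n,k)\le a_\mathrm{II}(n_m,k)=m$.

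For the lower bound, the proof of \cref{lemThresholds} shows that $n_{m-1}$ is the maximum number of coupons that can be separated and covered by $m-1$ samples of size $k$. Hence $n>n_{m-1}$ forces $a_\mathrm{II}(n,k)\ge m$. Alternatively, use $a_\mathrm{II}(n_{m-1}+1,k)=m$ together with monotonicity.

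The subtle point is the edge case $m-1<\ceil{\log_2 2k}$, where $n_{m-1}$ is not defined. The hypothesis $2k-1\le n_{m-1}$ already places $m-1$ in the valid range, so this case does not arise. Finally, $n\ge 2k-1$ again lets \cref{lemTypeMin} transfer the result to $a_\mathrm{I}(n+1,k)$.

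The main obstacle is making rigorous the claim that $n_{m-1}$ is maximal for $m-1$ rows. This rests on the knapsack maximality argument in \cref{lemThresholds}: any valid matrix gives a feasible point of \eqref{eqn_max}, and \eqref{eqn_max} has optimum $n_{m-1}$. I would state this explicitly rather than rely only on monotonicity.

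\textbf{Part (c).} For $k<n<2k$, the first case of \cref{lemTypeMin} gives $a_\mathrm{II}(n,k)=a_\mathrm{I}(n,k)$, and \cref{lemMinSymmetry} gives $a_\mathrm{I}(n,k)=a_\mathrm{I}(n,n-k)$. Since $n-k<n/2$, the pair $(n,n-k)$ lies in the regime $n\ge 2(n-k)+1\ge 2(n-k)-1$, so parts (a) and (b) (with $k$ replaced by $n-k$) determine this value via $a_\mathrm{I}(n,n-k)=a_\mathrm{II}(n-1,n-k)$. This completes the characterization.
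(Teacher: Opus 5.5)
Your proposal is correct and follows the paper's own proof. Part (a) comes from \cref{lemMinLargen} together with \cref{lemTypeMin}. Part (b) takes its lower bound from the maximality of $n_{m-1}$ established in the proof of \cref{lemThresholds}, and its upper bound from monotonicity (\cref{lemMinOrderingII}(a)) up to $n_m$. Part (c) follows from \cref{lemTypeMin} and \cref{lemMinSymmetry}.

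One small caveat concerns only your closing aside in (c). Parts (a) and (b) as stated do not cover the first range $2k'-1\le n'\le n_{m_0}$ with $m_0=\ceil{\log_2 2k'}$; an example is $a_\mathrm{II}(10,5)$, which arises from $a_\mathrm{I}(11,6)$. There the value $m_0$ comes instead from \cref{lemLogBound}(b) together with monotonicity. This does not affect the proof of the statement itself.
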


\begin{proof}
    (a) follows from \cref{lemMinLargen}, and (c) follows from \cref{lemTypeMin,lemMinSymmetry}. If $n$ falls in (b), then by the proof of \cref{lemThresholds}, $a_\mathrm{II}(n,k) > m-1$, and by \cref{lemMinOrderingII},  $a_\mathrm{II}(n,k) \leq a_{\mathrm{II}}(n_m,k) = m$, so $a_\mathrm{I}(n+1,k) =a_{\mathrm{II}} (n,k) = m$.
\end{proof}

\begin{remark}
    \label{rmk:minimum_full}
    \cref{thm_minimum_full} provides a procedure for finding the minimum values $a_\mathrm{I}(n,k)$ and $a_\mathrm{II}(n,k)$. 
    \begin{enumerate}
        \item [(a)] If $n \geq \binom{k+1}{2}$, then the minimum is given by a simple formula.
        \item [(b)] If $2k-1 \leq n < \binom{k+1}{2}$, then an iterative search is performed to find $m$ such that $n_{m-1} < n \leq n_m$. (However, the calculation of $n_m$ from $m$ in Lemma~\ref{lemThresholds} is not readily inverted and therefore does not yield a closed-form expression for the minimum.)  
        \item [(c)] Otherwise, if $n < 2k-1$, then symmetry (Lemma~\ref{lemMinSymmetry}) can be invoked and one of the steps (a) or (b)  will be applicable.
    \end{enumerate}
\end{remark}

\subsection{Bounds on the Minimum}

To address case (b) from~\cref{rmk:minimum_full}, we provide bounds on $a(n,k)$ that will help limit the search.
The first set of bounds were those proved by Katona~\cite{Katona}. The upper bound was later improved by Wegener~\cite{Wegener}. 

\begin{lemma}
    (\cite{Katona} Theorem 5, 6)
    For any positive integers $n,k$ with $n \geq 2k$, $\frac{\log n}{\log (en/k)} \frac{n}{k} \leq a_\mathrm{I}(n,k) \leq \ceil{\frac{\log (2n)}{\log (n/k)}} \frac{n}{k}$.
    \label{lem:Katona}
\end{lemma}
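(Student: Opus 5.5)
We prove the two inequalities separately, working with the draw matrix $M\in\mathbb{F}_2^{m\times n}$ of a separating system of $k$-subsets. By \cref{lemMinSymmetry} and the hypothesis $n\ge 2k$ we stay in the regime where $k$ is at most half of $n$. By \cref{lemAtMostEquivalence} we may also allow rows of weight at most $k$ whenever convenient.

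\textbf{Lower bound.} This is a counting argument on column weights. Let $m=a_\mathrm{I}(n,k)$ and let $M$ be an optimal draw matrix. The total number of ones is $mk$, so the average column weight is $w^*=mk/n$. Since the columns are distinct, at most $\Vol_2(w^*,m)$ of them can have weight at most $w^*$. At least a constant fraction of the columns have weight at most about $w^*$ (Markov), so
\[
n \;\lesssim\; \Vol_2\!\left(\tfrac{mk}{n},m\right)\;\le\;\left(\tfrac{em}{mk/n}\right)^{mk/n}=\left(\tfrac{en}{k}\right)^{mk/n}.
\]
Taking logarithms gives $\log n \le \frac{mk}{n}\log(en/k)$, which rearranges to exactly $m\ge \frac{n}{k}\frac{\log n}{\log(en/k)}$.

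The Markov step is the main obstacle, because it costs a constant factor inside the logarithm. To avoid that loss, I would instead argue by an exchange. Consider the $n$ distinct columns with smallest total weight. Their weight sum is at least the sum of the $n$ smallest weights in $\mathbb{F}_2^m$, that is, the weights of all vectors in a Hamming ball of radius $r=r_m(n)$ (this is the knapsack reasoning of \cref{lemThresholds}). This gives $mk\ge\sum_{\ell<r}\ell\binom{m}{\ell}+r(n-\Vol_2(r-1,m))$, a bound of roughly $nr$. Combining it with $n\le\Vol_2(r,m)\le(em/r)^r$ should yield the clean bound above. Note the requirement $r\le m/2$ for the volume estimate; it holds because $k\le n/2$.

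\textbf{Upper bound.} This is a block construction. Set $t=\ceil{\log(2n)/\log(n/k)}$ and partition $[n]$ into $\ceil{n/k}$ blocks of size at most $k$. We want each coupon to receive a distinct codeword: a $t$-tuple over the alphabet of block indices, with the $s$-th coordinate naming the block the coupon lies in during round $s$. Each round contributes about $n/k$ samples (one per block), for $t\cdot n/k$ samples in total. By \cref{lemAtMostEquivalence}, blocks smaller than $k$ are allowed.

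Two coupons are separated exactly when their codewords differ in some coordinate. So we need $n$ distinct words of length $t$ over an alphabet of size $q=\ceil{n/k}$, with each symbol used at most $k$ times per coordinate. A balanced assignment (writing each coupon index in base $q$, with cyclic shifts to balance the coordinates) works once $q^t\ge n$. With $q\approx n/k$, this needs $t\ge \log n/\log(n/k)$. The extra factor $2$ inside $\log(2n)$ absorbs the rounding of $q$ and the balancing, and it also lets us drop the last block in each round, since membership in the last block is implied by absence from all the others. This keeps the count at most $t\,n/k$.

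Verifying that the balanced base-$q$ assignment respects the capacity $k$ in every coordinate is routine but fiddly, and it is the second place where care is needed.
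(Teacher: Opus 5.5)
The paper does not prove this lemma; it only cites Katona's Theorems 5 and 6. So your argument has to stand on its own. Your upper bound does, but your lower bound has a genuine gap.

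\textbf{Lower bound.} The Markov version loses a constant, as you say, and the exchange version does not close either. Write $\bar w:=mk/n$. Since $m/\bar w=n/k$, the target is exactly $\log n\le \bar w\log(em/\bar w)$, with no slack.

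Your knapsack inequality only gives $\bar w\ge\bar w_{\min}$, where $\bar w_{\min}$ is the average weight of the $n$ lightest vectors of $\mathbb{F}_2^m$. This $\bar w_{\min}$ is strictly less than $r=r_m(n)$; for $n=\Vol_2(r-1,m)+1$ with $r\ge2$ it is at most $r-1$. The volume estimate gives only $\log n\le r\log(em/r)$. Since $x\mapsto x\log(em/x)$ is increasing on $(0,m)$, this does not imply the target when $\bar w<r$. The phrase ``roughly $nr$'' discards exactly the deficit that matters.

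A side error: $r\le m/2$ does not follow from $k\le n/2$. For $n=2k=2^m$ one has $r_m(n)=m$. This does not hurt you, because $\Vol_2(r,m)\le(em/r)^r$ holds for all $r\le m$.

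What you need is an inequality in the real average weight, and an exponential-moment argument gives it in one line. For $\lambda>0$, by Jensen and the distinctness of the columns $c$ of $M$,
\[
n e^{-\lambda\bar w}\le\sum_{c}e^{-\lambda|c|}\le\sum_{v\in\mathbb{F}_2^m}e^{-\lambda|v|}=(1+e^{-\lambda})^m\le e^{me^{-\lambda}}.
\]
Choosing $e^{-\lambda}=k/n$ gives $\ln n\le\frac{mk}{n}\bigl(\ln\frac{n}{k}+1\bigr)=\frac{mk}{n}\ln\frac{en}{k}$, which is the claimed bound. Equivalently, let $X$ be uniform over the columns and $h$ the binary entropy. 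Then $\log n=H(X)\le\sum_i H(X_i)=m\,h(k/n)\le m\frac{k}{n}\log\frac{en}{k}$.

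\textbf{Upper bound.} This part is sound, and the balancing is easy to make explicit. Write $x=\sum_s x_sq^s$ and take coordinates $w_0=x_0$ and $w_s=(x_0+x_s)\bmod q$ for $s\ge1$.
\begin{itemize}
\item These coordinates determine $x$.
\item On each run of $q^s$ consecutive indices with fixed $\lfloor x/q^s\rfloor$, the run starts at a multiple of $q$. So $x_0$ cycles from $0$, and $w_s$ cycles through all residues.
\item Hence in each coordinate the symbol counts differ by at most one, and every symbol occurs $\lfloor n/q\rfloor$ or $\lceil n/q\rceil\le k$ times.
\end{itemize}
Some balancing really is needed. For $n=40$, $k=10$, $q=4$, plain base $4$ puts $16>k$ coupons into each of two top-digit blocks.

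The factor $2$ in $\log(2n)$ is unnecessary. Since $q\ge n/k$, you get $q^t\ge n$ once $t\ge\log n/\log(n/k)$, and dropping one block per round costs nothing. So the construction actually yields $\lceil\log_q n\rceil(q-1)$, which is below the stated bound because $q-1<n/k$.
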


\begin{lemma}
  (\cite{Wegener} Main Theorem)
    For $n,k\in \mathbb{Z}^+$ with $n \geq 2k$, $a_\mathrm{I}(n,k) \leq \ceil{\log_q n} (q -1)$, where $q = \ceil{n/k}$.
    \label{lem:Wegener}
\end{lemma}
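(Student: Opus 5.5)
We would prove the bound $a_\mathrm{I}(n,k) \leq \ceil{\log_q n}(q-1)$, with $q = \ceil{n/k}$, by an explicit draw matrix. Set $t = \ceil{\log_q n}$, so that $n \leq q^t$, and give each coupon $j \in [n]$ a distinct $q$-ary address $(x_1(j),\dots,x_t(j)) \in \{0,\dots,q-1\}^t$. For each coordinate $\ell \in [t]$ and each symbol $s \in \{1,\dots,q-1\}$, let the sample be
\[
S_{\ell,s} = \{ j : x_\ell(j) = s \}.
\]
Symbol $0$ gets no sample. This gives $t(q-1)$ samples. They separate $[n]$: if $j \neq j'$, their addresses differ in some coordinate $\ell$, and at least one of $x_\ell(j), x_\ell(j')$ is a nonzero symbol $s$, so $S_{\ell,s}$ contains exactly one of the two coupons. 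This is the $q$-ary expansion of the binary splitting used in \cref{lemLogBound}.

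The remaining issue is sample size: each $S_{\ell,s}$ must have at most $k$ elements. Then \cref{lemAtMostEquivalence} applies, since $n \geq 2k$ and padding subsets to size exactly $k$ preserves separation, and it converts the construction into one with exactly $k$-subsets. So we must choose the $n$ addresses so that, in every coordinate, each nonzero symbol is used at most $k$ times.

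\textbf{Main obstacle: choosing the addresses.} Since $q = \ceil{n/k}$, we have $n \leq qk$, so in each coordinate the $n$ coupons can be split among $q$ symbols with every class of size at most $k$. We need to do this simultaneously for all coordinates while keeping all addresses distinct. We proceed by induction on the coordinates, refining a partition.
\begin{itemize}
\item \emph{Coordinate 1:} split $[n]$ into $q$ classes of nearly equal size, each of size at most $\ceil{n/q} \leq k$, and put the largest remainder in class $0$.
\item \emph{Later coordinates:} the current equivalence classes (coupons with equal address prefixes) have size at most $\ceil{n/q^{\ell-1}}$. Split each class into at most $q$ parts using symbols $0,\dots,q-1$, always sending the leftover part to symbol $0$.
\item \emph{Load balance:} symbols $1,\dots,q-1$ receive in total at most about $n(q-1)/q$ coupons in coordinate $\ell$. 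Assign them round-robin across classes (rotate which class uses which symbol first) so that no nonzero symbol exceeds $\ceil{n/q} \leq k$ in any coordinate.
\end{itemize}
After $t$ coordinates every class has size at most $\ceil{n/q^t} \leq 1$, so all addresses are distinct.

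If the balancing step resists a clean argument, we fall back on the column-based form. Take $M$ with columns the address vectors encoded as $t$ blocks of $q-1$ indicator rows. Row weights are then the symbol frequencies, and we only need a set of $n$ words in $[q]^t$ in which every nonzero symbol appears at most $k$ times per coordinate. Choosing the $n$ words as consecutive integers in base $q$ (lexicographically smallest) makes each coordinate's symbol counts nearly uniform. Bounding these counts by $k$ is then a direct computation.
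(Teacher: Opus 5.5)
The framework is right: $t=\ceil{\log_q n}$ blocks of $q-1$ indicator rows, symbol $0$ left unrecorded, and separation coming from distinct $q$-ary addresses. (The paper only quotes this lemma from Wegener; the same block structure appears in its proof of \cref{lem:modWeg}.) The gap is the step you yourself flag as the main obstacle: that the addresses can be chosen distinct with every nonzero symbol used at most $k$ times in every coordinate.

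Your round-robin is only asserted (``at most about''). Read literally, ``rotate which class uses which symbol first'' can fail. With $q=3$ and class remainders $1,0,0,1,0,0,1$, starting class $i$ at symbol $i \bmod 3$ sends all three surplus coupons to symbol $1$.

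Your fallback is false outright, because consecutive integers are balanced only in the least significant digit. For $n=6$, $k=2$, $q=3$, $t=2$, the words $00,01,02,10,11,12$ use leading digit $1$ three times, so $|S_{1,1}|=3>k$. The same assertion appears in the proof of \cref{lem:modWeg} and fails there too: for $n=15$, $k=5$, the middle base-$3$ digit equals $1$ for $x\in\{3,4,5,12,13,14\}$, six times. So it cannot be borrowed from there.

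The fix stays on your primary route. In coordinate $\ell$, write each current class as $|C|=a_Cq+b_C$ with $0\le b_C<q$. Give each of the $q$ symbols $a_C$ members of $C$. Hand the $b_C$ surplus members to the next $b_C$ symbols of a \emph{single} cyclic pointer $0,1,\dots,q-1,0,1,\dots$ that continues from one class to the next.
\begin{itemize}
\item Since $b_C<q$, one class's surplus goes to distinct symbols. So every child class has size at most $\ceil{|C|/q}\le\ceil{n/q^{\ell}}$, which keeps your invariant $|C|\le\ceil{n/q^{\ell-1}}$.
\item Each symbol receives at most $\ceil{E/q}$ surplus members, where $E=\sum_C b_C$. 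Hence it receives at most $\sum_C a_C+\ceil{E/q}=\ceil{n/q}\le k$ members in all, using $q\ge n/k$.
\item Symbol $0$ needs no special role.
\item After $t$ rounds all classes are singletons, since $q^t\ge n$.
\end{itemize}

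One smaller point: padding does not preserve separation in general, since it can merge two columns. Adding one element at a time does preserve it. In a row of weight $w<k\le n/2$, each column with a $1$ in that row blocks at most one of the $n-w$ zero positions (its unique twin off that row), and $n-w>w$. This is the argument of \cref{lemMinOrderingII}(b), so you can pad directly.
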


Now, we present and prove new bounds.

\begin{lemma}
    \label{lem:effBound2}
    When $2k\leq n < 1+ \binom{k+1}{2}$, $a_\mathrm{I}(n,k) \leq \ceil{\frac{-1 + \sqrt{8n-7}}{2}} $ and $a_\mathrm{II}(n,k) \leq\ceil{\frac{-1 + \sqrt{8n+1}}{2}}$.
\end{lemma}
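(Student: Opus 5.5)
The plan is to construct the draw matrices directly from columns of weight $1$ and $2$, as in \cref{ex: opt min}, and then pad the rows up to weight $k$ with the pigeonhole argument from the proof of \cref{lemMinOrderingII}.

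\textbf{Choice of $m$.} Let $m$ be the smallest integer with $\binom{m+1}{2} \geq n$. Solving $m^2+m-2n \geq 0$ gives
\[
m = \ceil{\frac{-1+\sqrt{8n+1}}{2}}.
\]
The hypothesis $n < 1+\binom{k+1}{2}$ means $n \leq \binom{k+1}{2}$, so $m \leq k$. Similarly, let $m'$ be the smallest integer with $\binom{m'+1}{2} \geq n-1$. This gives $m' = \ceil{\frac{-1+\sqrt{8n-7}}{2}}$, and again $m' \leq k$.

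\textbf{Type-II construction.} Consider the $\binom{m+1}{2} = \Vol_2(2,m)-1$ nonzero vectors of $\mathbb{F}_2^m$ with weight at most $2$. Every row of the matrix formed by all of them has weight $1 + (m-1) = m$. Choose any $n$ of these vectors as the columns of $M \in \mathbb{F}_2^{m\times n}$. Then the columns are distinct and nonzero, and every row has weight at most $m \leq k$.

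\textbf{Padding the rows.} This step needs the most care, but it is the flipping argument of \cref{lemMinOrderingII}(b). Take a row of current weight $w<k$ and flip one of its $0$'s to a $1$. A flip creates a duplicate column only if the flipped column then equals a column that already has a $1$ in that row and agrees with it elsewhere. So there are $N_{cand} = n-w$ candidates and at most $N_{conf} = w$ conflicts. Since $w \leq k-1$ and $n \geq 2k$, we get $N_{cand} - N_{conf} = n-2w \geq 2$, so a safe flip always exists. Adding $1$'s never creates a zero column. Repeating the flip until every row has weight exactly $k$ yields a separating and covering system of $m$ $k$-subsets, so $a_\mathrm{II}(n,k) \leq m$ by \cref{thm_deducing}.

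\textbf{Type-I.} Take $n-1$ distinct nonzero vectors of weight at most $2$ in $\mathbb{F}_2^{m'}$, which exist since $\binom{m'+1}{2} \geq n-1$. Adjoin the all-zero column, giving $n$ distinct columns whose rows have weight at most $m' \leq k$. Pad the rows exactly as above. Distinctness is the only requirement here, and $n-2w \geq 2 \geq 1$ still holds, so we get a separating system with $m'$ rows and $a_\mathrm{I}(n,k) \leq m'$.

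The main point to check is the row-weight bound $m \leq k$, equivalently $m' \leq k$, which is exactly where the hypothesis $n < 1+\binom{k+1}{2}$ enters. The hypothesis $n \geq 2k$ is what makes the padding step valid.
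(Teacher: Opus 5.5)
Your proof is correct, and it reaches the bound more directly than the paper does.

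\textbf{The paper's route.} The paper never writes down a matrix. It sets $k^*=\ceil{\frac{-1+\sqrt{8n-7}}{2}}$, the smallest $k'$ with $n\le 1+\binom{k'+1}{2}$; this is exactly your $m'$. It then:
\begin{enumerate}
\item[(1)] takes $a_\mathrm{I}(1+\binom{k^*+1}{2},k^*)=k^*$ from \cref{lemMinLargen};
\item[(2)] passes down to $n$, and then up from $k^*$ to $k$, using the monotonicity results \cref{lemMinOrderingII,lemMinOrderingI};
\item[(3)] gets Type-II from $a_\mathrm{II}(n,k)=a_\mathrm{I}(n+1,k)$ in \cref{lemTypeMin}.
\end{enumerate}

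\textbf{How your argument relates.} Your proof unrolls that chain. The extremal matrix behind \cref{lemMinLargen} here is exactly the set of all nonzero columns of weight at most $2$; this is what the construction of \cref{lemThresholds} yields when the number of rows equals the row weight. Discarding columns and then applying your single padding argument, which needs only $n\ge 2k$, replaces both monotonicity steps.

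\textbf{What each approach buys.} You get a self-contained, explicit construction. It does not rest on \cref{lemThresholds} or on the almost-regular construction in the Appendix, and it handles Type-II directly. The paper's last line writes $a_\mathrm{I}(n+1,k)=\ceil{\frac{-1+\sqrt{8n+1}}{2}}$ where only ``$\le$'' is justified. Also, at $n=\binom{k+1}{2}$ it uses the Type-I bound at $n+1$, just outside that bound's stated range. This is harmless, since \cref{lemMinLargen} covers that point, but your route never meets the issue. The paper's route is shorter given the lemmas already in place, and it shows the bound is a formal consequence of the structural monotonicity results.
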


\begin{proof}
    First, let us find a separating system with subsets of size $k'$, where $k' \leq k$. Observe that if $n \leq 1+\binom{k'+1}{2}$, then combining this with $n \geq 2k \geq 2k'$ as well as \cref{lemMinOrderingI,lemMinLargen} gives us $a_{\mathrm{I}}(n,k') \leq a_{\mathrm{I}}(1+\binom{k'+1}{2}, k') = k'$. Note that $n \leq 1+\binom{k'+1}{2} $ is equivalent to $(k')^2 + k' -(2n-2) \geq 0$. Letting $k^*$ be the smallest such $k'$ that satisfies this property, then we have that $a_{\mathrm{I}}(n,k^*) \leq k^* = \ceil{\frac{-1 + \sqrt{8n-7}}{2}}$.

    Finally, since $k^* \leq k$, then by \cref{lemMinOrderingI}, we have that $a_{\mathrm{I}}(n,k) \leq a_\mathrm{I}(n,k^*) = k^*$. Moreover, $a_{\mathrm{II}}(n,k) = a_{\mathrm{I}}(n+1,k) =\ceil{\frac{-1 + \sqrt{8n+1}}{2}} $.
\end{proof}

\begin{lemma}
\label{lem:modWeg}
    Suppose $n \geq 2k$, and let $q = \ceil{n/k}$. Then, $a_\mathrm{I}(n,k) \leq \ceil{(q-1) \log_q n}$.
\end{lemma}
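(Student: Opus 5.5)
Write $n = q$-ary representation idea: Wegener's construction assigns each coupon a distinct word $x_j \in \{0,\dots,q-1\}^t$ with $t = \ceil{\log_q n}$, and in each coordinate position uses $q-1$ samples $\{j : x_j(\ell) = s\}$ for $s = 1,\dots,q-1$ (the symbol $0$ being identified by elimination). Since each symbol class has at most $\ceil{n/q}\le k$ coupons, each sample has size at most $k$, and by \cref{lemAtMostEquivalence} this suffices. This gives $(q-1)t$ samples. The improvement to $\ceil{(q-1)\log_q n}$ comes from not insisting on a full product structure: I will use a mixed-radix encoding in which the coordinates use different alphabet sizes $q_1,\dots,q_t \le q$, with a coordinate of alphabet size $q_\ell$ costing $q_\ell - 1$ samples, each of size at most $\ceil{n/q_\ell}$, which must stay at most $k$.

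Concretely, the plan is as follows. Let $t = \ceil{\log_q n}$ and write $n \le q^{t-1} \cdot b$ with $b = \ceil{n/q^{t-1}} \le q$ the smallest such integer. Encode the coupons injectively into $\{0,\dots,q-1\}^{t-1}\times\{0,\dots,b-1\}$. The total number of samples is $(q-1)(t-1) + (b-1)$. The symbol classes must have size at most $k$: for the full $q$-ary coordinates, distribute the $n$ words so that each symbol in each coordinate appears at most $\ceil{n/q}\le k$ times (choose the injection balanced, e.g. take words in an order that cycles coordinates, or argue by a greedy/rounding argument); for the last coordinate with alphabet $b$, each symbol class in that coordinate has size at most $q^{t-1}$, and I need $q^{t-1}\le k$ or else also balance so the class is $\le \ceil{n/b}$ while symbol $0$ (uncounted) absorbs excess. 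A cleaner choice: put the short coordinate's symbol classes $1,\dots,b-1$ each of size at most $q^{t-1}$ only when that is $\le k$; otherwise split differently. Then it remains to compare $(q-1)(t-1)+(b-1)$ with $\ceil{(q-1)\log_q n}$.

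The comparison step: since $n > q^{t-1}(b-1)$, we have $\log_q n > t-1 + \log_q(b-1)$, and I need $b-1 \le \ceil{(q-1)\log_q(n/q^{t-1})}$, i.e. essentially $(b-1) \le (q-1)\log_q b'$ for $b'=n/q^{t-1}\in(b-1,b]$. This fails by concavity of $\log$ for intermediate $b$ (the chord $x \mapsto (q-1)\log_q x$ lies above the line through $(1,0),(q,q-1)$, so $(q-1)\log_q x \ge x-1$ for $x\in[1,q]$). Indeed concavity gives $(q-1)\log_q b' \ge b'-1 > b-2$, so $\ceil{(q-1)\log_q b'} \ge b-1$ as required. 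Thus the total is $\le (q-1)(t-1) + \ceil{(q-1)\log_q b'} = \ceil{(q-1)\log_q n}$.

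The main obstacle is the size constraint: guaranteeing every sample has at most $k$ coupons. For the last coordinate with alphabet size $b$, balancing gives classes of size about $n/b \ge n/q$, which may exceed $k$. I would handle this by instead taking the short coordinate as the \emph{first} one and distributing the encoding so that within each coordinate every nonzero symbol class has size at most $\ceil{n/q}$, letting symbol $0$ take the remainder (allowed since symbol $0$ gets no sample); this requires checking a balanced injection exists, which I would do by a counting/Hall-type or explicit lexicographic argument, and this is the step I expect to need the most care.
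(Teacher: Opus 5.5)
Your comparison step is fine: it is the paper's chord inequality $q^\epsilon\le 1+\epsilon(q-1)$ in another form. The trouble is the step you flagged as delicate. It cannot be carried out, and no Hall-type or reordering argument will rescue it.

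In the short coordinate, each symbol class contains at most $q^{t-1}$ words of $\{0,\dots,q-1\}^{t-1}\times\{0,\dots,b-1\}$. So symbol $0$ absorbs at most $q^{t-1}$ coupons, and the $b-1$ sampled classes must together contain at least $n-q^{t-1}$. Putting the short coordinate first changes nothing, since the product set is symmetric under permuting coordinates. For $(n,k)=(12,2)$ you get $q=6$ and $t=b=2$. Your encoding is then a bijection onto $\{0,\dots,5\}\times\{0,1\}$, and the single sample of the short coordinate has $6>k$ coupons.

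This is not an artifact of your encoding: the statement itself is false there. Indeed $\ceil{5\log_6 12}=\ceil{6.93}=7$. Yet seven samples of size at most $2$ contain at most $14$ ones, while twelve distinct columns of length $7$ need at least $7\cdot 1+4\cdot 2=15$ ones. Hence $a_\mathrm{I}(12,2)\ge\ceil{22/3}=8$, which is exactly the bound of \cref{lemma_fixedk}.

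More generally, fix $k\ge 2$. Then $(q-1)\log_q n\le (q-1)(1+\log_q k)=n/k+o(n)$, while \cref{lemma_fixedk} forces $a_\mathrm{I}(n,k)\ge(2n-2)/(k+1)$. So the inequality fails for all sufficiently large $n$; for example, at $(n,k)=(1000,20)$ the claimed bound is $87$, but $a_\mathrm{I}(1000,20)\ge 96$. Your own count $(q-1)(t-1)+(b-1)$ is smaller still ($6$ for $(12,2)$), so it contradicts \cref{lemma_fixedk} outright.

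The paper's proof breaks at the same point, so it does not contain the missing idea. After the $\floor{\log_q n}$ full blocks there can be up to $q^{\floor{\log_q n}}$ groups of identical columns. The final $s_t-1$ rows must split all of them at once, so each such row has weight equal to the number of groups it splits, not $1$. For $(12,2)$, one block of five rows (units digit in base $6$) leaves six identical pairs $\{x,x+6\}$. The remaining $\ceil{5\log_6 2}=2$ rows carry at most four ones, which is too few to split six pairs.

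So the gap cannot be repaired for the statement as given. Any corrected bound must at least be compatible with \cref{lemma_fixedk}. The integer-block bound of \cref{lem:Wegener} is, for instance: it gives $10$ and $98$ in the two examples above.
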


\begin{proof}
    We follow a similar constructive procedure as Wegener~\cite{Wegener}: we begin with an empty matrix with $n$ columns, and in each step we add a block of $q-1$ rows. We assign a unique length-$(q-1)$ column pattern to each of the $q$ possible digits $\{0, 1, \dots, q-1\}$ using the $q-1$ standard basis vectors and the all-zero vector. We index the $n$ columns from $0$ to $n-1$. In the $t$-th step (block), the pattern assigned to a column is determined exactly by the $t$-th digit of its index in base-$q$. Because there are $n$ total columns, any specific base-$q$ digit appears at most $\ceil{n/q}$ times in any given position. Since $q = \ceil{n/k}$, this guarantees that any specific pattern (and thus any $1$ in the row block) is used at most $\ceil{\frac{n}{\ceil{n/k}}} \le k$ times, strictly satisfying the LCCP capacity constraint.
    
    Letting $m = \ceil{(q-1) \log_q n}, t = \floor{\log_q n}$, and $\epsilon = \log_q n -t$, we split the construction into $t$ full steps and a final partial step.
    Each row has at most $k$ ones, so after $t$ full steps (i.e., $t(q-1)$ rows), the columns are partitioned into $q^t$ possible groups, where the largest group of identical columns has size $s_t \leq \ceil{n/q^t}$, which will be separated in the final partial step.
    
    Viewing the number of rows $m$ as a budget, we observe that we have a remaining budget of:
    \begin{align*}
        m_\text{part} &= \ceil{(q-1) \log_q n} - t(q-1) \\
        &= \ceil{(q-1)(t+\epsilon)}-t(q-1)  \\
        &= \ceil{\epsilon(q-1)} \quad \text{rows}.
    \end{align*}
    We must show that $m_\text{part}$ rows is enough to distinguish all columns in the remaining groups, where the largest group has $s_t$ identical columns.
    To separate a group of $s_t$ columns, we can add $s_t-1$ new rows as follows: for $i \in \{1,2,\dots,s_{t}-1\}$, in row $i$ we add a $1$ at the $i$-th column of the group and zeros elsewhere.
    Thus, what remains is to show that $m_\text{part} \geq s_t-1$, i.e., 
    \begin{equation*}
        \ceil{\epsilon(q-1)} \geq \ceil{q^\epsilon}-1.
    \end{equation*}
    Define real functions $f,g$ such that $f(\epsilon) = 1+\epsilon(q-1) , g(\epsilon) = q^\epsilon$.
    Observe that for $\epsilon\in [0,1]$, $g$ is convex, and $f$ is the line segment connecting $g(0) = 1$ and $g(1) = q$.
    By convexity, $g(\epsilon) \leq f(\epsilon)$, so $q^\epsilon \leq 1+\epsilon(q-1)$, and:
    \begin{equation*}
        s_t -1 = \ceil{q^\epsilon}-1 = \ceil{q^\epsilon-1} \leq \ceil{\epsilon(q-1)} = m_\text{part}.
    \end{equation*}
    Finally, while this construction yields rows with weights strictly less than or equal to $k$ (including the final partial rows which have weight exactly $1$), \cref{lemAtMostEquivalence} guarantees that this is mathematically equivalent to the existence of a valid minimal draw matrix with exactly weight-$k$ rows. Thus, $a_\mathrm{I}(n,k) \le m$, concluding the proof.
\end{proof}

Now, we compare the bounds that we derived with each other, as well as with the bounds from~\cite{Katona,Wegener}. \cref{fig:bounds} shows plots of the Type-I minimum for $k = 20$ (i.e., $a_{\text{I}}(n,20)$) along with the bounds from \cref{lemLogBound,lemma_fixedk,lem:effBound2,lem:modWeg}, where the lower bounds are shown in \cref{fig:bounds_a} and the upper bounds are shown in \cref{fig:bounds_b}. Observe that the logarithmic bound from \cref{lemLogBound} is very loose when $n$ is large, and the same can be said of the bounds by Katona~\cite{Katona} and Wegener~\cite{Wegener}. Moreover, the bounds we derived are relatively tight, though finding an even tighter bound or approximation is an open problem.

\begin{figure}[htbp]
    \centering
    \begin{subfigure}[b]{0.48\linewidth}
        \centering
        \includegraphics[width=\linewidth]{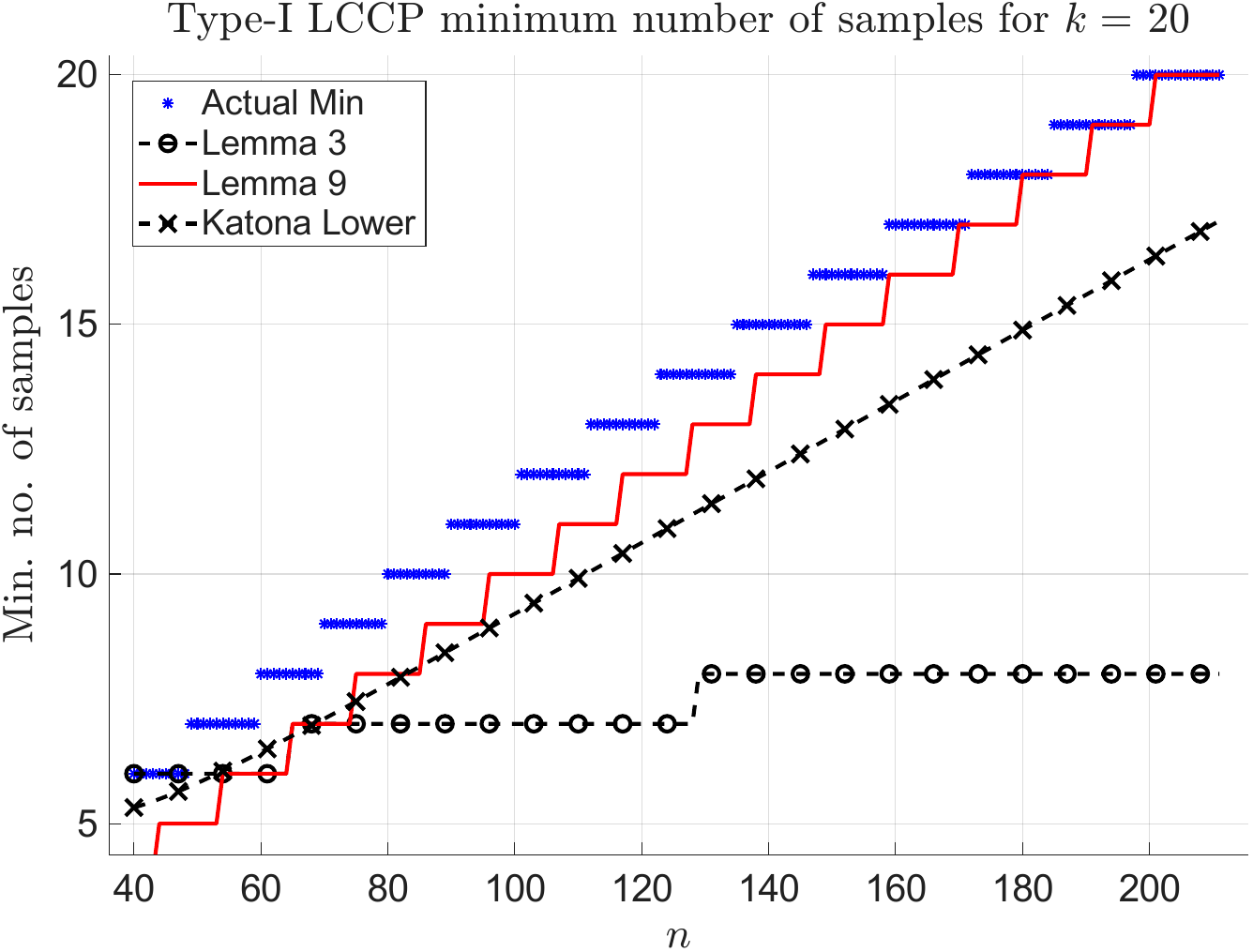}
        \caption{Comparison of lower bounds.}
        \label{fig:bounds_a}
    \end{subfigure}
    \begin{subfigure}[b]{0.48\linewidth}
        \centering
        \includegraphics[width=\linewidth]{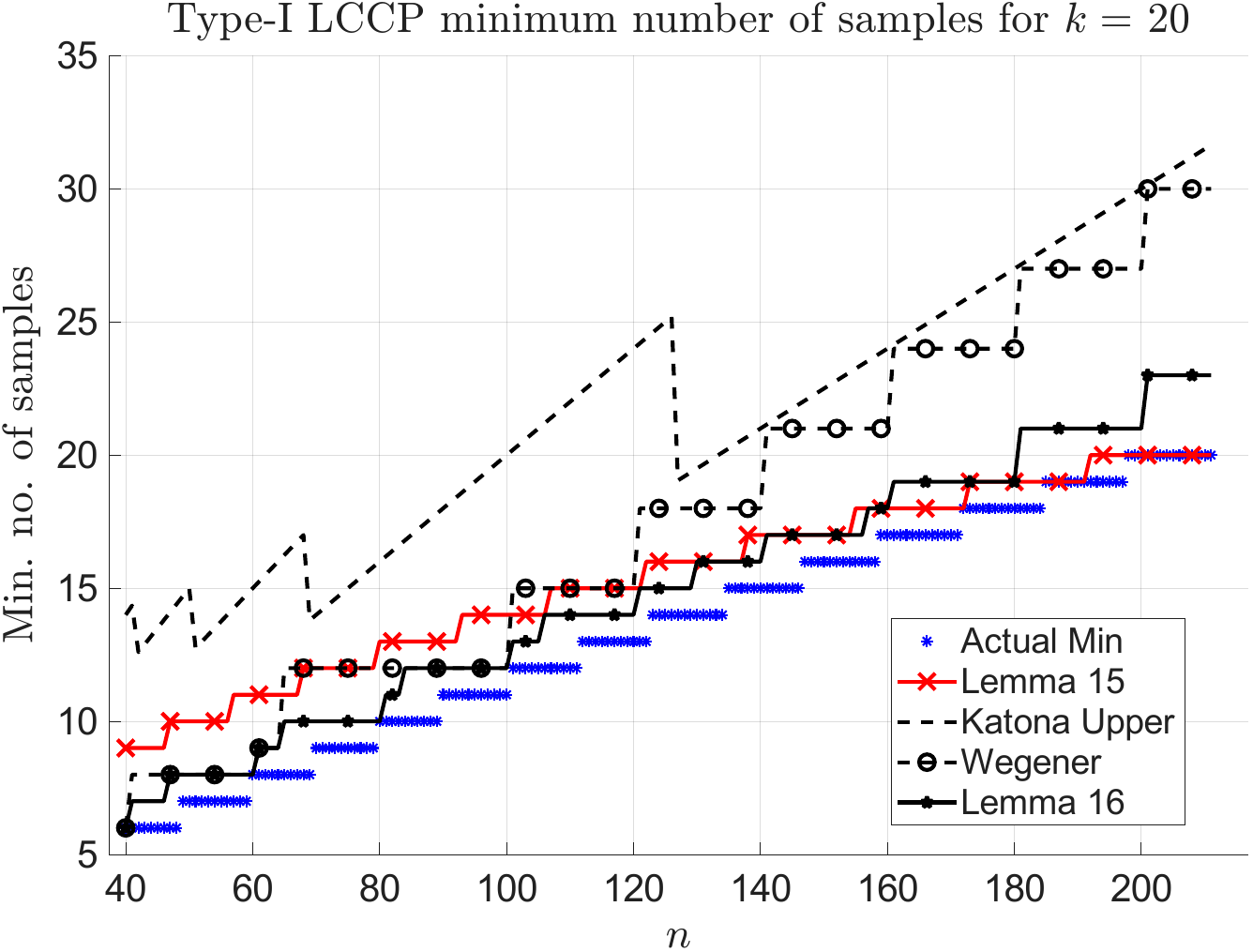}        
        \caption{Comparison of upper bounds.}
        \label{fig:bounds_b}
    \end{subfigure}
    
    \caption{Comparison of bounds on the smallest number of samples required for the $(n,20)$-Type-I LCCP, where (a) shows lower bounds and (b) shows upper bounds. Our bounds are given as solid lines.}
    \label{fig:bounds}
\end{figure}

\section{Partial Recovery}
In this section, we consider the \textit{partial recovery} case where we need to determine only $r$ of the coupons' labels for $r < n$, whereas previously we considered the full recovery case with $r = n$.

Let $Q(n,k,r)$ denote the first sample in which the image of $f$ can be determined for $r$ coupons; i.e., there exists a subset $J \subset [n]$ of size $r$ such that $f(j)$ can be determined for every $j \in J$. Accordingly, let $Q_\text{I}(n,k,r)$ and $Q_\text{II}(n,k,r)$ be the first such sample under the Type-I and II settings respectively, and let $a_\mathrm{I}(n,k,r)$ and $a_\mathrm{II}(n,k,r)$ denote their minimum possible values. Equivalently, $a_\mathrm{I}(n,k,r)$ (resp. $a_\mathrm{II}(n,k,r)$) is the minimum number of rows for a binary matrix with $n$ columns and weight-$k$ rows to have $r$ columns that appear exactly once in the matrix (resp. and are nonzero). Unlike with full recovery, the remaining $n-r$ columns can be anything, including zero or identical to each other.


We begin with the following observation.
\begin{lemma}
    For all $n, k,r \in \mathbb{Z}^+$ such that $n > k$ and $n > r$, $a_\mathrm{I}(n,k,r) \leq a_\mathrm{II}(n,k,r)$. 
\end{lemma}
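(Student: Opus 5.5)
The plan is to argue by inclusion of the feasible sets, using the matrix reformulation of partial recovery stated just before the lemma. That reformulation says $a_\mathrm{I}(n,k,r)$ is the minimum number of rows of a binary matrix with $n$ columns and weight-$k$ rows having $r$ columns that each appear exactly once. The quantity $a_\mathrm{II}(n,k,r)$ is the same minimum with the extra requirement that these $r$ columns are nonzero. Thus every matrix feasible for the Type-II problem is feasible for the Type-I problem, and the minimum over a larger feasible set can only be smaller.

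To make this rigorous at the level of samples rather than only matrices, I would first justify the reformulation from the earlier machinery, since the paper states it without proof. For a system $\{S_i\}_{i=1}^m$, the argument of \cref{thm_deducing} applied coupon by coupon shows two things. First, the label of $j$ is determined exactly when $c_f(j)$ is the singleton $\{f(j)\}$, and by bijectivity of $f$ and \cref{lemma_separable}(d) this means $c(j)=\{j\}$. Second, in Type-II one additionally needs $f(j)$ to have been observed, i.e.\ $I(j)\neq\emptyset$. By \cref{lemma_separable}(a),(c), the condition $c(j)=\{j\}$ says that column $j$ of the draw matrix appears exactly once, and $I(j)\neq\emptyset$ says that column $j$ is nonzero.

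With this in hand, let $m=a_\mathrm{II}(n,k,r)$ and fix a realizing system $\{S_i\}_{i=1}^m$ together with a set $J$ of $r$ coupons whose labels are determined in the Type-II setting. Each $j\in J$ then has $c(j)=\{j\}$. In the Type-I setting the solver has strictly more information, since $L$ is known, and the deduction $c_f(j)=\{f(j)\}$ uses only set operations on the $f(S_i)$, which are available in both settings. So the same system determines the labels of $J$ in Type-I, giving $a_\mathrm{I}(n,k,r)\le m$.

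The only delicate point is the justification of the per-coupon version of \cref{thm_deducing} in Type-I. There one might worry about the complement operation relative to $L$. However, $L$ being known only enlarges the set of permissible operations, so monotonicity holds regardless. I expect no real obstacle beyond stating this cleanly.
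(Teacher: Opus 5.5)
Your proposal is correct and takes the paper's approach. The paper simply observes that the Type-II matrix condition (the $r$ target columns appear exactly once \emph{and} are nonzero) is stronger than the Type-I condition, so every Type-II-feasible draw matrix is Type-I-feasible, and the minimum over the larger feasible set can only be smaller. Your per-coupon justification of the matrix reformulation, via \cref{lemma_separable} and the argument of \cref{thm_deducing}, is a sound elaboration of a step the paper takes for granted.
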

\begin{proof}
    This is immediate from the observation that the property that a matrix needs to satisfy in the Type-II setting is stronger than that required for Type-I. 
\end{proof}

Since we only need to deduce the labels of $r$ coupons, we expect that the number of samples required will be less than that required to deduce the labels of all coupons, especially when $r$ is much smaller than $n$. This can be seen through an analog of \cref{lemLogBound}. 

\begin{lemma}
    \label{lemPartialLogBound}
    For $n,k,r \in \mathbb{Z}^+$ such that $n > r$ and $n > k$, $$a_\mathrm{II}(n,k,r) \geq a_\mathrm{I}(n,k,r) \geq a_\mathrm{II}(r, \ceil{r/2}) = \ceil{\log_2 (r+1)}.$$
\end{lemma}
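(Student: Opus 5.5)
The plan is to prove the chain in three links. The first inequality, $a_\mathrm{II}(n,k,r) \geq a_\mathrm{I}(n,k,r)$, is the preceding lemma. The final equality, $a_\mathrm{II}(r,\ceil{r/2}) = \ceil{\log_2(r+1)}$, is exactly the last line of \cref{corMinOptimization}. The only real content is the middle inequality $a_\mathrm{I}(n,k,r) \geq \ceil{\log_2(r+1)}$.

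I would argue the middle inequality directly with a counting argument, in the style of \cref{lemLogBound}(a). Take a draw matrix $M \in \mathbb{F}_2^{m\times n}$ with weight-$k$ rows that achieves Type-I partial recovery of $r$ coupons, with $m = a_\mathrm{I}(n,k,r)$. By the reformulation given at the start of this section (which is the partial analogue of \cref{thm_deducing}, via \cref{lemma_separable}(d) applied coupon by coupon), there are $r$ columns of $M$ that each appear exactly once. These $r$ columns are distinct vectors in $\mathbb{F}_2^m$, which already gives $r \leq 2^m$.

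The point needing care is excluding the zero vector, so as to get $r \leq 2^m - 1$. I would argue that not all $n$ columns can be pairwise distinct nonzero vectors while also exhausting all of $\mathbb{F}_2^m$. More simply: suppose the $r$ uniquely appearing columns include every vector of $\mathbb{F}_2^m$, so $r = 2^m$. Since $n > r = 2^m$, some further column must coincide with one of these vectors. That contradicts the assumption that the vector appears exactly once. Hence the $r$ unique columns miss at least one vector, $r \leq 2^m - 1$, and so $m \geq \ceil{\log_2(r+1)}$.

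Alternatively, I could reduce to full recovery. Restrict $M$ to the $r$ unique columns together with one representative of the remaining columns, and then invoke \cref{corRandMin} (subsets of arbitrary size). This is where I expect the only subtlety: the restricted rows no longer have weight exactly $k$, and the representative column must be handled so that the bound comes out as $\log_2(r+1)$ rather than $\log_2 r$. This is why I prefer the direct counting argument above, which uses $n > r$ precisely to rule out $r = 2^m$. Combining the three links gives the stated chain.
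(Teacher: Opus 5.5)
Your proof is correct and follows the paper's argument. The paper notes that since $n > r$ there is at least one column besides the $r$ uniquely-appearing ones, and it must differ from all of them, so there are at least $r+1$ distinct columns and $2^m \geq r+1$; your contradiction ruling out $r = 2^m$ is the same observation (it rules out exhausting $\mathbb{F}_2^m$, not the zero vector specifically), and the outer links via the preceding lemma and \cref{corMinOptimization} are what the paper uses implicitly.
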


\begin{proof}
    Since we require $r$ columns to appear exactly once, this means that any suitable draw matrix with $m$ rows has at least $r+1$ different columns, which means $2^m \geq r+1$, implying $m \geq \ceil{\log_2(r+1)}$. The same can be said about the Type-II setting, since we could have $r$ distinct nonzero columns and allow the rest to be zero. 
\end{proof}

Using this lemma as well as \cref{lemMinOrderingII,lemMinOrderingI}, we can find $a(n,k,r)$. 

\begin{theorem}
\label{thmMinPartial}
Let $n,k,r$ be positive integers such that $n > r$ and $n > k$, and let $m^* =  a_\mathrm{II}(r, \ceil{r/2}) =\ceil{\log_2(r+1)}$, i.e., the minimum number of samples to separate $r$ coupons for the full recovery case. 
Defining $s:= k - (n-r)$, we observe that the partial recovery minimum is determined by the ability to allocate ones into $r$ target columns. Namely, we have the three regimes:

\begin{enumerate}
    \item [(1)]\textbf{Starved Regime}: If $k \leq \ceil{r/2}$, then  
    $$a_\mathrm{I}(n,k,r) = a_\mathrm{II}(n,k,r) = a_\mathrm{II}(r,k).$$
    
    \item [(2)] \textbf{Flooded Regime}: If $s > \ceil{r/2}$, then  
    $$a_\mathrm{I}(n,k,r) = a_\mathrm{I}(r, s),$$  $$a_\mathrm{II}(n,k,r) = a_\mathrm{II}(r, s).$$
    
    \item  [(3)] \textbf{Optimal Regime}: If $s \leq \ceil{r/2} < k$, then 
    $a_\mathrm{I}(n,k,r)$ and $a_\mathrm{II}(n,k,r)$ achieve the absolute minimum $m^*$, with the exception of two extreme cases requiring $m^* + 1$ samples:
    \begin{itemize}
        \item $r = 2^{m^*} - 1$.
        \item $r = 2^{m^*} - 3$ and $s = \ceil{r/2}$.
    \end{itemize}
\end{enumerate}
\end{theorem}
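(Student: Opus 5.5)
The plan is to work with the draw-matrix reformulation given just before \cref{lemPartialLogBound}. A valid matrix has $m$ rows of weight $k$ and $n$ columns. We call a set of $r$ columns the \emph{targets} and the remaining $n-r$ columns the \emph{dump}. Each target must appear exactly once, which means it differs from every other target and from every dump column; in Type-II it must also be nonzero.

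In row $i$, let $t_i$ be the number of ones among the targets. Since the dump has only $n-r$ columns, $t_i$ must satisfy $\max(s,0)\le t_i\le \min(k,r)$. Conversely, any target block with row weights in this window can be completed by a dump block, subject only to the condition that dump columns avoid the target vectors. The whole theorem therefore reduces to a question about $r$ distinct columns whose row weights lie in the window $[s,k]$, together with the choice of at least one dump vector. For lower bounds we will repeatedly use \cref{lemAtMostEquivalence} (the ``at most $k$'' equivalence), its complemented form, \cref{lemTypeMin}, and the monotonicity results \cref{lemMinOrderingII,lemMinOrderingI}.

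\textbf{Starved regime} ($k\le\ceil{r/2}$, so $r\ge 2k-1$).
\begin{itemize}
\item Upper bound: take a minimal Type-II matrix for $(r,k)$ and append all-zero dump columns. These are distinct from the nonzero targets, so the bound holds for both types.
\item Type-I lower bound, case 1: some target is zero. Then every dump column is nonzero. Pick one dump column $d$; the targets plus $d$ form $r+1$ distinct columns with target row weights at most $k$. By \cref{lemAtMostEquivalence} this needs at least $a_\mathrm{I}(r+1,k)$ rows, and $a_\mathrm{I}(r+1,k)=a_\mathrm{II}(r,k)$ by \cref{lemTypeMin}.
\item Type-I lower bound, case 2: all targets are nonzero. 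Appending a zero column gives $r+1$ distinct columns with row weights at most $k$, which again needs $a_\mathrm{I}(r+1,k)=a_\mathrm{II}(r,k)$ rows.
\end{itemize}

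\textbf{Flooded regime} ($s>\ceil{r/2}$).
\begin{itemize}
\item Upper bound: take a minimal matrix for $(r,s)$ of the appropriate type and fill the dump with all-ones columns. This gives row weight $s+(n-r)=k$. If the all-ones vector already occurs as a target, modify one target using the transfer argument from the proof of \cref{lemTypeMin}.
\item Lower bound: complement the target block. Its rows then have weight at most $r-s<r/2$, and we are back in a starved-type setting. Applying \cref{lemMinSymmetry} and \cref{lemAtMostEquivalence} to the complement, with the same case split on whether a target equals a dump vector, should give $a_\mathrm{I}(r,s)$, respectively $a_\mathrm{II}(r,s)$. Here the zero/nonzero bookkeeping is the delicate point, because complementing swaps the roles of the zero and all-ones columns.
\end{itemize}

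\textbf{Optimal regime} ($s\le\ceil{r/2}<k$). The lower bound $m^*$ is \cref{lemPartialLogBound}. For the upper bound, with $m=m^*$ we must choose $r$ distinct nonzero vectors in $\mathbb{F}_2^{m}$, plus at least one distinct dump vector, such that every row weight $t_i$ of the targets lies in $[s,k]$. The full set of nonzero vectors has every row weight equal to $2^{m-1}$, so start from it.
\begin{itemize}
\item Delete $2^m-1-r$ vectors, choosing them in complementary pairs where possible, so that row weights stay near $r/2$ and hence inside $[s,k]$.
\item The freed vectors supply the dump vector, typically the zero vector. A dump column placed in row $i$ adds ones to make the row weight exactly $k$, using $k-t_i\le n-r$.
\end{itemize}
The exceptional cases fall out of this counting:
\begin{itemize}
\item If $r=2^{m}-1$, there is no slack. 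Either all nonzero vectors are targets and every dump column is zero, forcing $t_i=k>2^{m-1}$, which is impossible; or, in Type-I, the zero vector is a target and a single nonzero vector is reserved for the dump, which again fails the window. In both cases one more row is needed.
\item If $r=2^{m}-3$ and $s=\ceil{r/2}$, we delete two vectors, which also must serve as the only dump candidates. A weight count shows that some row drops below $s$.
\end{itemize}
Finally, adding one row, with a single $1$ in a suitable target and the rest of the weight placed in the dump, repairs both exceptional cases.

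The main obstacle is the optimal regime: exhibiting the deletion pattern uniformly in $r$ while keeping every $t_i$ inside $[s,k]$, and proving that the two listed cases are the only failures. The plan there is a parity and row-weight counting argument on the deleted set $\mathbb{F}_2^{m}\setminus(\text{targets})$.
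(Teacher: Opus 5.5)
Your block decomposition (targets versus dump, target row weights confined to $[s,k]$) is the one the paper uses. Your starved-regime argument is correct: appending a dump column or a zero column and using $a_\mathrm{I}(r+1,k)=a_\mathrm{II}(r,k)$ is a harmless variant of the paper's swap of a zero target for a dump column. The real problem is that the statement is false in two places, and your plan breaks exactly there, as does the paper's own proof.

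\textbf{Flooded upper bound.} The step ``if $\mathbf 1$ occurs as a target, fix it with the transfer argument of \cref{lemTypeMin}'' cannot work. The pigeonhole count behind that transfer ($r-s$ candidates against up to $s-1$ conflicts) needs $r\ge 2s$, which is exactly what $s>\ceil{r/2}$ excludes, and no other repair exists either. Take $(n,k,r)=(5,4,4)$, so $s=3>2$ and the claimed value is $a_\mathrm{I}(4,3)=a_\mathrm{II}(4,3)=3$. Each sample omits one coupon, so with three samples at least two coupons lie in all of them and share an index set. Hence at most three labels are recoverable, and $a_\mathrm{I}(5,4,4)=a_\mathrm{II}(5,4,4)=4$. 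Equivalently, every $3$-row weight-$3$ separating system on $[4]$ contains the all-ones column, which collides with the all-ones dump.

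The right move is to complement the \emph{whole} draw matrix. This preserves which columns are unique and sends $(n,k,r)$ to $(n,n-k,r)$ with $n-k=r-s<\ceil{r/2}$, a starved instance. So $a_\mathrm{I}(n,k,r)=a_\mathrm{II}(r,r-s)=a_\mathrm{I}(r+1,r-s)$, and one checks that the Type-II value agrees. This can exceed $a_\mathrm{I}(r,s)=a_\mathrm{I}(r,r-s)$ by one; it always does when $r-s=1$, giving $r$ versus $r-1$. Your lower bound $a_\mathrm{I}(n,k,r)\ge a_\mathrm{I}(r,s)$ is valid, and it needs no zero/all-ones bookkeeping since $a_\mathrm{I}(r,s)=a_\mathrm{II}(r,s)$ for $s<r<2s$. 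But it is not tight. The paper's construction has the same hole: the \cref{lemThresholds} matrix it complements is a covering system, so it uses $a_\mathrm{II}(r,r-s)$ rows rather than $a_\mathrm{II}(r,s)$.

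\textbf{Optimal regime.} For $r=2^{m^*}-1$ in Type-I, your claim that reserving one nonzero vector for the dump ``again fails the window'' is false when that vector is $\mathbf 1$. The targets are then all vectors except $\mathbf 1$, with row weight $2^{m^*-1}-1$. The $n-r$ all-ones dump columns complete every row to $k$ exactly when $s=\ceil{r/2}-1$. For $(n,k,r)=(5,3,3)$, the samples $S_1=\{2,4,5\}$ and $S_2=\{3,4,5\}$ give coupons $1,2,3$ the distinct index sets $\emptyset,\{1\},\{2\}$. So Type-I recovers three labels with $m^*=2$ samples. The paper's ``$M$ must contain every non-zero vector'' is likewise valid only in Type-II.

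The Type-II half of this exception, and the second exception, are true, but your justifications need repair.
\begin{itemize}
\item Your ``completion'' step hides the real constraint: $k\mathbf 1-t$ must be a sum of $n-r$ non-target vectors. For $m=3$, $r=5$, $s=3$ in Type-II, deleting $e_1,e_2$ leaves $t=(3,3,4)$, so no row drops below $s$. The obstruction is that $k\mathbf 1-t$ is not such a sum over $\{0,e_1,e_2\}$; for $k=4$ the lone dump column would have to be $(1,1,0)$, which is a target. Note also that $0$ is a dump candidate, which your count omits.
\item An extra row with a single $1$ on the targets needs $n-r\ge k-1$, i.e.\ $s\le 1$. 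In general that row must put at least $s$ ones on the targets; the paper puts $\ceil{r/2}$ there, avoiding the formerly all-ones target.
\item For the generic case you still need targets that avoid both $0$ and $\mathbf 1$ and have constant row weight $\ceil{r/2}$, so that the dump can be all-ones and zero columns. For odd $r$ this requires an explicit gadget, such as the paper's $v_1,v_2,v_3$ plus complementary pairs.
\end{itemize}
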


\begin{proof}
    For any draw matrix separating $r$ target coupons, we can represent it as a block matrix $[M|M'] \in \mathbb{F}_2^{m \times n}$, where $M \in \mathbb{F}_2^{m \times r}$ corresponds to the target coupons and $M' \in \mathbb{F}_2^{m \times (n-r)}$ to the remaining coupons. In every row, the $k$ ones must be distributed between $M$ and $M'$. Since the row weight of $M'$ cannot exceed $n-r$, the row weight $w$ of $M$ is constrained by $k - (n-r) \leq w \leq k$. Let $s = k - (n-r)$.

    We prove the bounds by analyzing three regimes based on the capacity to allocate ones to $M$.

    \textbf{Case 1 ($k \leq \ceil{r/2}$):} 
    Suppose a matrix $M \in \mathbb{F}_2^{m \times r}$ corresponds to a separating and covering system for $[r]$ with weight $k$. The block matrix $[M|0] \in \mathbb{F}_2^{m \times n}$ separates and covers the first $r$ coupons, so $a_\mathrm{I}(n,k,r) \leq a_\mathrm{II}(n,k,r) \leq m$. This holds for $m = a_\mathrm{II}(r,k)$.
    
    To prove the reverse inequality, let $m = a_\mathrm{II}(n,k,r)$ and consider a minimal partial recovery matrix $[M|M'] \in \mathbb{F}_2^{m \times n}$. We extract the submatrix $M \in \mathbb{F}_2^{m \times r}$, which successfully separates and covers the $r$ target coupons. Since the row weight of $M$ is at most $k' \leq k$, \cref{lemAtMostEquivalence} gives $a_\mathrm{II}(r,k') \leq m$. Moreover, since $k' \leq k \leq \ceil{r/2}$, \cref{lemMinOrderingII} yields $a_\mathrm{II}(r, k) \leq a_\mathrm{II}(r,k') \leq m$, establishing $a_\mathrm{II}(r,k) \leq a_\mathrm{II}(n,k,r)$. If we are evaluating the Type-I minimum and $M$ contains an all-zero column, we can swap it with a nonzero column from $M'$ to form a modified $m \times r$ matrix. As we are only extracting an $m \times r$ full recovery matrix to satisfy the lower bound for $a_\mathrm{II}(r,k)$, we do not need to worry about creating duplicates in the remaining $M'$ submatrix; we only need our extracted $r$ columns to be distinct and nonzero.

    \textbf{Case 2 ($s > \ceil{r/2}$):} 
    Here, $2s > r$ and $r > 2(n-k)$.
    For the lower bound, let $m = a_\mathrm{II}(n,k,r)$ and consider any minimal matrix $[M|M']$. The submatrix $M'$ has a maximum row weight of $n-r$, meaning $M$ has a minimum row weight of $s$. We extract $M$ and consider its component-wise complement $M^\mathrm{c}$, which has $r$ distinct columns and a maximum row weight of $r-s$. Since $s > \ceil{r/2}$, we strictly have $r \geq 2(r-s)$. Thus, the existence of $M^\mathrm{c}$ and \cref{lemAtMostEquivalence} implies $a_\mathrm{I}(r, r-s) \leq m$. Applying \cref{lemMinSymmetry,lemTypeMin}, we obtain $a_\mathrm{II}(r,s) = a_\mathrm{I}(r,s) = a_\mathrm{I}(r,r-s) \leq m$. Therefore, $a_\mathrm{II}(n,k,r) \geq a_\mathrm{II}(r,s)$. The Type-I bound follows similarly.
    
    To prove the reverse inequality, we construct $[M| 1_{m\times (n-r)}]$ where $M \in \mathbb{F}_2^{m \times r}$ has weight $s$ and $m = a_\mathrm{II}(r,s)$. Since $2s > r$, \cref{lemTypeMin,lemMinSymmetry} yield $a_\mathrm{II}(r, s) = a_\mathrm{I}(r,s) = a_\mathrm{I}(r, r-s)$. We construct $M$ by generating a matrix $M_\text{orig}$ using the procedure from \cref{lemThresholds} for weight $r-s$ and taking its component-wise complement. For this to form a valid Type-II partial recovery matrix, $M$ must avoid both the all-zeros and all-ones vectors. $M_\text{orig}$ natively avoids the all-zeros vector (ensuring $M$ avoids all-ones) since $r-s > 0$. Furthermore, $M_\text{orig}$ avoids the all-ones vector (ensuring $M$ avoids all-zeros) because utilizing it would require full capacity, meaning $r-s = 2^{m-1} - 1$ and $r = 2^m - 2$. This implies $s = 2^{m-1}-1$, yielding $r = 2s$, which explicitly contradicts our regime condition $2s > r$.

    \textbf{Case 3 ($s \leq \ceil{r/2} < k$):} 
    By \cref{lemPartialLogBound}, $a_\mathrm{I}(n,k,r)$ and $a_\mathrm{II}(n,k,r)$ are at least $m^*$. We prove the upper bounds by allocating exactly $\ceil{r/2}$ ones to $M$ and the remaining $k-\ceil{r/2} \leq n-r$ ones to $M'$ per row. 
    
    If $r$ is even ($r \leq 2^{m^*} - 2$), we populate $M$ with $r/2 \leq 2^{m^*-1} - 1$ complementary pairs of vectors in $\mathbb{F}_2^{m^*}$, strictly avoiding both the all-zeros and all-ones vectors. This ensures every row in $M$ has exactly $\ceil{r/2}$ ones. We then allocate exactly $k - \ceil{r/2}$ ones to $M'$ in every row by assigning $k - \ceil{r/2}$ columns of $M'$ to be all-ones vectors, and the remaining columns of $M'$ to be all-zeros vectors. This validly achieves $m^*$.

    If $r$ is odd, we evaluate three sub-cases:
    \begin{itemize}
        \item \textbf{$r = 2^{m^*} - 1$:} We require the $r$ target columns to appear exactly once in the matrix. Because $r = 2^{m^*} - 1$, $M$ must contain every non-zero vector in $\mathbb{F}_2^{m^*}$. Since $s \leq \ceil{r/2} < k$, we have $k - \ceil{r/2} \ge 1$, meaning we must allocate at least one $1$ to $M'$ in every row. Thus, $M'$ must contain at least one non-zero column. Since $M$ already exhausts all possible non-zero vectors of length $m^*$, this column in $M'$ will inevitably collide with a column in $M$, violating the requirement that the $r$ target columns appear exactly once. Thus, $m^*$ rows are strictly insufficient.
        
        To construct a valid matrix using $m^*+1$ rows, we populate the first $m^*$ rows of $M$ with all $2^{m^*}-1$ non-zero vectors, and assign exactly $k - \ceil{r/2}$ columns of $M'$ to be all-ones vectors (with the remaining columns of $M'$ being all-zeros). This creates a collision between the all-ones column in $M$ and the all-ones columns in $M'$. To resolve this, we append the $(m^*+1)$-th row: we assign a $1$ to the colliding columns in $M'$, a $0$ to the all-zeros columns in $M'$, and distribute the remaining $\ceil{r/2}$ ones arbitrarily among the columns of $M$, ensuring the column in $M$ that was previously all-ones receives a $0$. This guarantees the all-ones column in $M$ becomes $(1, \dots, 1, 0)^T$, strictly separating it from the $M'$ columns which become $(1, \dots, 1, 1)^T$. The all-zeros columns in $M'$ become $(0, \dots, 0, 0)^T$, which cannot collide with $M$ since every column in $M$ contains at least one $1$ in the first $m^*$ rows.
        
        \item \textbf{$r = 2^{m^*} - 3$:} Here, $\ceil{r/2} = 2^{m^*-1}-1$. If $s = \ceil{r/2}$, then $k - \ceil{r/2} = n-r$, meaning $M'$ must consist entirely of all-ones vectors. To avoid colliding with $M'$, $M$ cannot use the all-ones vector. Thus, $M$ must use every vector in $\mathbb{F}_2^{m^*}$ except three (including $\mathbf{1}$). The three excluded vectors (e.g., $\mathbf{1}$, $v$, and $\overline{v}$) naturally remove exactly two ones from every single row. Consequently, every row in $M$ is forced to have exactly $2^{m^*-1} - 2 = \ceil{r/2} - 1$ ones. Adding the ones from $M'$ yields a total row weight of $(\ceil{r/2} - 1) + (k- \ceil{r/2}) = k-1$, which strictly violates the weight $k$ constraint. This parity conflict occurs regardless of whether the all-zeros vector is permitted, meaning both Type-I and Type-II strictly require $m^*+1$ samples. Constructing a valid matrix requires inserting an additional row as done previously, yielding $m^*+1$ rows. If $s < \ceil{r/2}$, the all-ones constraint on $M'$ is relaxed, and $m^*$ rows suffice.
        
        \item \textbf{$r \leq 2^{m^*} - 5$:} We populate $M$ with vectors $v_1 = [0,1,1,\dots,1]^T$, $v_2 = [1,0,1,\dots,1]^T$, $v_3 = [1,1,0,\dots,0]^T$, followed by $\ceil{r/2}-2$ complementary pairs. This leaves $2^{m^*}-8$ nonzero vectors, which is sufficient to form the required $2^{m^*-1}-4 \geq (r+1)/2-2$ pairs, achieving $m^*$.
    \end{itemize}
\end{proof}

\begin{remark}
    Thus, for the minimum number of samples, every case in the partial recovery scenario reduces to the full recovery scenario with different parameters. It can be seen that the Type-I minimum is almost identical to the Type-II minimum, where the only difference lies in the flooded regime where there is a ``surplus'' of ones that force many columns of the draw matrix to be all-ones. In this case, the allowance of an all-zero column could potentially give the Type-I setting an ``advantage'' over Type-II. 
    Moreover, if the problem setting is changed so that each sampled subset can have a size of \textit{at most} $k$, then the absolute minimum $m^*$ can be achieved in both the flood and optimal regimes.
\end{remark}

We conclude the section with the following result for when each sample has \textit{at most} $k$ coupons rather than exactly $k$ coupons.
\begin{corollary}
    Let $b(n,k,r)$ be the minimum number of samples of size \textit{at most} $k$ such that the label of at least $r$ coupons can be deduced. Then:
    \begin{align}
        b(n,k,r) = \begin{cases}
            a_\mathrm{II}(r,k), & k \leq \ceil{r/2} \\
            m^*, & k > \ceil{r/2}
        \end{cases}.
    \end{align}
\end{corollary}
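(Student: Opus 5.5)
The plan is to prove the two regimes separately, in each case establishing matching lower and upper bounds. Throughout, a system of samples of size at most $k$ corresponds to a binary matrix $[M|M']\in\mathbb{F}_2^{m\times n}$ with row weights at most $k$, in which the $r$ target columns of $M$ must be nonzero and must appear exactly once in the whole matrix. Padding is free here: the $n-r$ non-target columns may be set to all-zeros, which cannot collide with the nonzero target columns. We may therefore take $M'=0$ whenever that is convenient. The upper-bound question then reduces to finding an $m\times r$ matrix with distinct nonzero columns and row weights at most $k$.

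\textbf{Lower bound, both regimes.} Any valid system restricts to its target submatrix $M$, which has $r$ distinct nonzero columns and row weights at most $k$. This gives two facts. First, \cref{lemPartialLogBound} (whose counting argument does not use exact row weights) yields $b(n,k,r)\ge m^*=\ceil{\log_2(r+1)}$. Second, $M$ is a separating and covering system for $[r]$ using subsets of size at most $k$. In the regime $k\le\ceil{r/2}$ we have $r\ge 2k-1$, so we want the Type-II analogue of \cref{lemAtMostEquivalence}, namely that allowing subsets of size at most $k$ does not reduce $a_\mathrm{II}(r,k)$. I will obtain it as in Case 1 of \cref{thmMinPartial}. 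Let $k'$ be the largest row weight of $M$. Pad rows up to weight $k'$ using the transfer/flip argument of \cref{lemMinOrderingII}(b): in a row of weight $w<k'\le\ceil{r/2}$, the number of zeros $r-w$ exceeds the number of ones $w$, so a safe $0$ can be flipped without creating a duplicate. Then \cref{lemMinOrderingII}(b) monotonicity in $k$ gives $a_\mathrm{II}(r,k)\le a_\mathrm{II}(r,k')\le m$. Alternatively, I can cite the remark after \cref{lemThresholds} together with \cref{lemTypeMin} at $n=r\ge 2k-1$.

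\textbf{Upper bound.} For $k\le\ceil{r/2}$, take a minimal Type-II system for $[r]$ with weight-$k$ rows and append zero columns, giving $a_\mathrm{II}(r,k)$ samples. For $k>\ceil{r/2}$, use the construction from \cref{lemLogBound}(c), as in the even case of Case 3 of \cref{thmMinPartial}. Choose $r$ distinct nonzero vectors in $\mathbb{F}_2^{m^*}$ made of complementary pairs; if $r$ is odd, add one unpaired vector, such as a vector of weight one or the all-ones vector. Every row then has weight $\floor{r/2}$ or $\ceil{r/2}$, which is at most $k$. Because $r\le 2^{m^*}-1$, such vectors exist: there are $2^{m^*-1}-1$ complementary pairs avoiding $0$ and $\mathbf 1$, plus the all-ones vector available as the singleton. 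Filling $M'$ with zeros gives $b\le m^*$.

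\textbf{Main obstacle.} The only delicate step is the lower bound in the starved regime, that is, checking that the at-most-$k$ relaxation does not beat $a_\mathrm{II}(r,k)$ when $r$ is near $2k-1$, which lies outside the stated hypothesis $n\ge 2k$ of \cref{lemAtMostEquivalence}. I will handle it directly with the pigeonhole flipping argument above, which needs only $r-w>w$, that is, $w<r/2$; this holds for every $w<k'\le\ceil{r/2}$.
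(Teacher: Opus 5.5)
Your argument is correct and has the same skeleton as the paper's. Lower bounds come from restricting to the $r$ target columns, together with the counting bound of \cref{lemPartialLogBound}. Upper bounds come from zero-padding the non-target columns. Both sub-steps differ from the paper's, and in each case your version is more careful than the paper's two-line proof.

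\textbf{Case $k\le\ceil{r/2}$.} The paper just cites \cref{lemAtMostEquivalence}. That lemma's hypothesis $n\ge 2k$ (here $n=r$) excludes the boundary case $r=2k-1$, and the lemma itself rests on the knapsack maximality in \cref{lemThresholds}. Your safe-flip argument avoids both issues: a row of weight $w<r/2$ has more zeros than ones, and by distinctness of columns each $1$ blocks at most one candidate $0$. It is self-contained and covers all $r\ge 2k-1$. You could even pad every row straight to weight $k$, since $w\le k-1<r/2$, and skip the appeal to \cref{lemMinOrderingII}(b).

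\textbf{Case $k>\ceil{r/2}$.} The paper says to take any $r$ nonzero columns of $\mathbb{F}_2^{m^*}$. Taken literally, this does not bound the row weights: for $r=4$, $m^*=3$, $k=3$, the four vectors with first coordinate $1$ give a row of weight $4>k$. Your complementary pairs, plus the all-ones vector when $r$ is odd, keep every row weight at most $\ceil{r/2}<k$, which is what is actually needed.

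\textbf{Type-I reading.} You take the target columns to be nonzero, which is the Type-II condition. If $b$ is meant in the Type-I sense, one target column may be the unique zero column. Before flipping, replace it by any non-target column, which exists since $n>r$. That column is nonzero because the zero column occurs only once, and it differs from every target because targets occur only once; this is the swap used in Case~1 of \cref{thmMinPartial}. Your upper bounds and the $m^*$ lower bound need no change.
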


\begin{proof}
    The first case follows by \cref{lemAtMostEquivalence}. 
    When $k >\ceil{r/2}$, we construct a suitable draw matrix by simply choosing any $r$ nonzero columns from $\mathbb{F}_2^{m^*}$ and then filling the rest of the columns with the all-zeros vector.
\end{proof}

\section{Conclusion}
\label{sec:conclusion}


We introduced the LCCP, an extension of the classical CCP where  samples of $n$ labeled coupons, taken uniformly with replacement,  yield an unordered set of $k$ coupons and an unordered set of their $k$ associated distinct labels. 

For the full recovery scenario, we provided a complete characterization of the minimum required number of samples, $a_\mathrm{I}(n,k)$ and $a_\mathrm{II}(n,k)$, for the Type-I and Type-II settings, corresponding to a known and unknown label sets, respectively. By translating the subset drawings into a binary draw matrix formulation, we derived the exact threshold $n_m$ representing the maximum number of coupons whose labels can be deduced in $m$ samples. Furthermore, we developed explicit block-matrix constructions that allowed us to establish upper and lower bounds that are significantly tighter than those previously established by Katona~\cite{Katona} and Wegener~\cite{Wegener}, particularly in regimes where $n$ is much larger than $k$.

We also extended our deterministic framework to the partial recovery variation of the LCCP, where only a subset of $r$ target coupons must be identified. We categorized the partial recovery problem into three distinct regimes: Starved, Flooded, and Optimal. We demonstrated that in the optimal regime, the required number of samples natively achieves the unconstrained theoretical minimum $m^* = \ceil{\log_2(r+1)}$, barring two highly dense edge cases that strictly require one additional sample due to parity and collision constraints.


Several interesting open directions remain for future work. One challenge is discovering an exact, closed-form expression for the minimum number of samples $a(n,k)$ in the intermediate regime where $2k \leq n < \binom{k+1}{2}$, avoiding a computational search over $m$. Additionally, while our constructed bounds significantly improve upon prior literature, finding even tighter bounds or analytical approximations remains an open problem. Finally, one can investigate practical applications of these results to Chaumian threshold mixes and other anonymity protocols by extension of the model. For example, by relaxing the definition of the mapping $f$ to be a general function (not necessarily a bijection), one can address the case of one-to-many communication patterns.

\begin{outline}
    \section{References}
\end{outline}


\newpage
\phantom{ }
\appendix
\markboth{Appendix}{Appendix}
\noindent

To prove \cref{prop_almost}, we start by considering simpler cases.

\begin{proposition}
\label{prop_cyclic}
    \cref{prop_almost} holds when $m > w, k \leq w$, and $m$ and $w$ are coprime.
\end{proposition}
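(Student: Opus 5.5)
We need an $m\times n$ matrix with $n=\lfloor mk/w\rfloor$ columns, each of weight $w$ and all distinct, whose row weights are $k$ or $k-1$, with more than $m-w$ rows of weight exactly $k$. Since $m$ and $w$ are coprime, the plan is a cyclic ``sliding window'' construction. Index the rows by $\mathbb{Z}_m$. For $t=0,1,\dots,n-1$, let column $t$ have its ones in the $w$ consecutive rows $\{tw, tw+1,\dots,tw+w-1\}\pmod m$. Reading the columns in order, the ones are laid down contiguously: column $t$ occupies positions $tw,\dots,tw+w-1$ of the infinite sequence $0,1,2,\dots$, reduced mod $m$. Hence the $n$ columns together fill the first $nw$ positions of a cyclic sweep through the rows.

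\textbf{Row weights.} Row $i$ receives one $1$ for each position $p<nw$ with $p\equiv i\pmod m$. Write $nw=qm+\rho$ with $0\le\rho<m$. Then rows $0,\dots,\rho-1$ get weight $q+1$ and the remaining rows get weight $q$. Since $n=\lfloor mk/w\rfloor$, we have $mk-w<nw\le mk$, so $q\in\{k-1,k\}$. If $q=k$, then $\rho=0$ and every row has weight $k$. Otherwise $q=k-1$, and exactly $\rho=nw-(k-1)m>m-w$ rows have weight $k$, the rest having weight $k-1$. This gives almost regularity and the required count of more than $m-w$ heavy rows. One caveat: within a single column the $w$ positions must hit distinct rows, which holds because $w<m$, so each column has weight exactly $w$.

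\textbf{Distinctness.} The support of column $t$ is the cyclic interval beginning at $tw \bmod m$. Since $w<m$, two such intervals of length $w$ coincide only if their starting points agree mod $m$. So columns $t\ne t'$ coincide only if $tw\equiv t'w\pmod m$. Coprimality of $m$ and $w$ then forces $t\equiv t'\pmod m$. It therefore suffices to check $n\le m$, which is where the hypothesis $k\le w$ enters: $n\le mk/w\le m$, so the indices $0,\dots,n-1$ are distinct mod $m$.

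\textbf{Remaining hypothesis and main obstacle.} The condition $k\le\binom{m-1}{w-1}$ from \cref{prop_almost} is automatically compatible here, since $k\le w\le\binom{m-1}{w-1}$ for $1\le w<m$, so nothing further is needed. The main point needing care is the wraparound bookkeeping: I must confirm that the contiguous-sweep description gives exactly the stated row counts, with the ``heavy'' rows forming a cyclic interval of length $\rho$. I also need to handle the edge case $w=1$ (coprime to every $m$), where the columns are distinct unit vectors and $n=mk\le m$ forces $k=1$. Both checks are routine.
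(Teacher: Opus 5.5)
Your proposal is correct and follows essentially the paper's construction: the columns are successive shifts by $w$ of a run of $w$ ones, chained into full sweeps of $\mathbb{Z}_m$, giving $\lfloor nw/m\rfloor\in\{k-1,k\}$ complete sweeps plus more than $m-w$ leftover rows, and distinctness comes from the coprimality of $m$ and $w$. Your explicit check that $k\le w$ forces $n\le m$, which is needed for the shifts $0,\dots,n-1$ to give distinct columns, fills in a step the paper leaves implicit.
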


\begin{proof}
    Let $v\in \mathbb{F}_2^m$ be the binary $m$-tuple whose first $w$ entries are $1$, and for $i \in \{0,\dots,m-1 \}$, let $v^{(i)}$ be the tuple obtained via $iw$ cyclic shifts of $v$ (e.g., $v^{(0)} = v$). Since $m$ and $w$ are coprime, the map $i \mapsto iw \pmod{m}$ is a bijection, and so $\{v^{(i)}\}_{i=0}^{m-1}$ contains all cyclically shifted versions of $v$; the claim is that $M$ can be constructed using $v^{(0)},\dots,v^{(n-1)}$ as its columns.

    Each row index can be viewed as a residue class modulo $m$, and each $v^{(i)}$ can be viewed as a consecutive sequence of residue classes corresponding to the rows where $v^{(i)}$ has a $1$. In particular, $v^{(i)}$ has $1$'s in positions $(i-1)w,\dots,iw-1 \pmod{m}$; more importantly, each $v^{(i)}$ is a "continuation" of the previous tuple, e.g., $v^{(i-1)}$ has ones in positions $(i-2)w,\dots,(i-1)w-1 \pmod{m}$. This allows us to "chain" the $v^{(i)}$'s together to form sequences of complete residue systems (i.e., $\{0,1,\dots,m-1  \pmod{m}\}$), where each complete residue system represents a $1$ in every row.
    With $n = \floor{\frac{mk}{w}}$, the number of complete residue systems formed by $v^{(0)},\dots,v^{(n-1)}$ is $\floor{\frac{nw}{m}}$, which is either $k-1$ or $k$.
    Indeed, if $n = \floor{\frac{mk}{w}}$, we have $nw \leq mk$ and $(n+1)w> mk \Rightarrow mk - nw < w$, and so:

    \begin{equation*}
        \floor{\frac{nw}{m}} = \floor{\frac{mk}{m} - \frac{mk-nw}{m}}\geq \floor{k-\frac{w}{m}}= k-1.
    \end{equation*}

    This tells us that there are $k-1$ complete residue systems and a remainder of $r = nw-(k-1)m = nw-km+m >m-w$ consecutive residue classes. Therefore, $M = \begin{bmatrix}
        v^{(0)} & \dots & v^{(n-1)}
    \end{bmatrix} \in \mathbb{F}_2^{m\times n}$ will have $r > m-w$ rows of weight $k$ while the rest have weight $k-1$ as desired.

\end{proof}

If $m$ and $w$ are not coprime, the proposition can be slightly generalized as follows:
\begin{proposition}
\label{prop_cyclic_2}
    \cref{prop_almost} holds when $m > w$ and $k \leq \frac{w}{g}$, where $g = \gcd(m,w)$.
\end{proposition}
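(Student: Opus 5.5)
We reduce to \cref{prop_cyclic} by factoring out the common divisor. Write $m = g m'$ and $w = g w'$, where $g = \gcd(m,w)$. Then $\gcd(m',w') = 1$, and $m > w$ gives $m' > w'$. The hypothesis $k \le w/g = w'$ is exactly the condition needed to apply \cref{prop_cyclic} with parameters $(m',w',k)$. (When $w' = 1$ we have $k=1$, and the construction below is simply a set of disjoint weight-$w$ columns.)

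The plan is to ``blow up'' a cyclic construction on $m'$ rows. We use the cyclic shifts of the vector $v$ whose first $w$ entries are $1$, shifted by multiples of $w$ modulo $m$. These shifts take only $m' = m/g$ distinct values, namely the blocks starting at positions $0, w, 2w, \dots$. To get more distinct columns, we also allow a shift by one position. Concretely, let $u^{(i,j)}$ be the window of $w$ consecutive residues starting at $iw + j \pmod m$, for $0\le j<g$ and $0 \le i < m'$. These are $m$ distinct windows because all starting points are distinct. Order them lexicographically: first $j=0$ with $i=0,\dots,m'-1$, then $j=1$, and so on. Then take the first $n=\lfloor mk/w\rfloor = \lfloor m'k/w'\rfloor$ of them as the columns of $M$.

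We need $n \le m$ distinct windows. This holds because $k \le w'$ gives $n \le m'$. So in fact only the $j=0$ chain is used, and the problem reduces to counting row weights. The $j=0$ chain $u^{(0,0)},\dots,u^{(m'-1,0)}$ concatenates into a walk around $\mathbb{Z}_m$ that wraps $w'$ times, and consecutive windows are contiguous. Hence the first $n$ windows cover a contiguous stretch of length $nw$ starting at $0$. As in \cref{prop_cyclic}, this stretch consists of $k-1$ full laps plus a remainder of $nw-(k-1)m > m-w$ consecutive residues, using $mk - nw < w$. Therefore every row has weight $k$ or $k-1$, every column has weight $w$, and more than $m-w$ rows have weight $k$. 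We then check that the weight bound $k\le\binom{m-1}{w-1}$ is consistent.

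The main obstacle is distinctness. Windows $u^{(i,0)}$ and $u^{(i',0)}$ start at $iw$ and $i'w \bmod m$, and these coincide iff $m' \mid (i-i')$. Since $i,i' < n \le m'$, this forces $i = i'$, provided $n \le m'$. The first thing to verify carefully is therefore the inequality $\lfloor m'k/w'\rfloor \le m'$, which follows from $k\le w'$. It is also worth checking the degenerate case $n=m'$, i.e.\ $k=w'$. In that case all rows have weight exactly $k$, which is acceptable because the requirement of more than $m-w$ rows of weight $k$ is still satisfied.
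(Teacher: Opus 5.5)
Your proof is correct. The matrix you build is the same as the paper's, but you verify its properties by a different route.

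The paper applies \cref{prop_cyclic} to $(m',w',k)$ and replaces every entry of the resulting $m'\times n$ matrix by a constant $g$-tuple. Identify the $t$-th copy of row $x$ with the residue $gx+t\in\mathbb{Z}_m$. Then the window of $w'$ residues starting at $iw'$ in $\mathbb{Z}_{m'}$ becomes exactly your window of $w$ residues starting at $iw$ in $\mathbb{Z}_m$.

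The paper treats \cref{prop_cyclic} as a black box and only notes three facts about the blow-up. It is injective on columns, it multiplies column weights by $g$, and it replicates each row $g$ times. Hence the number of weight-$k$ rows exceeds $g(m'-w')=m-w$.

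You instead rerun the contiguous-walk count directly on $\mathbb{Z}_m$. You get distinctness from $n\le m'$, via $m\mid (i-i')w \iff m'\mid(i-i')$, rather than from coprimality of $m$ and $w$. This shows that coprimality in \cref{prop_cyclic} is only needed to keep the shifts distinct, so your single argument covers both propositions at once. The paper's route is shorter given \cref{prop_cyclic}. You also correctly observe that under $k\le w'$ one has $nw=mk$ exactly when $k=w'$, which is the only case where all rows have weight $k$.

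Some cosmetic points:
\begin{itemize}
\item The windows $u^{(i,j)}$ with $j\ge 1$ are never used, since $n\le m'$, and can be dropped.
\item The opening ``reduce to \cref{prop_cyclic}'' and ``blow up'' framing does not describe what you actually do.
\item Distinct starting points give distinct windows because $0<w<m$; this is worth stating explicitly.
\item The bound $k\le\binom{m-1}{w-1}$ is a hypothesis of \cref{prop_almost}, not something to check.
\end{itemize}
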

\begin{proof}
    Let $m' = \frac{m}{g}$ and $w' = \frac{w}{g}$, which means that $\gcd(m',w') = 1$. With $k \leq w'$, then by \cref{prop_cyclic} there exists $M' \in \mathbb{F}_2^{m'\times n}$ whose columns are distinct and have weight $w'$, and more than $m'-w'$ rows have weight $k$ while the rest have weight $k-1$. Then, by replacing each $1$ in $M'$ with an all-ones $g$-tuple and each $0$ with an all-zeros $g$-tuple, we obtain $M \in \mathbb{F}_2^{m\times n}$ with the desired properties.
\end{proof}

We can finally prove the main proposition:

\begin{proof}
    Denote the set of weight-$w$ tuples as $V_w$. First, observe that if $k = \binom{m-1}{w-1}$, then $n = \binom{m}{w}$, which means we can choose the entirety of $V_w$ as columns for $M$, so let us assume that $k < \binom{m-1}{w-1}$.

    Consider the group action of cyclic shifts on $V_w$, where $V_w$ can be partitioned into (possibly unequally-sized) orbits invariant under cyclic shifts with order dividing $m$. Each orbit will have an equal number of ones per row (via symmetry), where the largest orbit will have $w$ ones per row. One such largest orbit is the orbit of $v \in V_w$, where the first $w$ entries of $v$ are $1$.

    The sum of the number of ones per row across all orbits is $\binom{m-1}{w-1}$, and since $k < \binom{m-1}{w-1}$, then it is possible to obtain at least $k$ ones per row through insertion of orbits into $M$. Moreover, since each orbit has at most $w$ ones per row, then it is possible to obtain at least $k-w$ but no more than $k$ ones per row through insertion of orbits except that of $v$. We do exactly this; letting $n'$ be the current number of columns and $k'$ be the current number of $1$'s per row in $M$, we have $k - k' \leq w$ and $n' w = mk' \Rightarrow n' = \frac{mk'}{w}$.

    The last step is to insert a subset of the orbit of $v$ with $\floor{\frac{m(k-k')}{w}} $ columns and $k-k'$ ones per row so that $M$ will have $n' +\floor{\frac{m(k-k')}{w}} = \frac{mk'}{w}  + \floor{\frac{m(k-k')}{w}}=\floor{\frac{mk}{w}}$ columns and $k$ ones in more than $m-w$ of the rows (and $k-1$ ones in each of the rest).

    Letting $v^{(i)}$ for $i \in \mathbb{Z}$ be the tuple obtained via $i$ cyclic shifts of $v$ and $g = \gcd(m,w)$, observe that the orbit of $v$ can be further partitioned into $m/g$ smaller "suborbits" of the form $V'_i = \{v^{(i + jg)}: j\in \mathbb{Z}^+ \}$ for $i \in \{0,\dots,g-1\}$ (if $g = 1$, then nothing happens), where each $V_i'$ has cardinality $g$ and exactly $w/g$ ones per row.

    If $k-k'$ is a multiple of $w/g$ we can simply insert a whole number of suborbits into $M$, so suppose that $k-k' = qw/g + r$ for some $r < w/g, q< g-1$. Barring $V_0'$, we can use any $q$ suborbits and form the remaining part of $M$ with $V_0'$ using \cref{prop_cyclic_2}. Indeed, $M$ will have a total of $n' + \frac{m}{g}q + \floor{\frac{rm/g}{w/g}} = n' + \floor{\frac{m/g \cdot (qw/g+r)}{w/g}}= n' + \floor{\frac{m(k-k')}{w}} = n$ columns of weight $w$, and will have rows where more than $m-w$ of them have weight $k$, and the rest have weight $k-1$.
\end{proof}

\begin{algorithm}
\caption{Construction of maximal LCCP draw matrix given number of samples $m$ and row weight $k$}
\label{alg:maximal_lccp}
\SetKwInOut{Input}{Input}
\SetKwInOut{Output}{Output}

\Input{Integers $m$ (samples) and $k$ (row weight).}
\Output{Maximal integer $n$ and binary matrix $M \in \mathbb{F}_2^{m \times n}$.}

\tcp{Step 1: The Knapsack Calculation (Determine max n)}
Initialize $n \gets 0$, Budget $B \gets mk$, $w \gets 1$\;
Initialize array $A$ where $A_w$ stores count of columns of weight $w$\;
\While{$B \ge w$}{
    $MaxCols \gets \binom{m}{w}$\;
    $Cost \gets w \times MaxCols$\;
    \eIf{$B \ge Cost$}{
        $A_w \gets MaxCols$\;
        $n \gets n + MaxCols$\;
        $B \gets B - Cost$\;
        $w \gets w + 1$\;
    }{
        $A_w \gets \lfloor B / w \rfloor$\;
        $n \gets n + A_w$\;
        $r \gets w - 1$ \tcp*[r]{Last full weight class}
        \textbf{break}\;
    }
}
$w_{part} \gets r + 1$ \tcp*[r]{Partial weight class}

\tcp{Step 2: Construct Full Layers}
$M \gets$ empty matrix with $m$ rows\;
\For{$i \gets 1$ \KwTo $r$}{
    Generate all $\binom{m}{i}$ vectors of weight $i$\;
    Append vectors to $M$ as columns\;
}

\tcp{Step 3: Construct Partial Layer (Almost Regular)}
Generate base vector $v$ of length $m$ with $1$s in first $w_{part}$ indices\;
Generate cyclic shifts $v^{(0)}, v^{(1)}, \dots$\;
Select first $A_{w_{part}}$ distinct vectors (prioritizing largest orbits if $\gcd(m, w_{part}) > 1$)\;
Append vectors to $M$\;

\tcp{Step 4: Deficiency Correction}
$D \gets \{ \text{indices of rows with weight } k-1 \}$\;
$d \gets |D|$\;
\If{$d > 0$}{
    Find column $c$ in $M$ with weight $r+2-d$ having $0$s in all rows in $D$\;
    \For{$row \in D$}{
        $M[row, c] \gets 1$ \tcp*[r]{Flip 0 to 1}
    }
}

\KwRet $n, M$\;
\end{algorithm}

\end{document}